\newtheorem{Lemma}{Lemma}
\newtheorem{Theorem}{Theorem}
\newtheorem{Corollary}{Corollary}
\newtheorem{Proposition}{Proposition}
\newtheorem{Definition}{Definition}
\newtheorem{Example}{Example}
\theoremstyle{definition}
\numberwithin{equation}{section}
\definecolor{DarkGreen}{rgb}{0,0.5,0.1}
\newcommand\soutD{\bgroup\markoverwith
{\textcolor{DarkGreen}{\rule[.5ex]{2pt}{1pt}}}\ULon}
\newcommand\soutP{\bgroup\markoverwith
{\textcolor{blue}{\rule[.5ex]{2pt}{1pt}}}\ULon}
\newcommand{\Hm}[1]{\leavevmode{\marginpar{\tiny%
$\hbox to 0mm{\hspace*{-0.5mm}$\leftarrow$\hss}%
\vcenter{\vrule depth 0.1mm height 0.1mm width \the\marginparwidth}%
\hbox to
0mm{\hss$\rightarrow$\hspace*{-0.5mm}}$\\\relax\raggedright #1}}}
\begin{document}
%
%-------%
% TITLE %
%-------%
%------------------------------------------%
%------------------------------------------%
\title[Uncertainty relations for noncommutative algebra]{Uncertainty relations for a non-canonical phase-space noncommutative algebra}

\author[Dias and Prata]{Nuno C. Dias and Jo\~{a}o N. Prata}

\address{Grupo de F\'{\i}sica Matem\'{a}tica, Faculdade de Ci\^{e}ncias da Universidade de Lisboa, Campo Grande, Edif\'{\i}cio C6
1749-016 Lisboa, Portugal, and Escola Superior Náutica Infante D. Henrique, Av. Engenheiro Bonneville Franco, 2770-058 Pa\c{c}o de Arcos, Portugal} \email{ncdias@meo.pt}
\email{jnprata@FC.UL.PT}

\date{\today}
%\date{May 26, 2015}

\begin{abstract}
We consider a non-canonical phase-space deformation of the Heisenberg-Weyl algebra that was recently introduced in the context of quantum cosmology. We prove the existence of minimal uncertainties for all pairs of non-commuting variables. We also show that the states which minimize each uncertainty inequality are ground states of certain positive operators. The algebra is shown to be stable and to violate the usual Heisenberg-Pauli-Weyl inequality for position and momentum. The techniques used are potentially interesting in the context of time-frequency analysis.
\end{abstract}

\maketitle

\section{Introduction}

Noncommutative geometry (NCG) is considered to be a fundamental feature of space-time at the Planck scale. Indeed, configuration space noncommutativity arises when one considers the low energy effective theory of a D-brane in the background of a Neveu-Schwartz B field \cite{Seiberg}. This fact has triggered the investigation of what qualitative and quantitative effects may appear when one adds extra (phase-space or configuration space) noncommutativity to the traditional position-momentum ones. Various aspects of such theories have been investigated in the context of quantum gravity and string theory \cite{Bertolami1,Connes1,Madore,Seiberg}, quantum field theory \cite{Carroll,Connes2,Douglas,Szabo}, non-relativistic quantum mechanics \cite{Bertolami2,Bertolami3,Bastos1,Bastos9,Bolonek,Dey1,Dey2,Dey3,Dey4,Delduc,Kochan,Gamboa,Kosinski,Matoiu,Nair,Lein}, quantum Hall effect \cite{Bellissard,Duval,Horvathy}, condensed matter \cite{Isidro}, and quantum cosmology \cite{Bastos5,Bastos6,Bastos7,Obregon,Farhoudi,Nicolini}.

The additional noncommutativity is regarded as a deformation of the Poincar\'e or of the Heisenberg-Weyl (HW) algebra. Here we shall consider the latter case. The deformation may be of various natures. It may be canonical (in the sense that the commutators are equal to c-numbers) or non-canonical. It may affect only the configuration sector, only the momentum sector, or both. The former cases will henceforth be denoted as one-sector deformations, whereas the latter case is dubbed a phase-space deformation. The type of deformation may be dictated by compelling physical arguments (such as configuration space noncommutativity in the case of string theory) or by mathematical arguments related to the consistency of the theory (e.g. stability of the algebra \cite{Faddeev,Vilela}). One-sector deformations break the symmetry between position and momentum found in ordinary quantum and classical mechanics. On the other hand, phase-space noncommutativity has some unexpected physical implications in the context of quantum mechanics, quantum cosmology and black hole (BH) physics.

Here we wish to investigate further a non-canonical phase-space deformation of the Heisenberg-Weyl algebra introduced in \cite{Bastos7,Bastos8}. Our interest in this algebra is twofold: there are physical and mathematical motivations. The physical motivation comes from the fact that phase-space noncommutativity seems to be a necessary ingredient for the thermodynamical stability of BHs \cite{Bastos6,Nicolini}, and may also contribute to the regularization of singularities \cite{Bastos7,Bastos8,Nicolini}. As a rule of thumb, two noncommuting variables satisfy uncertainty principles which preclude a sharp simultaneous localization of both variables. It is this delocalization which regularizes the BH singularity. A canonical phase-space noncommutative algebra is a step in the direction of some smoothing but not complete regularization of the singularity \cite{Bastos5,Bastos6}. A full-fledged regularization was accomplished with our non-canonical phase-space noncommutative algebra \cite{Bastos7,Bastos8}.

Here is a brief sketch of how this was achieved. In \cite{Bastos5,Bastos6,Obregon}, the Heisenberg-Weyl algebra
\begin{equation}
\left[ \widehat{q}_1,\widehat{p}_1 \right]=\left[ \widehat{q}_2,\widehat{p}_2 \right]=i,
\label{eqCorrections2}
\end{equation}
is replaced by the following canonical deformation:
\begin{equation}
\left[ \widehat{q}_1,\widehat{p}_1 \right]=\left[ \widehat{q}_2,\widehat{p}_2 \right]=i, \qquad \left[ \widehat{q}_1,\widehat{q}_2 \right]= i \theta, \qquad \left[ \widehat{p}_1,\widehat{p}_2 \right]=  i \eta,
\label{eqCorrections2}
\end{equation}
where all the remaining commutators vanish and $\theta, \eta $ are some constants which are assumed to be small ($\theta, \eta <<1$). The Wheeler-De Witt equation (WDW) for the Kantowski-Sachs black hole \cite{Kantowski} is given by (after a particular choice of operator order):
\begin{equation}
\left(\widehat{p}_1^2-\widehat{p}_2^2 -48 e^{-2 \sqrt{3}\widehat{q}_2}\right) \psi=0.
\label{eqCorrections3}
\end{equation}
where the configuration variables $q_1,q_2$ are the scale factors of the KS metric.

With the usual differential representation for the Heisenberg-Weyl algebra
\begin{equation}
\widehat{q}_1=x_1, \qquad \widehat{q}_2=x_2 , \qquad \widehat{p}_1= -i \frac{\partial}{\partial x_1} , \qquad \widehat{p}_2= -i \frac{\partial}{\partial x_2},
\label{eqCorrections4}
\end{equation}
the WDW equation (\ref{eqCorrections3}) reads:
\begin{equation}
\left(\frac{\partial^2}{\partial x_2^2} - \frac{\partial^2}{\partial x_1^2} -48 e^{-2 \sqrt{3} x_2} \right) \psi(x_1,x_2)=0,
\label{eqCorrections5}
\end{equation}
with solutions of the form
\begin{equation}
\psi(x_1,x_2) =\psi_{\nu}^{\pm} (x_1,x_2)=e^{\pm i \nu \sqrt{3}x_1}K_{i \nu}\left(4e^{- \sqrt{3}x_2}\right),
\label{eqCorrections6}
\end{equation}
where $K_{i \nu}$ are modified Bessel functions. These solutions are highly oscillatory and not square-integrable. This poses severe interpretational problems. This is a familiar feature in this type of mini superspace models. One faces the problem of determining a "time" variable and a measure, such that on constant "time" hypersurfaces, the wave-function is normalizable and the square of its modulus is a bona fide probability density.

On the other hand, with the deformation (\ref{eqCorrections2}) and its differential representation
\begin{equation}
\left\{
\begin{array}{l}
\widehat{q}_1 = \lambda x_1 - \frac{i \theta}{2 \lambda} \frac{\partial}{\partial x_2}\\
\\
\widehat{q}_2 = \lambda x_2 + \frac{i \theta}{2 \lambda} \frac{\partial}{\partial x_1}\\
\\
\widehat{p}_1 = -i \mu \frac{\partial}{\partial x_1} - \frac{\eta}{2 \mu} x_2 \\
\\
\widehat{p}_2 = -i \mu \frac{\partial}{\partial x_2} + \frac{\eta}{2 \mu} x_1
\end{array}
\right.
\label{eqCorrections7}
\end{equation}
where $\mu, \lambda$ are dimensionless constants such that $2 \lambda \mu= 1+ \sqrt{1- \theta \eta}$, the WDW equation becomes:
\begin{equation}
\begin{array}{c}
\left\{ \left(i \mu \frac{\partial}{\partial x_1} + \frac{\eta x_2}{2 \mu} \right)^2 -\left(i \mu \frac{\partial}{\partial x_2} - \frac{\eta x_1}{2 \mu} \right)^2\right.\\
 \\
 \left. -48 \exp \left[ -2 \sqrt{3} \left(\lambda x_2 + \frac{i \theta}{2 \lambda}\frac{\partial}{\partial x_1}   \right) \right]\right\} \psi(x_1,x_2)=0
\end{array}
\label{eqCorrections8}
\end{equation}
The solutions of this equation are of the form
 \begin{equation}
 \psi_a(x_1,x_2)= \mathcal{R}_a (x_2)\exp \left[ \frac{i x_1}{\mu} \left(a-\frac{\eta}{2 \mu} x_2 \right) \right],
\label{eqCorrections8.A}
\end{equation}
where $a$ is an arbitrary real constant and $\phi_a(x)=\mathcal{R}_a \left(\mu x + \frac{\theta a}{2 \lambda} \right)$ satisfies the time-independent Schr\"odinger equation $- \phi^{\prime \prime} (x) + V(x) \phi (x)=0$, with potential
\begin{equation}
V(x)=48 e^{-2 \sqrt{3} x} -(\eta x-c)^2, \hspace{1 cm} c \in \mathbb{R}.
\label{eqCorrections8.B}
\end{equation}
The solutions are still not square integrable. However, one can observe a distinct dampening of the amplitude of the oscillations. It is also worth noting that the noncommutativity in the momentum sector ($\eta$) leads to the existence of a stable minimum of the potential and consequently to the thermodynamic stability of the black hole \cite{Bastos6}.

In \cite{Bastos7,Bastos8} we suggested a non-canonical noncommutative deformation of the Heisenberg-Weyl algebra (see section 3) which leads to another WDW equation. After a separation of variables akin to (\ref{eqCorrections8.A}), one obtains the time-independent Schr\"odinger equation, this time with potential
\begin{equation}
\begin{array}{c}
V(x)=-(\eta x-a)^2-F^2\mu^4x^4-2 F \mu^2 (\eta x-a)x^2+\\
\\
+48 \exp \left(-2 \sqrt{3} x-2 \sqrt{3} \mu^2Ex^2+ \frac{\sqrt{3} \theta a}{\mu \lambda} \right),
\end{array}
\label{eqCorrections9}
\end{equation}
where $F$ and $E$ are certain constants related to the algebra.

As previously, this potential also exhibits a stable minimum \cite{Bastos8}. But, more importantly, the asymptotically dominant term $V(x) \sim - F^2\mu^4 x^4$, for $z \to \infty$, leads to square integrable solutions of the WDW equation. This then permits the evaluation, as in ordinary quantum mechanics, of probabilities according to Born's rule. If we compute the probability of finding the scale factors near the singularity of the Kantowski-Sachs black hole, we conclude that the probability vanishes (see \cite{Bastos7}). Thus, in this case, the singularity is not "erased" by the existence of some minimum length as suggested by various authors \cite{Dey1,Dey2,Dey3,Dey4,Fityo,Kempf,Kober}. Rather, the singularity is still there, but the probability of reaching it is zero.

\vspace{0.5 cm}
On the other hand, the mathematical motivations are the following. Our algebra \cite{Bastos7,Bastos8} seems to be a minimal departure from the canonical phase-space noncommutative algebra in the sense that:

\vspace{0.3 cm}
\begin{itemize}
\item It is a parsimonious deformation as it only introduces one additional deformation parameter accounting for the non-canonical nature of the algebra.

\item It is isomorphic with the usual Heisenberg-Weyl algebra.
\end{itemize}

\vspace{0.3 cm}
The latter property ensures the stability of our algebra (see below). However, in spite of its simple nature, it displays several interesting features:

\vspace{0.3 cm}
\begin{enumerate}
 \item All pairs of noncommuting variables satisfy uncertainty relations. We provide a means of obtaining the sharp constants and the corresponding minimizers, which are solutions of certain partial differential equations.

\item Contrary to what happens in ordinary quantum mechanics, it seems that there are no quantum states saturating more than one of the uncertainty relations simultaneously (this seems to be a common feature of noncommutative extensions of the Heisenberg-Weyl algebra \cite{Bolonek,Kosinski})

\item The usual position-momentum uncertainty relation may be violated.

\item There are no minimal length and momentum.
\end{enumerate}

\vspace{0.3 cm}
Items (1) and (2) above will help clarify the following important issue. The noncanonical algebra was quite successful at regularizing the singularity of a Kantowski-Sachs black hole \cite{Bastos7,Bastos8}. It would therefore be important to determine whether there are states of minimal uncertainty. We will prove that there are states which minimize individual uncertainty relations, but we will argue that it does not seem possible to find states which saturate {\it all} uncertainty relations simultaneously.

The techniques used to prove that there are minimizers for the various uncertainty relations and to obtain certain equations satisfied by these minimizers come from variational calculus \cite{Evans,Jahn,Jost} and compact embedding theorems for a class of functional spaces called {\it modulation spaces} \cite{Feichtinger,Grochenig}. The embedding theorems are due to Boggiatto and Toft \cite{Boggiatto}, Pfeuffer and Toft \cite{Pfeuffer} and they can also be viewed as isomorphisms of functional spaces via certain Toeplitz localization operators \cite{GrochenigToft1,GrochenigToft2}.

The uncertainty principles that we obtain can be related to the uncertainty principles of Cowling and Price \cite{Cowling}, in the sense that one considers weights other than the ones leading to the covariance of position and momentum as in Heisenberg's uncertainty principle. However, we go one step further in the sense that the weights mix position $(x)$ and momentum $(\xi= -i \partial_x)$:
\begin{equation}
\| u(x,-i \partial_x) f\|_{L^2}^2 + \|v(x,- i\partial_x)f\|_{L^2 }^2 \geq C \|f \|_{L^2}^2
\label{eqIntroduction1}
\end{equation}
for some constant $C>0$. Moreover, we show that there are minimizers. This is in contrast with \cite{Cowling}, where only the existence of an {\it infimum} is proved.

Moreover these uncertainty principles can also be related to continuous embedding theorems of functional spaces in the spirit of \cite{Galperin1,Galperin2,GrochenigStudia}.

In this work we shall consider units $\hbar=1$.

\section*{Notation}

The variable $x=(x_1, \cdots, x_d)$ denotes a generic point in $\mathbb{R}^d$ representing a position variable, whereas $\xi =(\xi_1, \cdots, \xi_d) \in \mathbb{R}^d$ denotes the momentum. The usual scalar product in $\mathbb{R}^d$ is denoted by $u \cdot v= \sum_{i=1}^d u_i v_i$ or $u \cdot v$ for $u,v \in \mathbb{R}^d$ and the corresponding norm is $|u|= \left(\sum_{i=1}^d u_i^2 \right)^{\frac{1}{2}}$.

$\mathcal{S} (\mathbb{R}^d)$ is the Schwartz space of test functions and its dual $\mathcal{S}^{\prime} (\mathbb{R}^d)$ is the space of tempered distributions. $< \cdot, \cdot>$ is the distributional bracket $\mathcal{S}^{\prime} (\mathbb{R}^d) \times \mathcal{S} (\mathbb{R}^d) \to \mathbb{C}$. Given a Hilbert space $\mathcal{H}$, the inner product is denoted by $\langle \cdot, \cdot \rangle_{\mathcal{H}}$, which we assume to be linear in the first argument and anti-linear in the second. The corresponding norm is $\| f \|_{\mathcal{H}}^2= \langle f,f \rangle_{\mathcal{H}}$.

The Fourier transform of a function $f(x) \in L^2(\mathbb{R}^d)$ is denoted $\widetilde{f}(\xi)$ and is given (as a limiting process of functions in $L^1(\mathbb{R}^d) \cap L^2(\mathbb{R}^d)$) by:
\begin{equation}
\widetilde{f} (\xi) := (2 \pi )^{- d/2} \int_{\mathbb{R}^d} f(x) e^{- i x \cdot \xi} dx.
\label{eq0.1}
\end{equation}
Notice that we are using the physicists convention rather than the usual definition in harmonic analysis:
\begin{equation}
\mathcal{F} f( \omega)  := \int_{\mathbb{R}^d} f(t) e^{-2 \pi i t \cdot \omega} dt
\label{eq0.2}
\end{equation}
where $t$ is "time" and $\omega$ is "frequency".

If there is a positive constant $C>0$ such that $A \leq C B$, we write $A \lesssim B  $. If $A \lesssim B$ and $B \lesssim A$, then we shall simply write $A \asymp B$.

A generic operator acting on a Hilbert space $\mathcal{H}$ is denoted by $\widehat{A}$, its adjoint is $\widehat{A}^{\ast}$, its domain, range and kernel are $\text{Dom} (\widehat{A})$, $\text{Ran} (\widehat{A})$ and $\text{Ker} (\widehat{A})$, respectively. Its operator norm is $\|\widehat{A}\|_{op} := \text{sup}_{\|f\|_{\mathcal{H}} \leq 1} \|\widehat{A} f \|_{\mathcal{H}}$.

A sequence $(f_n)_{n \in \mathbb{N}}$ in the Hilbert space $\mathcal{H}$ converges strongly to $f \in \mathcal{H}$, if $\|f_n -f \|_{\mathcal{H}} \to 0$ as $n \to \infty$. In this case, we write $f_n \to f$. Likewise, it converges weaky, if $\langle f_n -f, g \rangle_{\mathcal{H}} \to 0$ as $n \to \infty$, for all $g \in \mathcal{H}$, and we write $f_n \rightharpoonup f$.

We denote the compact embedding of a functional space $\mathcal{B}_1$ into another functional space $\mathcal{B}_2$ by $\mathcal{B}_1 \subset \subset \mathcal{B}_2$.

\section{Remarks on the uncertainty principle}

The Heisenberg uncertainty principle is one of the cornerstones of quantum mechanics, harmonic analysis and time-frequency analysis. Loosely speaking, it states that a simultaneous measurement of the position and momentum of a particle with infinite precision is precluded. This is in sharp contrast with the laws of classical mechanics, where hindrances to the precision of simultaneous measurements of any pair of observables can only be attributed to the quality of the measuring apparatuses. For a survey of mathematical aspects of the uncertainty principle see \cite{Folland2}. Good discussions on the physical interpretation and implications of the uncertainty principle can be found in \cite{Busch1,Busch2}.

The original paper of Heisenberg \cite{Heisenberg} begins with a famous discussion of the resolution of microscopes, in which the accuracy (resolution) of an approximate position measurement is related to the disturbance of the particle's momentum. It is quite remarkable that Heisenberg never gave a precise definition of what he meant by {\it resolution} and {\it disturbance}. In most textbooks on quantum mechanics, one is introduced to the version of Kennard \cite{Kennard}, Robertson \cite{Robertson} and Weyl \cite{Weyl}, where resolution and disturbance are understood as the mean standard deviations of the position (resp. momentum) with respect to the probability measure $|f(x)|^2 dx$ (resp. $|\widetilde{f} (\xi)|^2 d \xi$) for a given wave function $f \in L^2 (\mathbb{R})$. Denoting these quantities by $\Delta_x \left(f,<x>_f \right)$ and $\Delta_{\xi} \left(f,<\xi>_f \right)$, respectively (see the precise definitions below), they were able to prove the following inequality:
\begin{equation}
\Delta_x \left(f,<x>_f \right) ~ \Delta_{\xi} \left(f,<\xi>_f \right) \geq \frac{\|f\|_{L^2 (\mathbb{R})}^2}{2}.
\label{eqCorrectionsC1}
\end{equation}
Since the mean standard deviation is interpreted as a measure of the dispersion of a probability measure relative to its mean value, the previous inequality states that $f$ and $\widetilde{f}$ cannot be both sharply localized.

The fact that one used the mean standard deviation (or equivalently the variance) as the measure of dispersion is somewhat arbitrary. Other uncertainty principles use other quantities, like for instance the entropy, as a measure of dispersion. Shannon \cite{Shannon} proved that, for a given probability measure $\mu$ with covariance matrix $Cov(\mu)$ and entropy
\begin{equation}
E(\mu)=-\int_{\mathbb{R}} \mu(x) \log \left(\mu(x) \right) dx,
\label{eqCorrectionsC2}
\end{equation}
the following inequality holds:
\begin{equation}
E(\mu) \leq \frac{1}{2} \log \left[ 2 \pi e \det \left(Cov (\mu) \right) \right].
\label{eqCorrectionsC3}
\end{equation}
Beckner \cite{Beckner}, Bialynicki-Birula and Mycielski \cite{Birula}, and Hirschman \cite{Hirschman} proved the following {\it entropic} uncertainty principle:
\begin{equation}
\log(\pi e)\leq E \left(|f|^2 \right)+ E \left (|\widetilde{f}|^2 \right).
\label{eqCorrectionsC4}
\end{equation}

There are also several different ways by which we can combine the dispersions to obtain a measure of uncertainty. The variance is related to quantities such as $\|xf\|_{L^2 (\mathbb{R})}$ and $\|\xi \widetilde{f}\|_{L^2 (\mathbb{R})}$. If these measure the dispersion of $f$ and $\widetilde{f}$ relative to the origin, then a measure of uncertainty could be the product $\|xf\|_{L^2 (\mathbb{R})} ~ \|\xi \widetilde{f}\|_{L^2 (\mathbb{R})}$, as in (\ref{eqCorrectionsC1}). But we could also express it as $\|xf\|_{L^2 (\mathbb{R})}^2 + \|\xi \widetilde{f}\|_{L^2 (\mathbb{R})}^2$. Indeed the sum could be more useful than the product in certain cases. Suppose a certain quantity represented by an observable $\widehat{A}$ has a measure of dispersion (variance, entropy, or other) given by $\Delta_A=n$, while another observable $\widehat{B}$, which does not commute with $\widehat{A}$, has a measure of dispersion $\Delta_B=\frac{1}{n^2}$. Then the product of the dispersions is given by $\Delta_A ~ \Delta_B= \frac{1}{n}$, while $\Delta_A^2 + \Delta_B^2= n^2 + \frac{1}{n^4}$. If we could control the state in such a way that $n \to \infty$, then  $\Delta_A ~ \Delta_B \to 0$, while $\Delta_A^2 + \Delta_B^2 \to \infty$. The lesson from this example is that, if the observables $\widehat{A}$ and  $\widehat{B}$ are such that it is possible to find states for which $\Delta_A \to \infty$ and $\Delta_B \to 0 $ at a different pace, then the product $\Delta_A ~\Delta_B$ may not be a good measure of the uncertainty, as it can be made arbitrarily small. We will give briefly concrete examples for this. To circumvent these difficulties Cowling and Price \cite{Cowling,Folland2} have considered uncertainty principles of the form:
\begin{equation}
\|~ |x|^a f \|_{L^p (\mathbb{R})} + \|~ |\xi|^b \widetilde{f} \|_{L^q (\mathbb{R})} \geq K \|f \|_{L^2 (\mathbb{R})},
\label{eqCorrectionsC5}
\end{equation}
which hold for all $p,q \in  \left[1, \infty \right]$, all tempered functions $f$ such that $\widetilde{f} $ is also a function, and all $a,b >0$, such that:
\begin{equation}
a > \frac{1}{2}-\frac{1}{p} \hspace{1 cm} \text{ and } \hspace{1 cm}  b > \frac{1}{2}-\frac{1}{q}.
\label{eqCorrectionsC6}
\end{equation}

So basically, there are various measures of dispersion, and several different ways of combining them to obtain a mesure of uncertainty. Some measures can be more suitable than others to obtain bounds for the variance of particular observables.  

We will now try to specify a bit more the previous ideas and give some examples which illustrate that for arbitrary noncommuting observables $\widehat A$ and $\widehat B$, not all measures of uncertainty are equivalent and, in particular, 
the traditional uncertainty inequality (\ref{eqCorrectionsC1}) may sometimes fail to reveal that there is an uncertainty in the first place.

Generally speaking, if $\widehat{A}$ and $\widehat{B}$ are two non-commuting, essentially self-adjoint operators acting on some Hilbert space $\mathcal{H}$ with inner product $\langle \cdot, \cdot \rangle_{\mathcal{H}}$ and norm $\| \cdot \|_{\mathcal{H}}$, then
\begin{equation}
\| (\widehat{A} -a \widehat{I}) f \|_{\mathcal{H}} ~ \|(\widehat{B} -b \widehat{I}) f \|_{\mathcal{H}} \geq \frac{1}{2}\left|\langle \left[\widehat{A}, \widehat{B} \right] f, f \rangle_{\mathcal{H}} \right|
\label{eq1.1}
\end{equation}
for any $a,b \in \mathbb{R}$ and $f \in \text{Dom}(\widehat{A} \widehat{B}) \cap \text{Dom}(\widehat{B} \widehat{A})$. In the previous inequality $\widehat{I}$ denotes the identity operator in $\mathcal{H}$ and $\left[\widehat{A}, \widehat{B} \right] =\widehat{A} \widehat{B} - \widehat{B} \widehat{A}$ is the commutator. Uncertainty relations of this form can also be considered for non self-adjoint operators, like the ones appearing in PT-symmetric quantum mechanics \cite{Bender,Dey1}. In this work, however, we will only deal with essentially self-adjoint operators. Moreover, we shall always assume the states to be normalized $\|f\|_{\mathcal{H}}=1$.

The equality in (\ref{eq1.1}) holds for a given $f \in \mathcal{H}$ if and only if there exists a constant $c \in \mathbb{R}$ such that
\begin{equation}
(\widehat{A} -a \widehat{I}) f = i c (\widehat{B} -b \widehat{I}) f
\label{eq1.2}
\end{equation}
The Heisenberg uncertainty principle emerges if one considers the position and momentum of a particle. In that case $\widehat{A} =\widehat{x}= \mbox{multiplication by }x$ and $\widehat{B}= \widehat{\xi} = - i  \frac{d}{dx} $ acting on $\mathcal{H} = L^2 (\mathbb{R})$.
Substituting in (\ref{eq1.1}), we obtain:
\begin{equation}
\Delta_x (f,a)\Delta_{\xi} (f,b)\ge \frac{1}{2}
\label{eq1.3}
\end{equation}
where $\Delta_x(f,a)$ and $\Delta_{\xi} (f,b)$ are the position and momentum dispersions, respectively:
\begin{equation}
\begin{array}{l}
\Delta_x (f,a)= \|(\widehat{x} -a \widehat{I}) f \|_{L^2 (\mathbb{R})} = \left(\int_{\mathbb{R}} (x-a)^2 | f (x)|^2 dx \right)^{\frac{1}{2}}\\
\\
\Delta_{\xi} (f,b)= \|(\widehat{\xi} -b \widehat{I}) f \|_{L^2 (\mathbb{R})} = \left(\int_{\mathbb{R}} (\xi-b)^2 | \widetilde{f} (\xi)|^2 d \xi . \right)^{\frac{1}{2}}
\end{array}
\label{eq1.4}
\end{equation}
As usual the dispersion becomes minimal if we set $a=<x>_f,~b=<\xi>_f$, which are the expectation values of the position and momentum in the state $f$:
\begin{equation}
\begin{array}{l}
<x>_f= \langle \widehat{x}f, f \rangle_{L^2 (\mathbb{R})}= \int_{\mathbb{R}} x | f (x)|^2 dx \\
\\
<\xi>_f= \langle \widehat{\xi}f, f \rangle_{L^2 (\mathbb{R})}= \int_{\mathbb{R}} \xi | \widetilde{f} (\xi)|^2 d \xi .
\end{array}
\label{eq1.4.1}
\end{equation}
In this case $\Delta_x (f,<x>_f)$ and $\Delta_{\xi} (f,<\xi>_f)$ are called the mean standard deviations of position and momentum.

From (\ref{eq1.2}) equality holds in (\ref{eq1.3}) if and only if $f$ is a generalized Gaussian state.

There are various instances where certain observables, other than position or momentum, may be more relevant. For example, as we shall see below, noncommutative theories may lead to more intricate composite operators of position and/or momentum. Alternatively, one may be interested in energy rather than, say, momentum. Such cases will then require more general uncertainty principles. Inequality (\ref{eq1.1}) would then be a good starting point. However, unlike the case of the Heisenberg uncertainty principle, inequality (\ref{eq1.1}) may not lead in general to a useful uncertainty principle for given noncommuting operators $\widehat{A}$ and $\widehat{B}$. Indeed:

\vspace{0.3 cm}

\begin{enumerate}
\item Even if two operators $\widehat{A}, \widehat{B}$ are noncommuting, the product of their dispersions $\Delta_A (f,a) \Delta_B (f,b)$ need not be bounded from below by a positive constant for all normalized $f$.

\vspace{0.2 cm}
\item The commutator $\left[\widehat{A}, \widehat{B} \right]$ does not necessarily provide a lower positive bound on the product of the dispersions as stated in (\ref{eq1.1}).
\end{enumerate}

These are well known facts \cite{Folland2}, but to make our presentation self-contained and motivate an alternative formulation of the uncertainty principle, we will give simple examples that support these claims.

Let us start by showing that the product of the dispersions of the observables $(\widehat{x})^n$ and $(\widehat{\xi})^m$ with $n,m \in \mathbb{N}$ may be as close to zero as we wish, as long as $n \ne m$, even though they are noncommuting. Moreover, we shall give an example of two noncommuting observables and a non-zero state, such that the right-hand side of (\ref{eq1.1}) vanishes exactly. To keep our discussion simple, we will consider only one dimensional systems $(d=1)$ in this section.

We shall require the following unitary operator. It is called the dilation operator and plays an important role in signal processing \cite{Daubechies,Grochenig}:
\begin{equation}
D_s f (x) = \frac{1}{\sqrt{|s|}} f \left( \frac{x}{s} \right)
\label{eq2.1}
\end{equation}
for $s \in \mathbb{R} \backslash \left\{0 \right\}$. The Fourier transform acts as:
\begin{equation}
\widetilde{D_s f} (\xi) = D_{\frac{1}{s}} \widetilde{f} (\xi)
\label{eq2.2}
\end{equation}
Let us now consider the observables $\widehat{A}= (\widehat{x})^n$ and $\widehat{B}= (\widehat{\xi})^m$ for $n,m \in \mathbb{N}$ and some normalized state $f_1 \in \mathcal{S} (\mathbb{R})$. We set $a=b=0$ for the moment. We thus have:
\begin{equation}
\Delta_A (f_1,0) = \left(\int_{\mathbb{R}} x^{2n} | f_1 (x)|^2 dx \right)^{\frac{1}{2}}, \hspace{1 cm} \Delta_B (f_1,0) = \left(\int_{\mathbb{R}} \xi^{2m} | \widetilde{f_1} (\xi)|^2 d \xi \right)^{\frac{1}{2}}
\label{eq2.3}
\end{equation}
and an uncertainty
\begin{equation}
\Delta_A (f_1,0) \Delta_B (f_1,0) = C_1
\label{eq2.4}
\end{equation}
for some $C_1 >0$.

Next consider the state $f_s = D_s f_1$ for some $s \ne 0$. A simple calculation reveals that:
\begin{equation}
\begin{array}{c}
\Delta_A (f_s,0)= \left(\int_{\mathbb{R}} x^{2n} | f_s (x)|^2 dx\right)^{\frac{1}{2}} = |s|^{-1/2} \left(\int_{\mathbb{R}} x^{2n} | f_1 \left( s^{-1} x\right)|^2 dx \right)^{\frac{1}{2}}=\\
\\
=|s|^{-1/2} \left(\int_{\mathbb{R}} (sy)^{2n} | f_1 \left(y\right)|^2 |s|dy \right)^{\frac{1}{2}}= |s|^n \Delta_A (f_1,0)
\end{array}
\label{eq2.5}
\end{equation}
A similar calculation leads to
\begin{equation}
\Delta_B (f_s,0) = |s|^{-m} \Delta_B (f_1,0)
\label{eq2.6}
\end{equation}
Altogether, we obtain:
\begin{equation}
\Delta_A (f_s,0) \Delta_B (f_s,0) = |s|^{n-m} C_1
\label{eq2.7}
\end{equation}
If $n=m$ then the uncertainty is left unchanged. However, if, say, $m >n$, then as $s \to + \infty$, the uncertainty can be made arbitrarily small. If this holds for $a=b=0$, then it must also hold for the mean-standard deviations:
\begin{equation}
|s|^{n-m} C_1=\Delta_A (f_s,0) \Delta_B (f_s,0) \geq \Delta_A (f_s,<A>_{f_s}) \Delta_B (f_s,<B>_{f_s})  \to 0,
\label{eq2.7.A}
\end{equation}
as $s \to + \infty$.

This proves that noncommutativity may not necessarily pose an obstacle to arbitrarily small products of dispersions.

In particular, for instance the kinetic energy $(\widehat{\xi})^2$ and the position $\widehat{x}$ or any potential energy of the form $(\widehat{x})^n$ with $n \in \mathbb{N} \backslash \left\{2 \right\}$ can have arbitrarily small product of dispersions.

In contrast with this, the observables $(\widehat{x})^n$ and $(\widehat{\xi})^n$ do have a positive minimum product uncertainty \cite{Cowling,Hirschman}. The case $n=m=1$ already reveals that this is so. However, as claimed in 2), the right-hand side of (\ref{eq1.1}) may not pose any positive lower bound on the product uncertainty. We will now give an explicit example which illustrates this fact. Indeed, let $n=m=2k$ for $k \in \mathbb{N}$ and consider the Gaussian state
\begin{equation}
f (x) = \left( \frac{2a}{\pi}\right)^{\frac{1}{4}} e^{- a x^2}, \hspace{1 cm} a>0
\label{eq2.8}
\end{equation}
A simple calculation then shows that
\begin{equation}
\langle \left[(\widehat x)^{2k}, (\widehat \xi)^{2k} \right] f , f \rangle_{L^2 (\mathbb{R})} = (-i )^{2k} \sqrt{\frac{2a}{\pi}} \int_{\mathbb{R}} e^{-a x^2} \left( x^{2k} \frac{d^{2k}}{dx^{2k}} - \frac{d^{2k}}{dx^{2k}} x^{2k} \right) e^{-a x^2} dx
\label{eq2.9}
\end{equation}
If we integrate the second term by parts $2k$ times, we conclude that the previous expression vanishes identically. And so, the right-hand side of (\ref{eq1.1}) does not constitute the minimum of the uncertainty in this case.

This then brings us to questions of interpretation. The state is still represented by some $f \in L^2 (\mathbb{R})$. And it is still a fact of life that $f$ and its Fourier transform $\widetilde{f}$ cannot be both sharply localized. Notice that there is nevertheless no contradiction with our analysis. From eqs.(\ref{eq2.5},\ref{eq2.6}) the dispersions are such that for instance $\Delta_A (f_s,0)$ goes to zero, and $\Delta_B (f_s,0)$ diverges as $|s| \to + \infty$, while their product becomes arbitrarily small. But this does not mean that there can be an infinite precision in the simultaneous measurement of $\widehat{A}$ and $\widehat{B}$. On the contrary, one of the two is measured with growing precision, while the other becomes coarser. So the uncertainty is still there. In particular, if the two dispersions were to become simultaneously infinitesimal, then that would imply the existence (in the limit) of a common eigenstate (albeit in a distributional sense). This is manifestly impossible.

The conclusion to be drawn from this analysis is, as we argued before, that the product of dispersions may not be a good measure of uncertainty. Also the measure of dispersion itself ($\Delta_A (f,a)$) has a drawback. Except for the linear case $\widehat{A}= \alpha \widehat{x} + \beta \widehat{\xi}$, the dispersion $\Delta_A (f,a)$ of an observable $\widehat{A} (\widehat{x},\widehat{\xi})$ is not invariant under phase-space translations $(x,\xi) \mapsto (x+x_0,\xi+\xi_0)$. To circumvent this difficulty one considers in harmonic analysis \cite{Folland2} the translation invariant dispersions $\widehat{A}(\widehat{x},\widehat{\xi})\mapsto  \widehat{A}(\widehat{x}-a,\widehat{\xi}-b)$. So, for example, if $\widehat{A}=\widehat{x}^2$, we consider the measure of dispersion:
\begin{equation}
\| ~(x-a)^2 f \|_{L^2 (\mathbb{R})}= \left(\int_{\mathbb{R}} (x-a)^4 |f(x)|^2 dx \right)^{1/2},
\label{eq2.9.A}
\end{equation}
instead of
\begin{equation}
\Delta_A (f,a^2) =\| ~(x^2-a^2) f \|_{L^2 (\mathbb{R})} = \left(\int_{\mathbb{R}} (x^2-a^2)^2 |f(x)|^2 dx \right)^{1/2}.
\label{eq2.9.B}
\end{equation}
Moreover, since the measure in (\ref{eq2.9.A}) is translation invariant, we may set, for all practical purposes, $a=0$.

All things considered, we shall consider the following measure of uncertainty:
\begin{equation}
\|\widehat{A}  f \|_{L^2 (\mathbb{R})}^2 + \|\widehat{B} f \|_{L^2 (\mathbb{R})}^2
\label{eq2.10}
\end{equation}
From the trivial inequality $\alpha^2 + \beta^2 \ge 2 \alpha \beta$, we obtain:
\begin{equation}
\|\widehat{A}  f \|_{L^2 (\mathbb{R})}^2 + \|\widehat{B}  f \|_{L^2 (\mathbb{R})}^2
 \geq 2 \|\widehat{A}  f\|_{L^2 (\mathbb{R})} ~ \|\widehat{B}  f\|_{L^2 (\mathbb{R})}
 \label{eq2.11}
\end{equation}
Equality holds if and only if
\begin{equation}
\|\widehat{A} f \|_{L^2 (\mathbb{R})} = \|\widehat{B} f\|_{L^2 (\mathbb{R})}
\label{eq2.12}
\end{equation}
It is a well known fact in harmonic analysis \cite{Folland2} that, for instance, the Heisenberg uncertainty principle:
\begin{equation}
\|xf\|_{L^2 (\mathbb{R})} ~ \|\xi \widehat{f}\|_{L^2 (\mathbb{R})} \geq C \|f\|_{L^2 (\mathbb{R})}^2,
\label{eqCorrectionsB1}
\end{equation}
is equivalent to the inequality
\begin{equation}
\|xf\|_{L^2 (\mathbb{R})}^2 + \|\xi \widehat{f}\|_{L^2 (\mathbb{R})}^2 \geq K \|f\|_{L^2 (\mathbb{R})}^2,
\label{eqCorrectionsB2}
\end{equation}
for some constant $K >0$ and $C=\frac{1}{2}$. We already know from (\ref{eq2.11}) that (\ref{eqCorrectionsB1}) implies (\ref{eqCorrectionsB2}). To show that the converse is also true, we consider again the scale transformation (\ref{eq2.1}) $f_s=D_s f$. If (\ref{eqCorrectionsB2}) holds for all $f \in L^2 (\mathbb{R})$, then it also holds for $f_s$. From (\ref{eq2.3}) and (\ref{eqCorrectionsB2}), we obtain:
\begin{equation}
s^2\|xf\|_{L^2 (\mathbb{R})}^2 +s^{-2} \|\xi \widehat{f}\|_{L^2 (\mathbb{R})}^2 \geq K \|f\|_{L^2 (\mathbb{R})}^2,
\label{eqCorrectionsB3}
\end{equation}
which holds for all $f \in L^2 (\mathbb{R})$ and all $s \ne 0$. Taking the infimum on the left-hand side with respect to $s$, we recover (\ref{eqCorrectionsB1}).

Inequality (\ref{eq2.11}) shows that if there is a minimum of product of dispersions (e.g. $\widehat{A}=(\widehat{x})^n$ and $\widehat{B}=(\widehat{\xi})^n$), then (\ref{eq2.10}) will also be bounded from below. Conversely, if there is no lower bound on the product of dispersions, then that does not preclude a lower bound on (\ref{eq2.10}). Notice that from (\ref{eq2.5},\ref{eq2.6}) if $|s| \to + \infty$, then the product of dispersions vanishes while (\ref{eq2.10}) does not.

So in the sequel, we shall consider the expression  (\ref{eq2.10}) as our measure of uncertainty rather than the product of dispersions $\Delta_A (f,a) \Delta_B (f,b)$. Thus, {\it minimal uncertainty} states will mean, for all practical purposes, states which minimize the uncertainty measures of the form (\ref{eq2.10}).

This will help clarify whether there are coherent states for this algebra. Let us briefly explain what we have in mind. For the usual Heisenberg-Weyl algebra, coherent states can be constructed from any of the following three definitions \cite{Dey4}:

\vspace{0.3 cm}
\noindent
(i) as eigenstates of the annihilation operators $\widehat{a}_j=\frac{1}{\sqrt{2}} \left(\widehat{x}_j+ i \widehat{\xi}_j \right)$,

\vspace{0.2 cm}
\noindent
(ii) by applying Glauber's displacement operator $\widehat{D}( \alpha)=\exp \left(\alpha\cdot \widehat{a}^{\dagger} - \overline{\alpha} \cdot \widehat{a} \right)$ to the vacuum state, and

\vspace{0.2 cm}
\noindent
(iii) as quantum states that minimize the uncertainty relation $\Delta_{x_j}~ \Delta_{\xi_j} = \frac{1}{2}$ for all $j$, with equal uncertainties in each coordinate $\Delta_{x_j}= \Delta_{\xi_j}$.

\vspace{0.3 cm}
\noindent
In general, coherent states may fail to satisfy the three conditions at all times, see for instance \cite{Dey1}, where the first two conditions are satisfied, but the third one is not. If all three conditions are respected, then the states are called {\it intelligent coherent states}.

In this work, we shall address the third condition. However, we will see that when $\widehat A$ and $\widehat B$ are fundamental observables of the non-canonical algebra their product of dispersions is not bounded from below by a positive constant, while the uncertainty (\ref{eq2.10}) satisfies an inequality of the form (\ref{eqIntroduction1}) for some positive constant $C$. So, it is not true that the dispersions of $\widehat A$ and $\widehat B$ can be simultaneously equal to zero, but an uncertainty principle using the product of dispersions is unable to capture this property. 

Hence, as pointed out previously, our measure of uncertainty will be (\ref{eq2.10}) rather than the product of dispersions. Accordingly, coherent states are defined as the states that minimize simultaneously all the uncertainties (\ref{eq2.10}), where $\widehat A$ and $\widehat B$ are noncommuting fundamental variables in the new algebra. We will discuss the existence of such states in section 6.3.

\section{Non-canonical extension of the Heisenberg-Weyl algebra}

Given the physical motivations stated in the introduction, we shall now consider several aspects of the non-canonical phase-space noncommutative algebra of Refs.\cite{Bastos7,Bastos8}.

We consider a two-dimensional configuration space with noncommuting coordinates $\widehat{q}=(\widehat{q}_1, \widehat{q}_2)$ and canonical conjugate momenta $\widehat{p}=(\widehat{p}_1, \widehat{p}_2)$. In \cite{Bastos7,Bastos8}, $\widehat{q},\widehat{p}$ are not interpreted as the position and momentum of some particle, but rather as the scale factors appearing in the Kantowski-Sachs metric and their conjugate momenta. Other applications of such algebras are also of interest (see e.g. \cite{Delduc}). The non-canonical algebra reads:
\begin{equation}
\begin{array}{l}
\left[ \widehat{q}_1,  \widehat{q}_2 \right] = i \theta( \widehat{I} + \theta  \widehat{R})\\
\left[ \widehat{p}_1,  \widehat{p}_2 \right] = i \left( \eta \widehat{I} + (1+ \sqrt{1 - \xi})^2 \widehat{R} \right)\\
\left[ \widehat{q}_1,  \widehat{p}_1 \right] = \left[ \widehat{q}_2,  \widehat{p}_2 \right] =i \left( \widehat{I} + \theta (1+ \sqrt{1 - \xi}) \widehat{R} \right)
\end{array}
\label{eq3.1}
\end{equation}
while all the remaining commutators vanish. Here $ \widehat{R}$ denotes the operator
\begin{equation}
 \widehat{R} = \epsilon \left(  \widehat{q}_1 + \frac{\theta}{1+ \sqrt{1- \xi}}  \widehat{p}_2 \right)
\label{eq3.2}
\end{equation}
 Also, $\theta, \eta, \epsilon$ are positive constants, and $\xi= \theta \eta <1$. The constants $\theta$ and $\eta$ measure the noncommutativity in the configuration and momentum sectors, respectively. Indeed, if $\epsilon =0$, $\widehat{R}$ vanishes and one recovers the canonical phase-space noncommutative algebra \cite{Bastos1,Bastos2}:
\begin{equation}
\left[ \widehat{q}_1,  \widehat{q}_2 \right] = i \theta \widehat{I}, \hspace{0.5 cm} \left[ \widehat{p}_1,  \widehat{p}_2 \right] = i \eta \widehat{I}, \hspace{0.5 cm} \left[ \widehat{q}_1,  \widehat{p}_1 \right] = \left[ \widehat{q}_2,  \widehat{p}_2 \right] = i  \widehat{I}
\label{eq3.3}
\end{equation}
On the other hand $\epsilon$ is responsible for the non-canonical character of the algebra. Even if $\theta=\eta=0$, one still obtains a non-canonical noncommutative deformation of the HW algebra:
\begin{equation}
\left[ \widehat{q}_1,  \widehat{q}_2 \right] = 0, \hspace{0.5 cm} \left[ \widehat{p}_1,  \widehat{p}_2 \right] = 4i \epsilon \widehat{q}_1, \hspace{0.5 cm} \left[ \widehat{q}_1,  \widehat{p}_1 \right] = \left[ \widehat{q}_2,  \widehat{p}_2 \right] = i  \widehat{I}
\label{eq3.4}
\end{equation}
Notice that this algebra can be interpreted as an effective algebra for a system coupled to an external non-constant magnetic field (see \cite{Delduc} for details).

The consistency of the algebra (\ref{eq3.1}) is easily established. Indeed the Jacobi identity is a straightforward consequence of the fact that the algebra is equivalent to the HW algebra. Indeed, the following map is a nonlinear symplectomorphism to the HW algebra\footnote{Here we use the {\it classical} notion of symplectomorphism as a bijection $\phi: E \to V$ from a symplectic space $(E, \sigma)$ to another symplectic space $(V,\omega)$ such that $\phi^{\ast} \omega = \sigma$, i.e. $\omega\left( \phi (z), \phi (z^{\prime})\right) = \sigma (z, z^{\prime})$ for all $z, z^{\prime} \in E$.}:
\begin{equation}
\begin{array}{l l l}
\widehat{q}_1 = \lambda \widehat{x}_1 - \frac{\theta}{2 \lambda} \widehat{\xi}_2 + E \widehat{x}_1^2 & \hspace{0.5 cm} & \widehat{q}_2 = \lambda \widehat{x}_2 + \frac{\theta}{2 \lambda} \widehat{\xi}_1\\
& & \\
\widehat{p}_1 = \mu \widehat{\xi}_1 + \frac{\eta}{2 \mu} \widehat{x}_2 & \hspace{0.5 cm} & \widehat{p}_2  = \mu \widehat{\xi}_2 - \frac{\eta}{2 \mu} \widehat{x}_1 + F \widehat{x}_1^2
\end{array}
\label{eq3.5}
\end{equation}
Here $\mu, \lambda$ are real parameters such that $2 \mu \lambda = 1 + \sqrt{1 - \xi}$, and
\begin{equation}
E= - \frac{\theta F}{1+\sqrt{1- \xi}}, \hspace{1 cm} F= - \frac{\lambda}{\mu} \epsilon  \sqrt{1- \xi} (1+\sqrt{1- \xi})
\label{eq3.6}
\end{equation}
The inverse transformation is easily established:
\begin{equation}
\begin{array}{l l l}
\widehat{x}_1 = \frac{1}{\sqrt{1- \xi}} \left(\mu \widehat{q}_1  + \frac{\theta}{2 \lambda} \widehat{p}_2 \right) & \hspace{0.5 cm} & \widehat{x}_2 = \frac{1}{\sqrt{1- \xi}} \left(\mu \widehat{q}_2  - \frac{\theta}{2 \lambda} \widehat{p}_1 \right) \\
& & \\
\widehat{\xi}_1 = \frac{1}{\sqrt{1- \xi}} \left(\lambda \widehat{p}_1  - \frac{\eta}{2 \mu} \widehat{q}_2 \right) & \hspace{0.5 cm} & \widehat{\xi}_2 = \frac{1}{\sqrt{1- \xi}} \left(\lambda \widehat{p}_2  + \frac{\eta}{2 \mu} \widehat{q}_1 - \frac{F \mu}{ \epsilon^2 \sqrt{1- \xi}}  \widehat{R}^2 \right)
\end{array}
\label{eq3.7}
\end{equation}
The variables $( \widehat{x}_1,  \widehat{x}_2,  \widehat{\xi}_1,  \widehat{\xi}_2)$  satisfy the HW algebra:
\begin{equation}
\left[  \widehat{x}_1,  \widehat{x}_2 \right]= \left[  \widehat{\xi}_1,  \widehat{\xi}_2 \right]=0, \hspace{1 cm} \left[  \widehat{x}_1,  \widehat{\xi}_1 \right]= \left[  \widehat{x}_2,  \widehat{\xi}_2 \right]= i  \widehat{I}
\label{eq3.8}
\end{equation}
Notice that the symplectomorphism is not unique. Indeed, the composition of the symplectomorphism with an arbitrary unitary transformation yields an equally valid symplectomorphism. However, all physical predictions (expectation values, probabilities, eigenvalues) are invariant under a choice of symplectomorphism \cite{Bastos2,Bastos3}, so we may safely choose (\ref{eq3.5},\ref{eq3.7}) for the remainder of this work.

From this map, we can thus obtain a differential representation of the algebra in $L^2 (\mathbb{R}^2)$:
\begin{equation}
\begin{array}{l}
(\widehat{q}_1 f) (x_1, x_2) = \left(\lambda x_1 + \frac{i \theta}{2 \lambda} \frac{\partial}{\partial x_2} + E x_1^2 \right) f (x_1, x_2) \\
\\
(\widehat{q}_2 f) (x_1, x_2) = \left(\lambda x_2 - \frac{i \theta}{2 \lambda} \frac{\partial}{\partial x_1}  \right) f (x_1, x_2)\\
\\
(\widehat{p}_1 f) (x_1, x_2) = \left(- i \mu \frac{\partial}{\partial x_1} +  \frac{ \eta}{2 \mu}  x_2 \right) f (x_1, x_2)\\
\\
(\widehat{p}_2 f) (x_1, x_2) = \left(- i \mu \frac{\partial}{\partial x_2} - \frac{ \eta}{2 \mu}  x_1 + F x_1^2 \right) f (x_1, x_2)
\end{array}
\label{eq3.9}
\end{equation}
and the corresponding maximal domains
\begin{equation}
\begin{array}{l}
\text{Dom} (\widehat{q}_1) := \left\{f \in L^2(\mathbb{R}^2): ~\left( x_1 + \frac{i \theta}{2 \lambda^2} \frac{\partial}{\partial x_2} + \frac{E}{\lambda} x_1^2\right) f (x_1,x_2) \in L^2(\mathbb{R}^2) \right\}\\
\\
 \text{Dom}(\widehat{q}_2) := \left\{f \in L^2(\mathbb{R}^2): ~\left( x_2 - \frac{i \theta}{2 \lambda^2} \frac{\partial}{\partial x_1}\right) f (x_1,x_2) \in L^2(\mathbb{R}^2) \right\}\\
\\
 \text{Dom}(\widehat{p}_1) := \left\{f \in L^2(\mathbb{R}^2): ~\left( x_2 - \frac{2i \mu^2}{\eta} \frac{\partial}{\partial x_1} \right) f (x_1,x_2) \in L^2(\mathbb{R}^2) \right\}\\
\\
\text{Dom} (\widehat{p}_2):= \left\{f \in L^2(\mathbb{R}^2): ~\left( x_1 + \frac{2i \mu^2}{\eta} \frac{\partial}{\partial x_2} - \frac{2 \mu F}{\eta} x_1^2\right) f (x_1,x_2) \in L^2(\mathbb{R}^2) \right\}
\end{array}
\label{eq3.10}
\end{equation}

We leave to the reader the proof of the following lemma.

\begin{Lemma}\label{lemma3.0}
The operators $\widehat{q_1}, \widehat{q_2}, \widehat{p_1}, \widehat{p_2}$ are self-adjoint on their maximal domains.
\end{Lemma}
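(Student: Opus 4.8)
The plan is to reduce each of the four operators to multiplication by a real-valued function by means of a partial Fourier transform, and then to invoke the elementary fact that multiplication by a real-valued measurable function is self-adjoint on its maximal domain. Self-adjointness is preserved under unitary conjugation, so this reduction suffices.

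First I would record the structural observation that underlies everything. In the representation (\ref{eq3.9}) each of $\widehat{q}_1,\widehat{q}_2,\widehat{p}_1,\widehat{p}_2$ is the sum of a multiplication operator by a real polynomial in a \emph{single} position variable and a constant multiple of the momentum in the \emph{complementary} variable: $\widehat{q}_1$ is multiplication by $\lambda x_1 + E x_1^2$ together with $-\frac{\theta}{2\lambda}\widehat{\xi}_2$; $\widehat{q}_2$ is multiplication by $\lambda x_2$ together with $\frac{\theta}{2\lambda}\widehat{\xi}_1$; $\widehat{p}_1$ is $\mu\widehat{\xi}_1$ together with multiplication by $\frac{\eta}{2\mu}x_2$; and $\widehat{p}_2$ is $\mu\widehat{\xi}_2$ together with multiplication by $-\frac{\eta}{2\mu}x_1 + F x_1^2$. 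The decisive point is that in every case the multiplicative part depends only on one variable while the differential part differentiates in the other, so the two summands act on distinct tensor factors of $L^2(\mathbb{R}^2) = L^2(\mathbb{R}_{x_1})\otimes L^2(\mathbb{R}_{x_2})$.

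Next I would conjugate by the partial Fourier transform in the variable on which the differential part acts. For $\widehat{q}_1$ and $\widehat{p}_2$ this is the transform $\mathcal{F}_{x_2}$ in $x_2$, under which $\widehat{\xi}_2=-i\partial_{x_2}$ becomes multiplication by $\xi_2$ while the $x_1$-dependent factor is untouched; for $\widehat{q}_2$ and $\widehat{p}_1$ it is $\mathcal{F}_{x_1}$, under which $\widehat{\xi}_1$ becomes multiplication by $\xi_1$. In each case the conjugated operator is multiplication by a real-valued polynomial in the transformed variables; for instance $\mathcal{F}_{x_2}\,\widehat{q}_1\,\mathcal{F}_{x_2}^{-1}$ is multiplication by $\lambda x_1 + E x_1^2 - \frac{\theta}{2\lambda}\xi_2$. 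Since multiplication by a real-valued measurable function $V$ is self-adjoint on its maximal domain $\{g \in L^2 : Vg \in L^2\}$, and the partial Fourier transform is unitary on $L^2(\mathbb{R}^2)$, each of the four operators is self-adjoint on the preimage under that transform of the multiplication operator's maximal domain.

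The only point requiring care, and the one I would verify explicitly, is that this preimage coincides with the maximal domain written in (\ref{eq3.10}). This is bookkeeping: after clearing the overall constant, each domain in (\ref{eq3.10}) is exactly $\{f : \widehat{A}f \in L^2\}$ with $\widehat{A}f$ read distributionally, and because the partial Fourier transform is a topological isomorphism of $\mathcal{S}^{\prime}(\mathbb{R}^2)$ it intertwines the distributional action of $\widehat{A}$ with multiplication by the corresponding real symbol; hence $\widehat{A}f \in L^2$ if and only if the symbol times the partial Fourier transform of $f$ lies in $L^2$, which is precisely the maximal domain of the multiplication operator. No genuine analytic difficulty arises, which is why the statement may be left to the reader: the whole content is the observation that the nonlinear $x_1^2$ terms, which occur only in $\widehat{q}_1$ and $\widehat{p}_2$, still commute with the relevant momentum and therefore do not obstruct the reduction to a real multiplication operator.
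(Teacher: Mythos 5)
Your proposal is correct and complete. Note, however, that the paper offers no proof at all to compare against: Lemma~\ref{lemma3.0} is explicitly left to the reader, with only the remark that self-adjointness ``comes from the fact that our noncanonical algebra is globally isomorphic with the Heisenberg-Weyl algebra and the isomorphism is a polynomial of degree at most 2.'' Your route makes that remark precise in the most economical way: since each of $\widehat{q}_1,\widehat{q}_2,\widehat{p}_1,\widehat{p}_2$ mixes a real polynomial in one position variable with a constant multiple of the momentum in the \emph{other} variable, conjugation by the partial Fourier transform in the differentiated variable turns each operator into multiplication by a real symbol (e.g.\ $\lambda x_1+Ex_1^2-\tfrac{\theta}{2\lambda}\xi_2$ for $\widehat{q}_1$), and self-adjointness on the maximal domain then follows from the standard fact about real multiplication operators plus invariance of self-adjointness under unitary conjugation. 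You also correctly handle the one point where such arguments usually go wrong: the maximal domains (\ref{eq3.10}) are defined by requiring the \emph{whole} distributional expression to lie in $L^2(\mathbb{R}^2)$, not by intersecting the domains of the multiplicative and differential summands, and your identification of that set with the preimage of the multiplication operator's maximal domain (via the fact that the partial Fourier transform is an isomorphism of $\mathcal{S}^{\prime}(\mathbb{R}^2)$) is exactly the bookkeeping needed; the rescaling constants appearing in (\ref{eq3.10}) are harmless since multiplying an operator by a nonzero real constant changes neither its maximal domain nor its self-adjointness. An alternative, closer to the letter of the paper's remark, would be to conjugate by a global unitary implementing the symplectomorphism (\ref{eq3.5}); your argument is more elementary and avoids having to construct that unitary explicitly.
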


This result deserves some comments. In some theories of noncommutative quantum mechanics (see e.g. \cite{Dey1,Dey2,Dey3,Dey4}) the fundamental operators may not be self-adjoint or not even hermitian (for instance in PT symmetric systems). An example would be a representation of the $q$-deformed oscillator algebra on the unit circle acting on Rogers-Sz\"ego polynomials \cite{Dey1}. The self-adjoint representation (\ref{eq3.9}) on the maximal domains (\ref{eq3.10}) comes from the fact that our noncanonical algebra is globally isomorphic with the Heisenberg-Weyl algebra and the isomorphism is a polynomial of degree at most 2. Notice that the nature of our deformation of the Heisenberg-Weyl algebra is somewhat different from the $q$-deformation of \cite{Dey1}. Here the commutation relation of $\widehat{q}_1     $ and $\widehat{p}_1$ is deformed, but the operators $\widehat{q}_1 $ and $\widehat{q}_2$ no longer commute. In \cite{Dey1}, a q-deformation is performed on each oscillator separately. The q-deformation may be more restrictive than our deformation, as it leads to a minimal length and a minimal momentum, while ours does not. 

\section{Stability of the algebra}

Before we proceed, let us analyze the stability of our algebra.

Consider some Lie algebra $\mathcal{A}_0$ with product $\left[ \cdot, \cdot \right]_0$ defined on a vector space $V$ over a field $\mathbb{K}$. A formal deformation of $\mathcal{A}_0$ is an algebra $\mathcal{A}_{\rho}$ on the space $V \otimes \mathbb{K} \left[ \rho \right]$ (where $\mathbb{K} \left[ \rho \right]$ is the ring of formal power series), defined by:
\begin{equation}
\left[A, B \right]_{\rho} =\left[A, B \right]_0+ \sum_{k=1}^{\infty} B_k (A,B) \rho^k,
\label{eq4.1}
\end{equation}
where $A,B,B_k(A,B) \in V$ $(k \ge 1)$ and $\rho \in \mathbb{K}$. In this instance, one has instead a 3 parameter deformation $(\theta, \eta, \epsilon)$, but the essential arguments are not substantially altered, so we will keep to the simpler one-parameter deformation.

If all deformations $\mathcal{A}_{\rho}$ are isomorphic to $\mathcal{A}_0$, then $\mathcal{A}_0$ is said to be stable or rigid. This concept is paramount in the so-called stable model approach to model building \cite{Faddeev,Flato,Smale}. From this point of view, one aims to construct models with properties which remain stable under small changes of the parameters. If one has, for instance, an unstable algebra, one deforms it until one obtains a stable algebra. It is well known that the passage from non-relativistic to relativistic or from classical to quantum mechanics, can be interpreted as the transition from unstable to stable theories \cite{Vilela}.

A simple inspection of (\ref{eq4.1}) reveals that the maps $B_k $ must be 2-cochains in $V$. The imposition of the Jacobi indentity entails that $B_1$ be a 2-cocycle. The Rigidity Theorem of Nijenhuis and Richardson \cite{Nijenhuis1,Nijenhuis2} states that if the second co-homology group of the algebra $\mathcal{A}_0$ is trivial, so that $B_1$ is in fact a 2-coboundary, then $\mathcal{A}_0$ is stable. This is what happens if $\mathcal{A}_0$ is a semi-simple Lie algebra \cite{Jacobson}. However, this is a sufficient condition, but not a necessary one. A non-trivial second co-homology group may not be an obstruction to stable deformations. This is an important point for our purposes. Indeed, the HW algebra has a non-trivial second co-homology group. Nevertheless it is a stable algebra. The reason is that there exists a non-linear isomorphism to a stable algebra \cite{Vilela}. Since our algebra is also non-linearly isomorphic with the HW algebra, one concludes that our algebra is stable.

\section{Functional spaces}

\subsection{Functional spaces for the uncertainty principle}

For our purposes it will prove useful to consider the following functional spaces.

\begin{Definition}\label{def3.1}
Let $\mathcal{N}= \left\{(\widehat{q}_1, \widehat{q}_2), (\widehat{p}_1, \widehat{p}_2), (\widehat{q}_1, \widehat{p}_1), (\widehat{q}_2, \widehat{p}_2) \right\}$. For each $\alpha = (\widehat{u}, \widehat{v}) \in \mathcal{N}$, we define
\begin{equation}
\mathcal{B}^{\alpha} (\mathbb{R}^2) := \left\{f \in \mathcal{S}^{\prime} (\mathbb{R}^2): ~ \|f \|_{\alpha} < + \infty \right\}
\label{eq3.22}
\end{equation}
where
\begin{equation}
\|f \|_{\alpha}^2 := 2 \|f \|_{L^2 (\mathbb{R}^2)} + \|\widehat{u} f\|_{L^2 (\mathbb{R}^2)}^2 + \|\widehat{v} f\|_{L^2 (\mathbb{R}^2)}^2
\label{eq3.23}
\end{equation}
We shall also consider the space
\begin{equation}
\mathcal{B} (\mathbb{R}^2) := \left\{f \in \mathcal{S}^{\prime} (\mathbb{R}^2): ~ \|f \|_{\mathcal{B}} < + \infty \right\}
\label{eq3.23.1}
\end{equation}
where
\begin{equation}
\begin{array}{c}
\|f \|_{\mathcal{B}}^2 := 2 \|f\|_{L^2 (\mathbb{R}^2)}^2 + \| ~ (x_1+ E x_1^2 / \lambda) f \|_{L^2 (\mathbb{R}^2)}^2 + \|x_2 f \|_{L^2 (\mathbb{R}^2)}^2 + \\
\\
+  \| \xi_1 \widetilde{f} \|_{L^2 (\mathbb{R}^2)}^2 + \| \xi_2 \widetilde{f} \|_{L^2 (\mathbb{R}^2)}^2 = \\
\\
=  \int_{\mathbb{R}^2} \left( 1 + \left(x_1 + E x_1^2 / \lambda \right)^2 + x_2^2 \right) |f(x)|^2 dx + \\
\\
+ \int_{\mathbb{R}^2} \left( 1 + \xi_1^2 + \xi_2^2 \right) | \widetilde{f} (\xi)|^2 d \xi.
\end{array}
\label{eq3.23.2}
\end{equation}

\end{Definition}

We leave to the reader the simple task of verifying that the spaces $\mathcal{B}^{\alpha} (\mathbb{R}^2)$ and $\mathcal{B} (\mathbb{R}^2)$ are complex normed vector spaces. In fact they are Hilbert spaces:

\begin{Proposition}\label{prop3.3}
The spaces $\mathcal{B}^{\alpha} (\mathbb{R}^2)$ for $\alpha = (\widehat{u}, \widehat{v})\in \mathcal{N}$ and $\mathcal{B} (\mathbb{R}^2)$ endowed with the inner products
\begin{equation}
\langle f,g \rangle_{\alpha} : = 2 \langle f,g \rangle_{L^2 (\mathbb{R}^2)} + \langle \widehat{u}f,\widehat{u}g \rangle_{L^2 (\mathbb{R}^2)} + \langle \widehat{v}f,\widehat{v}g \rangle_{L^2 (\mathbb{R}^2)}
\label{eq3.24}
\end{equation}
and
\begin{equation}
\begin{array}{c}
\langle f,g \rangle_{\mathcal{B}} := 2 \langle f,g \rangle_{L^2 (\mathbb{R}^2)} + \langle \left(x_1 + E x_1^2 / \lambda \right) f , \left(x_1 + E x_1 / \lambda \right) g \rangle_{L^2 (\mathbb{R}^2)} + \\
\\
+ \langle x_2 f, x_2 g \rangle_{L^2 (\mathbb{R}^2)} + \langle \xi_1 \widetilde{f}, \xi_1 \widetilde{g} \rangle_{L^2 (\mathbb{R}^2)} + \langle \xi_2 \widetilde{f}, \xi_2 \widetilde{g} \rangle_{L^2 (\mathbb{R}^2)} =\\
\\
=  \int_{\mathbb{R}^2} \left( 1 + \left(x_1 + E x_1^2 / \lambda \right)^2 + x_2^2 \right) f (x) \overline{g(x)} dx + \\
\\
+ \int_{\mathbb{R}^2} \left( 1 + \xi_1^2 + \xi_2^2 \right)  \widetilde{f} (\xi)  \overline{\widetilde{g} (\xi)} d \xi,
\end{array}
\label{eq3.24.1}
\end{equation}
respectively, are Hilbert spaces.
\end{Proposition}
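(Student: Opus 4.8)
The plan is to treat both families of spaces uniformly: each norm is built from a fixed finite list of (unbounded) operators applied to $f$ and measured in $L^2(\mathbb{R}^2)$, together with a multiple of $\|f\|_{L^2(\mathbb{R}^2)}^2$. First I would confirm that the sesquilinear forms (\ref{eq3.24}) and (\ref{eq3.24.1}) are genuine inner products inducing the stated norms. Sesquilinearity and Hermitian symmetry follow termwise from the corresponding properties of $\langle\cdot,\cdot\rangle_{L^2(\mathbb{R}^2)}$ and the linearity of the operators $\widehat u,\widehat v$ and of multiplication/Fourier multiplication; positive-definiteness is immediate from $\langle f,f\rangle_\alpha \geq 2\|f\|_{L^2(\mathbb{R}^2)}^2$, which vanishes only for $f=0$. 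Putting $g=f$ recovers (\ref{eq3.23}) and (\ref{eq3.23.2}) exactly, so the induced norms are the ones already shown to make these complex normed vector spaces.

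The substantive step is completeness. For $\mathcal{B}^{\alpha}(\mathbb{R}^2)$ with $\alpha=(\widehat u,\widehat v)$, I would take a Cauchy sequence $(f_n)$ in $\|\cdot\|_{\alpha}$. From $\|f_n-f_m\|_{\alpha}^2 \geq 2\|f_n-f_m\|_{L^2(\mathbb{R}^2)}^2$ the sequence is Cauchy in $L^2(\mathbb{R}^2)$ and hence converges to some $f\in L^2(\mathbb{R}^2)$; likewise $\|f_n-f_m\|_{\alpha}^2$ dominates $\|\widehat u(f_n-f_m)\|_{L^2(\mathbb{R}^2)}^2$ and $\|\widehat v(f_n-f_m)\|_{L^2(\mathbb{R}^2)}^2$, so the sequences $(\widehat u f_n)$ and $(\widehat v f_n)$ are Cauchy in $L^2(\mathbb{R}^2)$, with limits $g,h\in L^2(\mathbb{R}^2)$.

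The key is to identify $g=\widehat u f$ and $h=\widehat v f$, that is, to show the $L^2$-limit actually lies in the domains and is acted upon correctly; this is precisely where I would invoke closedness. Since for every $\alpha\in\mathcal{N}$ both components $\widehat u,\widehat v$ belong to $\{\widehat q_1,\widehat q_2,\widehat p_1,\widehat p_2\}$, they are self-adjoint on their maximal domains by Lemma \ref{lemma3.0}, hence closed. Closedness applied to $f_n\to f$ and $\widehat u f_n\to g$ yields $f\in\text{Dom}(\widehat u)$ and $\widehat u f=g$, and similarly $f\in\text{Dom}(\widehat v)$ with $\widehat v f=h$; thus $f\in\mathcal{B}^{\alpha}(\mathbb{R}^2)$ and $\|f_n-f\|_{\alpha}\to0$. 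For $\mathcal{B}(\mathbb{R}^2)$ the identical argument applies term by term, using that multiplication by the real functions $x_1+Ex_1^2/\lambda$ and $x_2$, and the momentum operators $\widehat\xi_1,\widehat\xi_2$ (acting as multiplication by $\xi_1,\xi_2$ after Fourier transform), are each self-adjoint, hence closed, on their maximal domains. The main obstacle is thus entirely concentrated in this closedness step: without it the candidate limit $f$ need not belong to the space, and it is Lemma \ref{lemma3.0}, together with self-adjointness of the multiplication and momentum operators, that removes it.
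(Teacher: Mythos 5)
Your proof is correct and follows essentially the same strategy as the paper's: reduce completeness to the fact that $(f_n)$, $(\widehat u f_n)$, $(\widehat v f_n)$ are Cauchy in $L^2(\mathbb{R}^2)$, and then identify the $L^2$-limits using the self-adjointness of the operators from Lemma \ref{lemma3.0}. The only differences are in how the identification step and the space $\mathcal{B}(\mathbb{R}^2)$ are handled, and both of your choices are legitimate: where the paper verifies $g=\widehat q_1 f$ by hand through the duality computation $\langle f,\widehat q_1 t\rangle_{L^2}=\langle g,t\rangle_{L^2}$ for $t\in\mathcal{S}(\mathbb{R}^2)$ and density of $\mathcal{S}(\mathbb{R}^2)$ in $L^2(\mathbb{R}^2)$, you invoke the abstract fact that a self-adjoint operator is closed and apply graph-closedness directly, which is a clean shortcut (indeed the paper's computation is precisely an in-place proof of that closedness); and where the paper deduces completeness of $\mathcal{B}(\mathbb{R}^2)$ from the norm equivalence of Proposition \ref{prop3.3.1}, you rerun the same term-by-term argument using self-adjointness of the real multiplication operators and of $\widehat\xi_1,\widehat\xi_2$, which avoids any reliance on that later proposition at the cost of invoking these standard facts not covered by Lemma \ref{lemma3.0}.
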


\begin{proof}
By the Cauchy-Schwarz inequality, we have
\begin{equation}
\begin{array}{c}
\left|\langle f,g \rangle_{\alpha}  \right| \leq 2 \|f \|_{L^2 (\mathbb{R}^2)} \|g \|_{L^2 (\mathbb{R}^2)} + \|\widehat{u}f\|_{L^2 (\mathbb{R}^2)} \|\widehat{u}g \|_{L^2 (\mathbb{R}^2)} + \\
\\
+ \|\widehat{v}f \|_{L^2 (\mathbb{R}^2)} \|\widehat{v}g \|_{L^2 (\mathbb{R}^2)} \leq 4 \|f\|_{\alpha} \|g \|_{\alpha}
\end{array}
\label{eq3.25}
\end{equation}
which shows that $\langle \cdot, \cdot \rangle_{\alpha} $ is a well defined operation $\mathcal{B}^{\alpha} (\mathbb{R}^2) \times \mathcal{B}^{\alpha} (\mathbb{R}^2) \to \mathbb{C}$. Since $\langle f,f \rangle_{\alpha} = \|f\|_{\alpha}^2$ it is straightforward to prove that $\langle \cdot, \cdot \rangle_{\alpha} $ is an inner product. And so $\mathcal{B}^{\alpha} (\mathbb{R}^2) = \text{Dom}(\widehat{u}) \cap \text{Dom}(\widehat{v})$ is a pre-Hilbert space. It remains to prove completeness. We shall prove the case $\alpha = (\widehat{q_1}, \widehat{q_2})$. The remaining cases are proved in a similar fashion. The proof follows the standard procedure for Sobolev spaces \cite{Evans,Leoni}.

Let then $(f_n)_{n \in \mathbb{N}}$ be a Cauchy sequence in $\mathcal{B}^{\alpha} (\mathbb{R}^2)$ with $\alpha = (\widehat{q_1}, \widehat{q_2})$. Then for any $\epsilon >0$, there exists $N \in \mathbb{N}$ such that
\begin{equation}
\|f_n-f_m\|_{\alpha} = \left( 2\|f_n-f_m\|_{L^2 (\mathbb{R}^2)}^2 + \|\widehat{q_1}(f_n-f_m)\|_{L^2 (\mathbb{R}^2)}^2 + \|\widehat{q_2}(f_n-f_m)\|_{L^2 (\mathbb{R}^2)}^2 \right)^{1/2} < \epsilon
\label{eq3.25.1}
\end{equation}
for all $n,m \in \mathbb{N}$ with $n,m \geq N$.

It follows that $(f_n)_{n \in \mathbb{N}}$, $(\widehat{q_1} f_n)_{n \in \mathbb{N}}$ and $(\widehat{q_2} f_n)_{n \in \mathbb{N}}$ are Cauchy sequences in $L^2 (\mathbb{R}^2)$. Since $L^2 (\mathbb{R}^2)$ is complete, there exist $f,g,h \in L^2 (\mathbb{R}^2)$ such that
\begin{equation}
\|f_n-f\|_{L^2 (\mathbb{R}^2)} \to 0, \hspace{1 cm} \|\widehat{q_1}f_n- g\|_{L^2 (\mathbb{R}^2)} \to 0, \hspace{1 cm} \mbox{and } \|\widehat{q_2}f_n- h \|_{L^2 (\mathbb{R}^2)} \to 0
\label{eq3.25.2}
\end{equation}
as $n \to \infty$. The proof is completed, provided we prove that $g = \widehat{q_1} f$ and $h = \widehat{q_2} f$ a.e.. We prove the first identity and leave the second one to the reader. Let $t \in \mathcal{S}(\mathbb{R}^2)$. Then we have by the continuity of the inner product and the fact that $\widehat{q_1}$ is self-adjoint:
\begin{equation}
\begin{array}{c}
\langle  f,\widehat{q_1} t \rangle_{L^2 (\mathbb{R}^2)} = \langle \lim f_n,\widehat{q_1} t \rangle_{L^2 (\mathbb{R}^2)} = \lim \langle f_n,\widehat{q_1} t \rangle_{L^2 (\mathbb{R}^2)} =\\
 \\
 = \lim \langle \widehat{q_1}f_n, t \rangle_{L^2 (\mathbb{R}^2)} =\langle \lim \widehat{q_1}f_n, t \rangle_{L^2 (\mathbb{R}^2)} =\langle g, t \rangle_{L^2 (\mathbb{R}^2)}
\end{array}
\label{eq3.25.3}
\end{equation}
which holds for any $t \in \mathcal{S}(\mathbb{R}^2)$. Since  $\mathcal{S}(\mathbb{R}^2)$ is dense in $L^2 (\mathbb{R}^2)$ and $\widehat{q_1}$ is self-adjoint, we conclude that $g = \widehat{q_1}^{\ast}f = \widehat{q_1} f$ a.e..

The fact that $\mathcal{B} (\mathbb{R}^2)$ is also a pre-Hilbert space is proved in a similar fashion. On the other hand, completeness of $\mathcal{B} (\mathbb{R}^2)$ is then a simple consequence of Proposition \ref{prop3.3.1} (see below).
\end{proof}

The next proposition reveals that all the spaces $\mathcal{B}^{\alpha} (\mathbb{R}^2)$ for $\alpha = (\widehat{u}, \widehat{v})\in \mathcal{N}$ and $\mathcal{B} (\mathbb{R}^2)$ coincide.
\begin{Proposition}\label{prop3.3.1}
For any $\alpha = (\widehat{u}, \widehat{v})\in \mathcal{N}$ and all $f \in \mathcal{S}^{\prime} (\mathbb{R}^2)$, we have
\begin{equation}
\|f \|_{\alpha} \asymp \|f \|_{\mathcal{B}}.
\label{eq3.25.4}
\end{equation}
\end{Proposition}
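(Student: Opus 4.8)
The plan is to prove the two-sided estimate $\|f\|_\alpha\asymp\|f\|_{\mathcal B}$ by establishing each inequality separately and uniformly over the four pairs $\alpha=(\widehat u,\widehat v)\in\mathcal N$. Throughout I use the explicit differential representation (\ref{eq3.9}), which displays each of $\widehat q_1,\widehat q_2,\widehat p_1,\widehat p_2$ as a linear combination of the elementary operators $\widehat x_1,\widehat x_1^2,\widehat x_2,\widehat\xi_1,\widehat\xi_2$ (with $\widehat\xi_j=-i\partial_{x_j}$), together with the pointwise comparison $1+(x_1+Ex_1^2/\lambda)^2\asymp 1+x_1^2+x_1^4$, valid because the quartic dominates at infinity while both sides stay bounded below by $1$; this shows that the single weight in $\|f\|_{\mathcal B}$ controls $\|\widehat x_1 f\|_{L^2}$ and $\|\widehat x_1^2 f\|_{L^2}$ at once. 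The upper bound $\|f\|_\alpha\lesssim\|f\|_{\mathcal B}$ is then routine: from (\ref{eq3.9}), $\widehat q_1 f=(\lambda x_1+Ex_1^2)f-\tfrac{\theta}{2\lambda}\widehat\xi_2 f$, whence $\|\widehat q_1 f\|_{L^2}\le\lambda\|(x_1+Ex_1^2/\lambda)f\|_{L^2}+\tfrac{\theta}{2\lambda}\|\xi_2\widetilde f\|_{L^2}\lesssim\|f\|_{\mathcal B}$, and the same holds for $\widehat q_2,\widehat p_1,\widehat p_2$ since every monomial occurring in (\ref{eq3.9}) is one of $x_1,x_1^2,x_2,\xi_1,\xi_2$, each dominated by $\|f\|_{\mathcal B}$. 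With $2\|f\|_{L^2}^2\le\|f\|_{\mathcal B}^2$ this gives $\|f\|_\alpha\lesssim\|f\|_{\mathcal B}$ for every $\alpha$.

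The reverse inequality $\|f\|_{\mathcal B}\lesssim\|f\|_\alpha$ is the heart of the matter and the step I expect to be the main obstacle. (The bounded-inverse-theorem shortcut is unavailable, since the completeness of $\mathcal B$ is itself deduced from the present proposition; both inequalities must be obtained by hand.) Expanding $\|\widehat u f\|_{L^2}^2+\|\widehat v f\|_{L^2}^2$ produces, besides the desired diagonal terms, two cross terms: for $\alpha=(\widehat q_1,\widehat q_2)$ these are $\mathrm{Re}\langle(\lambda x_1+Ex_1^2)f,\widehat\xi_2 f\rangle$ and $\mathrm{Re}\langle\widehat x_2 f,\widehat\xi_1 f\rangle$, and the remaining pairs are entirely analogous because in each case the two operators distribute the conjugate pairs $(x_1,\xi_1)$ and $(x_2,\xi_2)$ between them. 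Each cross term couples a pair of \emph{commuting} variables, so Cauchy--Schwarz is sharp and neither can be absorbed into the diagonal individually; equivalently, the joint principal symbol $(a_{\widehat u},a_{\widehat v})$ vanishes on a codimension-two set on which the $\mathcal B$-weight is unbounded, so no pointwise domination of symbols holds and the coercivity cannot be of classical (elliptic) type.

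The key point is that the two cross terms cannot saturate Cauchy--Schwarz \emph{simultaneously}: simultaneous saturation would force $f$ to be a common approximate eigenfunction of the non-commuting $\widehat u,\widehat v$, which is obstructed precisely by the commutator $[\widehat u,\widehat v]$ read off from (\ref{eq3.1}); for $(\widehat q_1,\widehat q_2)$ one gets $[\widehat q_1,\widehat q_2]=i\theta(\widehat I+\tfrac{E}{\lambda}\widehat x_1)$. My plan is to turn this qualitative incompatibility into the quantitative bound by treating $\widehat u^2+\widehat v^2$ as a globally elliptic operator in the anisotropic Shubin/Weyl--H\"ormander calculus: the genuinely quadratic term $E\widehat x_1^2$ (respectively $F\widehat x_1^2$) supplies ellipticity in the $x_1$-direction through the leading $E^2\widehat x_1^4$ in $\widehat q_1^2$, while the iterated commutators verify H\"ormander's bracket condition and furnish positivity in the remaining directions — note $[\widehat q_2,[\widehat q_1,\widehat q_2]]=i\theta\tfrac{E}{\lambda}[\widehat q_2,\widehat x_1]=\mathrm{const}\neq0$. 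The delicate step, requiring the most care, is to obtain the lower bound \emph{without loss} exactly on the degeneracy locus, in particular near $x_1=-\lambda/E$ where the first commutator itself vanishes and one must descend to the second bracket; controlling this uniformly is what upgrades the subelliptic gain to the sharp form inequality $\widehat u^2+\widehat v^2+C\,\widehat I\gtrsim \widehat I+\widehat x_1^2+\widehat x_1^4+\widehat x_2^2+\widehat\xi_1^2+\widehat\xi_2^2$, which is equivalent to $\|f\|_{\mathcal B}\lesssim\|f\|_\alpha$. A conceptually cleaner alternative I would pursue in parallel is to exploit the symplectomorphism (\ref{eq3.5}): its linear part is implemented by a metaplectic unitary and its quadratic part by a chirp-type multiplier $e^{i\phi(x)}$, and carrying out this unitary reduction should transform the estimate into the manifestly coercive statement for the canonical variables $\widehat x_j,\widehat\xi_j$.
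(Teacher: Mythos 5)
Your first inequality, $\|f\|_{\alpha}\lesssim\|f\|_{\mathcal B}$, is correct and is essentially the paper's own argument (the paper keeps the combination $x_1+Ex_1^2/\lambda$ intact rather than splitting it via $1+(x_1+Ex_1^2/\lambda)^2\asymp 1+x_1^2+x_1^4$, but that is immaterial). The genuine gap is the reverse inequality: what you offer there is a programme, not a proof. The entire content of the proposition is the coercivity estimate, and you defer it to an unspecified subelliptic/Weyl--H\"ormander argument, yourself conceding that the "delicate step" is still to be done. For comparison, the paper's route is elementary and completely different: for \emph{real} $f$ the cross terms you worry about vanish identically, because $\widehat q_1f=(\lambda x_1+Ex_1^2)f+\tfrac{i\theta}{2\lambda}\partial_{x_2}f$ then has real part $(\lambda x_1+Ex_1^2)f$ and purely imaginary part $\tfrac{i\theta}{2\lambda}\partial_{x_2}f$, whence
\begin{equation*}
\|\widehat q_1f\|_{L^2(\mathbb{R}^2)}^2=\lambda^2\|(x_1+Ex_1^2/\lambda)f\|_{L^2(\mathbb{R}^2)}^2+\tfrac{\theta^2}{4\lambda^2}\|\xi_2\widetilde f\|_{L^2(\mathbb{R}^2)}^2,
\end{equation*}
and similarly for $\widehat q_2$; this gives $\|f\|_{\mathcal B}\lesssim\|f\|_{\alpha}$ for real $f$ with explicit constants, after which the paper passes to complex $f$ by the parallelogram law combined with the asserted symmetry $\|\overline f\|_{\alpha}=\|f\|_{\alpha}$.

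Worse, your programme cannot be completed, because the form inequality you aim at is false for complex $f$: bracket methods cannot yield more than the commutator affords, and $[\widehat q_1,\widehat q_2]=i\theta(\widehat I+\tfrac{E}{\lambda}\widehat x_1)$ grows only \emph{linearly} in $x_1$, while the $\mathcal B$-weight grows like $x_1^4$. Explicitly (say $\lambda,E>0$), for large $c$ put $K_c=\tfrac{2\lambda}{\theta}(\lambda c+Ec^2)$, $s_c^2=\tfrac{\theta}{2\lambda(\lambda+2Ec)}$, and
\begin{equation*}
f_c(x)=N_c\,e^{iK_cx_2}g_c(x),\qquad g_c(x)=\exp\Bigl(-\tfrac{(x_1-c)^2}{2s_c^2}-\tfrac{\lambda^2x_2^2}{\theta}\Bigr),
\end{equation*}
with $N_c$ normalizing. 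The modulation shifts $\xi_2$ by $K_c$, so $\widehat q_1f_c=N_ce^{iK_cx_2}\bigl[(\lambda x_1+Ex_1^2-\lambda c-Ec^2)g_c+\tfrac{i\theta}{2\lambda}\partial_{x_2}g_c\bigr]$; since $g_c$ is real, real and imaginary parts separate and a Gaussian computation gives $\|\widehat q_1f_c\|_{L^2}^2=\tfrac{\theta(\lambda+2Ec)}{4\lambda}+\tfrac{\theta}{4}+O(c^{-2})$, and likewise $\|\widehat q_2f_c\|_{L^2}^2=\tfrac{\theta(\lambda+2Ec)}{4\lambda}+\tfrac{\theta}{4}$. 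Hence $\|f_c\|_{\alpha}^2=O(c)$, whereas $\|f_c\|_{\mathcal B}^2\geq\|\xi_2\widetilde{f_c}\|_{L^2}^2\geq K_c^2\gtrsim c^4$. These are lowest-Landau-level--type states for the variable magnetic field proportional to $\lambda+Ex_1$: inside $\widehat q_1$ the cross term cancels two diagonal terms of order $c^4$ down to the local field strength $\sim c$, while every term of $\widehat q_2$ is $O(c)$ from the start. This is precisely the conflation in your "key point": noncommutativity forbids $\|\widehat uf\|$ and $\|\widehat vf\|$ from vanishing simultaneously, but it does not force them to dominate the weights; and the danger is not at the degeneracy line $x_1=-\lambda/E$ you single out, but at $x_1\to\infty$, where the commutator is large yet still too small compared with the weight. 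The same states defeat your metaplectic/chirp alternative, and --- independently of your proposal --- they show that the symmetry $\|\overline f\|_{\alpha}=\|f\|_{\alpha}$ used in the paper's complexification step fails, since conjugation flips the sign of the nonzero mixed term $\mathrm{Re}\langle(\lambda x_1+Ex_1^2)f,\widehat\xi_2 f\rangle_{L^2}$: the two-sided equivalence is genuinely available only for real $f$, and any complete treatment of the proposition as stated must confront exactly these modulated states.
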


\begin{proof}
We will prove the result for $\alpha =(\widehat{q_1},\widehat{q_2})$. The remaining cases are proved in a similar fashion. By the triangle inequality, and the fact that $(|a|+|b|)^2 \le 2 |a|^2 + 2 |b|^2$, we have
\begin{equation}
\begin{array}{c}
\|f \|_{\alpha}^2 = 2 \|f \|_2^2 + \|\left( \lambda \widehat{x_1}- \frac{\theta}{2 \lambda} \widehat{\xi_2} + E \widehat{x_1}^2 \right) f \|_{L^2 (\mathbb{R}^2)}^2 + \| \left(\lambda \widehat{x_2} + \frac{\theta}{2 \lambda}\widehat{\xi_1} \right)f \|_{L^2 (\mathbb{R}^2)}^2 \leq \\
\\
\leq 2 \|f\|_{L^2 (\mathbb{R}^2)}^2 + \left(| \lambda| ~ \|(x_1 + E x_1^2 / \lambda) f \|_{L^2 (\mathbb{R}^2)} + \left| \frac{\theta}{2 \lambda}\right| ~ \| \xi_2 \widetilde{f}\|_{L^2 (\mathbb{R}^2)} \right)^2+\\
 \\
 + \left(| \lambda| ~ \|x_2  f \|_{L^2 (\mathbb{R}^2)} + \left| \frac{\theta}{2 \lambda}\right| ~ \| \xi_1 \widetilde{f}\|_{L^2 (\mathbb{R}^2)} \right)^2 \leq \\
\\
\leq 2 \|f\|_{L^2 (\mathbb{R}^2)}^2 +2  \lambda^2 ~ \|(x_1 + E x_1^2 / \lambda) f \|_{L^2 (\mathbb{R}^2)}^2 + \frac{\theta^2}{2 \lambda^2} \| \xi_2 \widetilde{f} \|_{L^2 (\mathbb{R}^2)}^2 + \\
\\
+2 \lambda^2 \|x_2  f \|_{L^2 (\mathbb{R}^2)}^2+ \frac{\theta^2}{2 \lambda^2} \| \xi_1 \widetilde{f}\|_{L^2 (\mathbb{R}^2)}^2 \leq C \|f \|_{\mathcal{B}}^2,
\end{array}
\label{eq3.25.5}
\end{equation}
where $C = \max \left\{ 2 \lambda^2 , \frac{\theta^2}{2 \lambda^2}, 1 \right\}$. This shows that $\|f \|_{\alpha} \lesssim \|f \|_{\mathcal{B}}$.

We next prove the converse result. We start by assuming that $f$ is real. It follows that:
\begin{equation}
\begin{array}{c}
\|\widehat{q_1}f \|_{L^2 (\mathbb{R}^2)}^2 = \int_{\mathbb{R}^2} \left| \lambda x_1 f(x) + \frac{i \theta}{2 \lambda} \frac{\partial f(x)}{\partial x_2} + E x_1^2  f(x)  \right|^2 dx= \\
\\
=  \lambda^2 \int_{\mathbb{R}^2} \left(  x_1 + \frac{E}{\lambda} x_1^2  \right)^2 ~ | f(x)|^2 dx  +  \frac{\theta^2}{4 \lambda^2}  \int_{\mathbb{R}^2} \left| \frac{\partial f(x)}{\partial x_2}\right|^2 dx =\\
\\
= \lambda^2 \| (x_1 + E x_1^2 / \lambda) f\|_{L^2 (\mathbb{R}^2)}^2 + \frac{\theta^2}{4 \lambda^2}  \| \xi_2 \widetilde{f}\|_{L^2 (\mathbb{R}^2)}^2.
\end{array}
\label{eq3.25.6}
\end{equation}
Similarly,
\begin{equation}
\|\widehat{q_2}f\|_{L^2 (\mathbb{R}^2)}^2 = \lambda^2 \|x_2 f\|_{L^2 (\mathbb{R}^2)}^2 +  \frac{\theta^2}{4 \lambda^2}  \| \xi_1 \widetilde{f}\|_{L^2 (\mathbb{R}^2)}^2.
\label{eq3.25.7}
\end{equation}
Altogether, we obtain
\begin{equation}
\|f\|_{\alpha}^2 \geq K^2 \|f\|_{\mathcal{B}}^2,
\label{eq3.25.8}
\end{equation}
where $K = min \left\{ | \lambda|, \left|\frac{\theta}{2 \lambda} \right| , 1\right\}$.

We next prove that the previous inequality is also valid, even if $f$ is not real. Let us write $f=f_R + i f_I$, where $f_R$ and $f_I$ are both real. Notice that $\| \overline{f}\|_{\alpha} = \|f\|_{\alpha}$ and the same is valid for $\| \cdot\|_{\mathcal{B}}$. Now assume that for some $f=f_R + i f_I$, (\ref{eq3.25.8}) does not hold so that
\begin{equation}
\|f\|_{\alpha} < K \|f\|_{\mathcal{B}}.
\label{eq3.25.9}
\end{equation}
It then follows from (\ref{eq3.25.8},\ref{eq3.25.9}) and the Parallelogram Law that
\begin{equation}
\begin{array}{c}
2 K^2 \|f\|_{\mathcal{B}}^2 > 2 \|f\|_{\alpha}^2 = \|f_R + i f_I\|_{\alpha}^2+ \|f_R -i f_I\|_{\alpha}^2 = 2 \|f_R \|_{\alpha}^2 + 2\|f_I\|_{\alpha}^2 \geq \\
\\
\geq K^2 \left(  2 \|f_R \|_{\mathcal{B}}^2 + 2\|f_I\|_{\mathcal{B}}^2 \right) = K^2 \left(\|f_R + i f_I\|_{\mathcal{B}}^2+ \|f_R -i f_I\|_{\mathcal{B}}^2  \right)=\\
\\
=2 K^2 \|f\|_{\mathcal{B}}^2
\end{array}
\label{eq3.26.10}
\end{equation}
and we have a contradiction.
\end{proof}

\begin{Lemma}\label{lemma3.3.1}
The spaces $\mathcal{B}^{\alpha} (\mathbb{R}^2)$ are dense subsets of $L^2 (\mathbb{R}^2)$ for all $\alpha \in \mathcal{N}$.
\end{Lemma}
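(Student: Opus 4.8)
The plan is to exhibit a dense subspace of $L^2(\mathbb{R}^2)$ that sits inside every $\mathcal{B}^{\alpha}(\mathbb{R}^2)$; the natural candidate is the Schwartz space $\mathcal{S}(\mathbb{R}^2)$. Note first that $\mathcal{B}^{\alpha}(\mathbb{R}^2) \subseteq L^2(\mathbb{R}^2)$ directly from the definition of $\|\cdot\|_{\alpha}$, since finiteness of $\|f\|_{\alpha}$ forces $f \in L^2(\mathbb{R}^2)$. Because $\mathcal{S}(\mathbb{R}^2)$ is dense in $L^2(\mathbb{R}^2)$, it will therefore suffice to establish the inclusion $\mathcal{S}(\mathbb{R}^2) \subseteq \mathcal{B}^{\alpha}(\mathbb{R}^2)$ for each $\alpha \in \mathcal{N}$.

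To verify this inclusion I would invoke Proposition \ref{prop3.3.1}: the norms $\|\cdot\|_{\alpha}$ are all equivalent to $\|\cdot\|_{\mathcal{B}}$, so the four inclusions $\mathcal{S} \subseteq \mathcal{B}^{\alpha}$ collapse to the single claim that $\|f\|_{\mathcal{B}} < +\infty$ for every $f \in \mathcal{S}(\mathbb{R}^2)$. Inspecting (\ref{eq3.23.2}), this reduces to checking that each of the five weighted $L^2$ norms is finite. The position weights $1$, $(x_1 + E x_1^2/\lambda)^2$ and $x_2^2$ are polynomials in $x$, and a Schwartz function decays faster than any inverse polynomial, so $\left( 1 + (x_1 + E x_1^2/\lambda)^2 + x_2^2 \right)|f(x)|^2$ is integrable over $\mathbb{R}^2$. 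For the two momentum terms one uses that the Fourier transform maps $\mathcal{S}(\mathbb{R}^2)$ into itself, whence $\widetilde{f} \in \mathcal{S}(\mathbb{R}^2)$ and $\left( 1 + \xi_1^2 + \xi_2^2 \right)|\widetilde{f}(\xi)|^2$ is integrable by the same rapid-decay argument. Thus $\|f\|_{\mathcal{B}} < +\infty$, giving $f \in \mathcal{B}(\mathbb{R}^2)$ and hence $f \in \mathcal{B}^{\alpha}(\mathbb{R}^2)$.

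Density then follows at once: given $g \in L^2(\mathbb{R}^2)$ and $\epsilon > 0$, density of $\mathcal{S}(\mathbb{R}^2)$ in $L^2(\mathbb{R}^2)$ produces $f \in \mathcal{S}(\mathbb{R}^2) \subseteq \mathcal{B}^{\alpha}(\mathbb{R}^2)$ with $\|g - f\|_{L^2(\mathbb{R}^2)} < \epsilon$, which is precisely the statement that $\mathcal{B}^{\alpha}(\mathbb{R}^2)$ is dense in $L^2(\mathbb{R}^2)$.

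I expect no real obstacle here; the only point that merits a second glance is that the non-linear weight $(x_1 + E x_1^2/\lambda)^2$ grows like $x_1^4$, but this is still polynomial and therefore harmless against Schwartz decay. If one preferred to bypass Proposition \ref{prop3.3.1}, the inclusion $\mathcal{S} \subseteq \mathcal{B}^{\alpha}$ could instead be checked directly from the differential representation (\ref{eq3.9}), observing that each of $\widehat{u}, \widehat{v}$ is a differential operator with polynomial coefficients and hence maps $\mathcal{S}(\mathbb{R}^2)$ into $L^2(\mathbb{R}^2)$; routing through the equivalence of norms is simply the most economical path.
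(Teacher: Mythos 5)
Your proposal is correct and follows exactly the paper's argument: the paper's own proof is the one-line observation that $\mathcal{S}(\mathbb{R}^2) \subset \mathcal{B}^{\alpha}(\mathbb{R}^2) \subset L^2(\mathbb{R}^2)$ together with density of $\mathcal{S}(\mathbb{R}^2)$ in $L^2(\mathbb{R}^2)$. Your write-up simply fills in the verification of the inclusion $\mathcal{S}(\mathbb{R}^2) \subseteq \mathcal{B}^{\alpha}(\mathbb{R}^2)$ (via the norm equivalence of Proposition \ref{prop3.3.1} and rapid decay of Schwartz functions and their Fourier transforms), which the paper leaves implicit.
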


\begin{proof}
This is a trivial consequence of the facts that $\mathcal{S} (\mathbb{R}^2) \subset \mathcal{B}^{\alpha} (\mathbb{R}^2) \subset L^2 (\mathbb{R}^2) $ and that
$\mathcal{S} (\mathbb{R}^2)$ is dense in $L^2(\mathbb{R}^2)$.
\end{proof}

\begin{Proposition}\label{prop3.4}
Let $\alpha = (\widehat{u},\widehat{v}) \in \mathcal{N}$. Then $\widehat{u}$ and $\widehat{v}$ are bounded linear operators from $\mathcal{B}^{\alpha} (\mathbb{R}^2) \to L^2 (\mathbb{R}^2)$.
\end{Proposition}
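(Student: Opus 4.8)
The plan is to observe that boundedness is essentially built into the very definition of the norm $\|\cdot\|_{\alpha}$, so that the statement follows from a one-line estimate together with a remark on linearity. First I would recall that, by definition (\ref{eq3.23}),
\[
\|f\|_{\alpha}^2 = 2\|f\|_{L^2(\mathbb{R}^2)}^2 + \|\widehat{u}f\|_{L^2(\mathbb{R}^2)}^2 + \|\widehat{v}f\|_{L^2(\mathbb{R}^2)}^2,
\]
and that, as already noted in the proof of Proposition \ref{prop3.3}, $\mathcal{B}^{\alpha}(\mathbb{R}^2) = \text{Dom}(\widehat{u}) \cap \text{Dom}(\widehat{v})$. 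In particular, for every $f \in \mathcal{B}^{\alpha}(\mathbb{R}^2)$ the images $\widehat{u}f$ and $\widehat{v}f$ are well-defined elements of $L^2(\mathbb{R}^2)$, so $\widehat{u}$ and $\widehat{v}$ indeed map $\mathcal{B}^{\alpha}(\mathbb{R}^2)$ into $L^2(\mathbb{R}^2)$.

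The linearity of $\widehat{u}$ and $\widehat{v}$ is immediate from the explicit differential representation (\ref{eq3.9}), since each of $\widehat{q}_1, \widehat{q}_2, \widehat{p}_1, \widehat{p}_2$ is a linear combination of multiplication operators and first-order derivatives. For boundedness, the key step is simply to discard the two nonnegative terms on the right-hand side that do not involve $\widehat{u}$. Since $2\|f\|_{L^2(\mathbb{R}^2)}^2 \ge 0$ and $\|\widehat{v}f\|_{L^2(\mathbb{R}^2)}^2 \ge 0$, one has
\[
\|\widehat{u}f\|_{L^2(\mathbb{R}^2)}^2 \le \|f\|_{\alpha}^2,
\]
and taking square roots gives $\|\widehat{u}f\|_{L^2(\mathbb{R}^2)} \le \|f\|_{\alpha}$ for all $f \in \mathcal{B}^{\alpha}(\mathbb{R}^2)$. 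This is precisely the statement that $\widehat{u}\colon \mathcal{B}^{\alpha}(\mathbb{R}^2) \to L^2(\mathbb{R}^2)$ is bounded, in fact with operator norm at most $1$. The identical argument, discarding instead the term $\|\widehat{u}f\|_{L^2(\mathbb{R}^2)}^2$, yields $\|\widehat{v}f\|_{L^2(\mathbb{R}^2)} \le \|f\|_{\alpha}$.

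There is no genuine obstacle here: the proposition is a direct consequence of having defined the norm on $\mathcal{B}^{\alpha}(\mathbb{R}^2)$ precisely so as to dominate $\|\widehat{u}f\|_{L^2(\mathbb{R}^2)}$ and $\|\widehat{v}f\|_{L^2(\mathbb{R}^2)}$. The only point worth stating explicitly is the verification that the domain condition holds, i.e. that $f \in \mathcal{B}^{\alpha}(\mathbb{R}^2)$ forces $\widehat{u}f, \widehat{v}f \in L^2(\mathbb{R}^2)$, which is exactly the identification $\mathcal{B}^{\alpha}(\mathbb{R}^2) = \text{Dom}(\widehat{u}) \cap \text{Dom}(\widehat{v})$ used above; everything else is the trivial monotonicity estimate.
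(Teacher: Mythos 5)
Your proof is correct and is essentially identical to the paper's own argument: both obtain $\|\widehat{u}f\|_{L^2(\mathbb{R}^2)} \leq \|f\|_{\alpha}$ by discarding the remaining nonnegative terms in $\|f\|_{\alpha}^2$, giving operator norm at most $1$. Your additional remarks on well-definedness and linearity are fine but not needed beyond what the paper states.
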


\begin{proof}
This follows straightforwardly from
\begin{equation}
\|\widehat{u} f \|_{L^2 (\mathbb{R}^2)} \leq \left( 2 \| f \|_{L^2 (\mathbb{R}^2)}^2 + \|\widehat{u} f \|_{L^2 (\mathbb{R}^2)}^2 + \|\widehat{v} f\|_{L^2 (\mathbb{R}^2)}^2\right)^{\frac{1}{2}} = \| f \|_{\alpha}
\label{eq3.26}
\end{equation}
and the same for $\widehat{v}$. In particular, we have $\|\widehat{u}\|_{op} \leq 1$ and $\|\widehat{v} \|_{op} \leq 1$.
\end{proof}

\section{Uncertainty principle for the algebra}

Let us apply the inequality (\ref{eq1.1}) to our algebra acting on an appropriate subset of $\mathcal{B} (\mathbb{R}^2)$ where all quantities are well defined. We then get for a normalized $f$:
\begin{equation}
\begin{array}{l}
\Delta_{q_1} (f,a_1) \Delta_{q_2} (f,a_2) \geq \frac{\theta}{2} \left| 1+ \theta \langle f, \widehat{R} f \rangle_{L^2 (\mathbb{R}^2)} \right|\\
\\
\Delta_{q_1} (f,a_1) \Delta_{p_1} (f,b_1) \geq \frac{1}{2} \left| 1+ \theta (1+ \sqrt{1- \xi}) \langle f, \widehat{R} f  \rangle_{L^2 (\mathbb{R}^2)} \right|\\
\\
\Delta_{q_2} (f,a_2) \Delta_{p_2} (f,b_2) \geq \frac{1}{2} \left| 1+ \theta (1+ \sqrt{1- \xi}) \langle f, \widehat{R} f  \rangle_{L^2 (\mathbb{R}^2)} \right|\\
\\
\Delta_{p_1} (f,b_1) \Delta_{p_2} (f,b_2) \geq \frac{1}{2} \left| \eta + \left(1+   \sqrt{1- \xi} \right)^2 \langle f, \widehat{R} f  \rangle_{L^2 (\mathbb{R}^2)} \right|
\end{array}
\label{eq6.1}
\end{equation}
Using a suitable phase-space translation, one can choose the expectation value of $\widehat{R}$, so that the right-hand sides in (\ref{eq6.1}) vanish separately. So these inequalities pose no minimum bound on the product of the dispersions. This is in agreement with our analysis in section 2.

We will use the uncertainty measure defined in (\ref{eq2.10}) as our measure of uncertainty for the following reasons:

\vspace{0.2 cm}
\begin{enumerate}
\item As explained in section 2, it is a more useful measure of uncertainty, when compared with the product of dispersions, if one has composite operators of the fundamental Heisenberg-Weyl position and momentum observables. This is precisely the case at hand. Indeed, from the isomorphism (\ref{eq3.5}), we conclude that the operators $\widehat{q}_1, ~\widehat{q}_2, ~\widehat{p}_1,~\widehat{p}_2$ are composite operators of the fundamental Heisenberg-Weyl position and momentum operators $\widehat{x}_1, ~\widehat{x}_2, ~\widehat{\xi}_1, ~\widehat{\xi}_2$, for which the products of dispersions can be made arbitrarily small.

\item We will also see that this measure facilitates the calculation of the minima in comparison with measures of the Cowling-Price type, which lead to highly non-linear Euler-Lagrange equations.

\item Entropic uncertainty measures such as (\ref{eqCorrectionsC3}) are difficult to use in this setting. Recall that, for the ordinary Heisenberg-Weyl algebra $\left[\widehat{x}_j,\widehat{\xi}_k \right] = i \delta_{j,k}$, we consider the position and momentum measures $|f(x_1,x_2)|^2$, $|\widetilde{f} (\xi_1,\xi_2)|^2$, and the associated entropies $E\left(|f|^2\right)$ and $E \left(|\widetilde{f}|^2 \right)$. We go from the position to the momentum representation via the Fourier transform $\widetilde{f} = \mathcal{F} f$. On the other hand, for our noncommutative non-canonical algebra, $\widehat{q}_1$ does not commute with $\widehat{q}_2$ nor with $\widehat{p}_1$, so it is not at all clear how one could construct entropy measures for these variables. Even if we consider the isomorphism (\ref{eq3.5}) and express them in terms of the Heisenberg-Weyl variables, the isomorphism is non-linear.
\end{enumerate}

Let then $\alpha = (\widehat{u}, \widehat{v}) \in \mathcal{N}$, and let us define the functional:
\begin{equation}
F^{(\alpha)} : \mathcal{B}^{\alpha} (\mathbb{R}^2) \to \mathbb{R}, \hspace{1 cm} F^{(\alpha)} \left[f \right] := \|\widehat{u} f \|_{L^2 (\mathbb{R}^2)}^2 + ||\widehat{v} \psi||_{L^2 (\mathbb{R}^2)}^2.
\label{eq6.2}
\end{equation}

\subsection{Existence of minima}

In this section, we start by proving that for each $\alpha \in \mathcal{N}$, there exists $f_0 \in \mathcal{B}^{\alpha} (\mathbb{R}^2) $ with $\|f_0 \|_{L^2 (\mathbb{R}^2)} =1$ minimizing $F^{(\alpha)}$, that is:
\begin{equation}
F^{(\alpha)} \left[f_0 \right] \leq F^{(\alpha)} \left[f \right] , \hspace{1 cm~} \mbox{ for all } f \in \mathcal{B}^{\alpha} (\mathbb{R}^2) \mbox{ with } \| f \|_{L^2 (\mathbb{R}^2)} =1.
\label{eq6.1.1}
\end{equation}
We thus look for the minimizer of $F^{(\alpha)}$ in the set
\begin{equation}
S:=   \left\{f \in \mathcal{B}^{\alpha} (\mathbb{R}^2): ~ \| f \|_{L^2 (\mathbb{R}^2)} =1 \right\}.
\label{eq6.1.2}
\end{equation}

\vspace{0.3 cm}
\noindent
Let us denote by $\overline{B}_R^{(\alpha)}$ the closed ball of radius $R>0$ in $\mathcal{B}^{\alpha} (\mathbb{R}^2)$:
\begin{equation}
\overline{B}_R^{(\alpha)}:= \left\{f \in \mathcal{B}^{\alpha} (\mathbb{R}^2): ~\| f\|_{\alpha} \leq R \right\}.
\label{eqClosedBall}
\end{equation}

\begin{Proposition}\label{PropositionCompactnessHilbert1}
Let $R>0$. If the set
\begin{equation}
U_R^{(\alpha)}= \left\{ f \in \overline{B}_R^{(\alpha)}: ~ \|f \|_{L^2(\mathbb{R}^2)} =1 \right\}
\label{eqCompactnessHilbert2}
\end{equation}
is nonempty, then it is a weakly sequentially compact subset of $\mathcal{B}^{\alpha} (\mathbb{R}^2)$.
\end{Proposition}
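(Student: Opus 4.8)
The plan is to lean on the Hilbert space structure of $\mathcal{B}^{\alpha}(\mathbb{R}^2)$ established in Proposition \ref{prop3.3}, supplemented by a compact embedding of $\mathcal{B}^{\alpha}(\mathbb{R}^2)$ into $L^2(\mathbb{R}^2)$ that upgrades weak convergence in $\mathcal{B}^{\alpha}$ to strong convergence in $L^2$. The point is that the closed ball $\overline{B}_R^{(\alpha)}$ is weakly sequentially compact for free: every Hilbert space is reflexive, so bounded sequences possess weakly convergent subsequences. What this does \emph{not} by itself deliver is that the normalization $\|f\|_{L^2(\mathbb{R}^2)}=1$ survives the passage to a weak limit, since a priori the $L^2$-mass could leak to spatial or to frequency infinity. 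Ruling out exactly this leakage is the role of the compact embedding.

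Concretely, I would take an arbitrary sequence $(f_n)_{n\in\mathbb{N}}$ in $U_R^{(\alpha)}$. Since $\|f_n\|_{\alpha}\leq R$ for every $n$, the sequence is bounded in $\mathcal{B}^{\alpha}(\mathbb{R}^2)$, and reflexivity yields a subsequence $(f_{n_k})$ together with an element $f\in\mathcal{B}^{\alpha}(\mathbb{R}^2)$ such that $f_{n_k}\rightharpoonup f$. Weak lower semicontinuity of the norm then gives $\|f\|_{\alpha}\leq\liminf_k\|f_{n_k}\|_{\alpha}\leq R$, so $f\in\overline{B}_R^{(\alpha)}$. To show that $f$ stays on the unit sphere of $L^2$, I would invoke the compact embedding $\mathcal{B}^{\alpha}(\mathbb{R}^2)\subset\subset L^2(\mathbb{R}^2)$: a compact embedding sends weakly convergent sequences to strongly convergent ones, so $f_{n_k}\to f$ strongly in $L^2(\mathbb{R}^2)$, whence $\|f\|_{L^2(\mathbb{R}^2)}=\lim_k\|f_{n_k}\|_{L^2(\mathbb{R}^2)}=1$. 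Therefore $f\in U_R^{(\alpha)}$, which establishes weak sequential compactness.

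The main obstacle is justifying the compact embedding $\mathcal{B}^{\alpha}(\mathbb{R}^2)\subset\subset L^2(\mathbb{R}^2)$. Here I would first invoke the norm equivalence $\|\cdot\|_{\alpha}\asymp\|\cdot\|_{\mathcal{B}}$ of Proposition \ref{prop3.3.1} to replace the four spaces $\mathcal{B}^{\alpha}$ by the single space $\mathcal{B}(\mathbb{R}^2)$, whose norm simultaneously controls a position weight $1+(x_1+Ex_1^2/\lambda)^2+x_2^2$ and a frequency weight $1+\xi_1^2+\xi_2^2$. The mechanism behind compactness is the Riesz--Kolmogorov--Pego criterion: the position weight, tending to $+\infty$ at spatial infinity, enforces uniform tightness of the $L^2$-mass in $x$, while the frequency weight, tending to $+\infty$ at frequency infinity, enforces uniform tightness in $\xi$, and tightness in both variables forces precompactness in $L^2(\mathbb{R}^2)$. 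This is exactly the content of the weighted modulation space embedding theorems of Boggiatto--Toft \cite{Boggiatto} and Pfeuffer--Toft \cite{Pfeuffer}, which I would cite. The only point requiring genuine care is that the nonlinear expression $x_1+Ex_1^2/\lambda$ still produces a \emph{proper} position weight: one must check that $1+(x_1+Ex_1^2/\lambda)^2+x_2^2\to+\infty$ as $|(x_1,x_2)|\to\infty$, which holds because boundedness of both $x_1+Ex_1^2/\lambda$ and $x_2$ forces boundedness of $x_1$ and $x_2$ separately, so no escape direction survives.
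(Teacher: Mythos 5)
Your proposal is correct and takes essentially the same route as the paper: reflexivity of the Hilbert space $\mathcal{B}^{\alpha}(\mathbb{R}^2)$ to extract a weakly convergent subsequence, weak closedness of the ball $\overline{B}_R^{(\alpha)}$, and the Boggiatto--Toft compact embedding $\mathcal{B}^{\alpha}(\mathbb{R}^2)=M_m^2(\mathbb{R}^2)\subset\subset L^2(\mathbb{R}^2)$ (the paper's Theorem \ref{TheoremCompactEmbeddingD}, including the same check that the position weight blows up despite the nonlinear term $x_1+Ex_1^2/\lambda$) to carry the constraint $\|f\|_{L^2(\mathbb{R}^2)}=1$ to the limit. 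The only difference is at the identification step: you invoke the standard fact that a compact embedding is completely continuous, so the strong $L^2$ limit of the subsequence is automatically its weak limit $f$, whereas the paper extracts a further strongly convergent subsequence with limit $h$ and then proves $h=g$ a.e.\ by hand, via the Riesz-representation operator $\widehat{A}$ and density of $\mathcal{S}(\mathbb{R}^2)$ --- your shortcut is cleaner and equally rigorous.
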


\begin{proof}
Suppose that $U_R^{(\alpha)}$ is nonempty. Let $(f_n)_n$ be an arbitrary sequence in $U_R^{(\alpha)}$. Since $\mathcal{B}^{\alpha} (\mathbb{R}^2)$ is reflexive (it is a Hilbert space), we conclude that $(f_n)_n$ has a weakly convergent subsequence $(g_k)_k$, say
\begin{equation}
g_k \rightharpoonup g,
\label{eqCompactnessHilbert3}
\end{equation}
for some $g \in \mathcal{B}^{\alpha} (\mathbb{R}^2)$, and by Mazur's Theorem $g \in  \overline{B}_R^{(\alpha)}$. It remains to prove that $\|g \|_{L^2 (\mathbb{R}^2)}=1$. From Theorem \ref{TheoremCompactEmbeddingD} (see Appendix B), we conclude that the sequence $(g_k)_k$ has a subsequence $(h_l)_l$ converging strongly in $ L^2 (\mathbb{R}^2)$, say $\|h_l -h \|_{L^2(\mathbb{R}^2)} \to 0$, for some $h \in L^2(\mathbb{R}^2)$. By the continuity of the norm, we also have $\|h \|_{L^2(\mathbb{R}^2)} =1$. The proof is complete if we show that $g=h$ a.e.

The mapping $(u,v) \mapsto \langle u,v\rangle_{L^2 (\mathbb{R}^2)}$ is a sesquilinear form on $\mathcal{B}^{\alpha} (\mathbb{R}^2) \times \mathcal{B}^{\alpha} (\mathbb{R}^2)$. Moreover, it is bounded as we now prove. From the orthogonality relations (\ref{eqModulationspace4}) and the Cauchy-Schwarz inequality, we have:
\begin{equation}
\begin{array}{c}
\left|\langle u,v\rangle_{L^2 (\mathbb{R}^2)} \right| =\frac{1}{\|g\|_{L^2 (\mathbb{R}^2)}^2} \left| \langle V_g u, V_g v \rangle_{L^2 (\mathbb{R}^4)}\right| \\
\\
\precsim \| V_g u\|_{L^2 (\mathbb{R}^4)} ~ \| V_g v\|_{L^2 (\mathbb{R}^4)} \\
\\
\precsim \|m V_g u\|_{L^2 (\mathbb{R}^4)} ~ \|m V_g v\|_{L^2 (\mathbb{R}^4)} = \|u\|_{M_m^2 (\mathbb{R}^2)} ~ \|v\|_{M_m^2 (\mathbb{R}^2)} \asymp \|u\|_{\alpha} ~ \|v\|_{\alpha},
\end{array}
\label{eqCompactnessHilbert5}
\end{equation}
where $m$ is given by (\ref{eqDModulation2}). By the Riesz representation theorem, there exists a bounded linear operator $\widehat{A}:\mathcal{B}^{\alpha} (\mathbb{R}^2) \to \mathcal{B}^{\alpha} (\mathbb{R}^2)$, such that
\begin{equation}
\langle u,v\rangle_{L^2 (\mathbb{R}^2)}=\langle\widehat{A}  u,v\rangle_{\alpha},
\label{eqCompactnessHilbert6}
\end{equation}
for all $u,v \in \mathcal{B}^{\alpha} (\mathbb{R}^2)$.

Let $u \in \mathcal{S} (\mathbb{R}^2)$. We then have:
\begin{equation}
\langle h_l-g,u\rangle_{L^2 (\mathbb{R}^2)}=\langle\widehat{A} ( h_l-g), u\rangle_{\alpha}.
\label{eqCompactnessHilbert7}
\end{equation}
If we take the limit $l \to \infty$, the right-hand side of the previous equation vanishes, while the left-hand side becomes $\langle h-g,u\rangle_{L^2 (\mathbb{R}^2)}$. Since $\mathcal{S} (\mathbb{R}^2)$ is dense in $L^2 (\mathbb{R}^2)$, we conclude that $h=g$ a.e.
\end{proof}

\begin{Lemma}\label{Lemmaweaklowersemicontinuous}
Let $R>0$ be such that $U_R^{(\alpha)}$ as defined in Proposition \ref{PropositionCompactnessHilbert1} is nonempty. Then, the functional $F^{(\alpha)}$, given by (\ref{eq6.2}), is weakly lower semicontinuous in $U_R^{(\alpha)}$.
\end{Lemma}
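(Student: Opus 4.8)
The plan is to exploit the elementary identity relating $F^{(\alpha)}$ to the two norms already at our disposal. From the definition of $\|\cdot\|_{\alpha}$ in (\ref{eq3.23}) one has, for every $f \in \mathcal{B}^{\alpha}(\mathbb{R}^2)$,
\[
F^{(\alpha)}[f] = \|\widehat{u}f\|_{L^2(\mathbb{R}^2)}^2 + \|\widehat{v}f\|_{L^2(\mathbb{R}^2)}^2 = \|f\|_{\alpha}^2 - 2\|f\|_{L^2(\mathbb{R}^2)}^2 .
\]
This rewriting splits $F^{(\alpha)}$ into a term that is automatically well behaved under weak convergence, namely $\|f\|_{\alpha}^2$, and a term carrying the opposite sign, $-2\|f\|_{L^2(\mathbb{R}^2)}^2$. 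So the first step is to recall that in any Hilbert space the square of the norm is weakly lower semicontinuous: if $f_n \rightharpoonup f$ in $\mathcal{B}^{\alpha}(\mathbb{R}^2)$, then $\|f\|_{\alpha}^2 \le \liminf_{n} \|f_n\|_{\alpha}^2$ (this follows, for instance, from $\|f\|_{\alpha} = \sup_{\|\phi\|\le 1}|\phi(f)|$, each $\phi$ being weakly continuous).

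The delicate term is $-2\|f\|_{L^2(\mathbb{R}^2)}^2$. Since the square of the $L^2$-norm is itself only weakly lower semicontinuous, its negative is weakly upper semicontinuous, and a difference of two weakly lower semicontinuous functionals need not be weakly lower semicontinuous. This is precisely where the compact embedding enters and is, I expect, the crux of the argument. The key observation is that weak convergence $f_n \rightharpoonup f$ in $\mathcal{B}^{\alpha}(\mathbb{R}^2)$ actually forces strong convergence $f_n \to f$ in $L^2(\mathbb{R}^2)$. Indeed, a weakly convergent sequence is bounded in $\mathcal{B}^{\alpha}(\mathbb{R}^2)$, so by the compact embedding $\mathcal{B}^{\alpha}(\mathbb{R}^2) \subset\subset L^2(\mathbb{R}^2)$ of Theorem \ref{TheoremCompactEmbeddingD} every subsequence has a further subsequence converging strongly in $L^2(\mathbb{R}^2)$; the continuous embedding $\mathcal{B}^{\alpha}(\mathbb{R}^2) \hookrightarrow L^2(\mathbb{R}^2)$ identifies that strong limit with the weak limit $f$, and a standard subsequence argument upgrades this to strong convergence of the whole sequence. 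Consequently $\|f_n\|_{L^2(\mathbb{R}^2)}^2 \to \|f\|_{L^2(\mathbb{R}^2)}^2$, i.e. the offending term is in fact weakly continuous along the sequence.

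It then only remains to assemble the two estimates. Writing $F^{(\alpha)}[f_n] = \|f_n\|_{\alpha}^2 - 2\|f_n\|_{L^2(\mathbb{R}^2)}^2$ and using that the subtracted term converges, I would pass to $\liminf$ via the elementary rule that adding a convergent sequence commutes with $\liminf$, obtaining
\[
\liminf_{n} F^{(\alpha)}[f_n] = \liminf_{n}\|f_n\|_{\alpha}^2 - 2\|f\|_{L^2(\mathbb{R}^2)}^2 \ge \|f\|_{\alpha}^2 - 2\|f\|_{L^2(\mathbb{R}^2)}^2 = F^{(\alpha)}[f],
\]
which is exactly weak lower semicontinuity of $F^{(\alpha)}$ on $U_R^{(\alpha)}$. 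The hypothesis $U_R^{(\alpha)} \ne \emptyset$ is used only to ensure that the statement is not vacuous and that the sequences considered live in the set, while the normalization $\|f_n\|_{L^2(\mathbb{R}^2)}=1$ plays no essential role in the argument. The only genuine subtlety, as stressed above, is the sign of the $L^2$-term, which is neutralized by the compactness of the embedding rather than by any convexity of $F^{(\alpha)}$.
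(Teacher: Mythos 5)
Your proof is correct, but it takes a genuinely different route from the paper's. The paper never decomposes $F^{(\alpha)}$: it observes (via Proposition \ref{prop3.4}) that $f \mapsto \|\widehat{u}f\|_{L^2(\mathbb{R}^2)}$ and $f \mapsto \|\widehat{v}f\|_{L^2(\mathbb{R}^2)}$ are \emph{continuous convex} functionals on the closed convex ball $\overline{B}_R^{(\alpha)}$, invokes the standard fact that a strongly continuous convex functional is weakly lower semicontinuous (a Mazur-type argument), and then closes up using that squares and sums of nonnegative weakly lower semicontinuous functionals are again weakly lower semicontinuous. In particular, the paper's argument for this lemma is ``soft'': compactness plays no role in it (the compact embedding enters only in Proposition \ref{PropositionCompactnessHilbert1}). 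You instead write $F^{(\alpha)}[f] = \|f\|_{\alpha}^2 - 2\|f\|_{L^2(\mathbb{R}^2)}^2$ --- the identity the paper itself uses in (\ref{eqMinimizersHilbert4}); note that (\ref{eq3.23}) as printed has a typo, the $L^2$ term there should be squared, as (\ref{eq3.24}) confirms --- use weak lower semicontinuity of the Hilbert norm, and neutralize the negative term by upgrading weak $\mathcal{B}^{\alpha}$-convergence to strong $L^2$-convergence via the compact embedding of Theorem \ref{TheoremCompactEmbeddingD}. Your limit identification is sound: pairing in $L^2$ against a fixed function is a bounded functional on $\mathcal{B}^{\alpha}(\mathbb{R}^2)$, so weak convergence in $\mathcal{B}^{\alpha}(\mathbb{R}^2)$ implies weak convergence in $L^2(\mathbb{R}^2)$, uniqueness of weak limits pins down the strong sublimits, and the subsequence principle upgrades to convergence of the whole sequence. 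What your route buys: a stronger conclusion ($F^{(\alpha)}$ is weakly lower semicontinuous on all of $\mathcal{B}^{\alpha}(\mathbb{R}^2)$, with the $L^2$ term in fact weakly \emph{continuous}), and it mirrors the classical direct-method argument where Rellich--Kondrachov compactness handles the lower-order term. What the paper's route buys: it needs neither compactness nor subsequence extraction, so it would survive in situations where $\widehat{u},\widehat{v}$ are merely bounded operators into $L^2(\mathbb{R}^2)$ and no compact embedding is available.
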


\begin{proof}
Consider the maps
\begin{equation}
f(x) \mapsto \left(\widehat{u}f \right)(x), \hspace{0.5 cm} f (x) \mapsto  \left(\widehat{v}f \right)(x),
\label{eqOptimizers1}
\end{equation}
for $x \in \mathbb{R}^2$ and $f \in \mathcal{S} (\mathbb{R}^2)$. They extend to bounded linear operators in $\mathcal{B}^{\alpha} (\mathbb{R}^{2}) = \mathcal{B} (\mathbb{R}^{2}) = M_m^2 (\mathbb{R}^2)$. Thus, the map $f \mapsto \|\widehat{u} f\|_{L^2(\mathbb{R}^2)}$ is a continuous and convex functional on the closed ball $\overline{B}_R^{(\alpha)}$, which is a convex and closed subset of $\mathcal{B}^{(\alpha)} (\mathbb{R}^{2})$. This means that this map is weakly lower semicontinuous in $\overline{B}_R^{(\alpha)}$.

On the other hand, by Proposition \ref{PropositionCompactnessHilbert1}, $U_R^{(\alpha)}$ is a weakly sequentially compact subset of $\mathcal{B}^{\alpha} (\mathbb{R}^2)$. Hence the restriction of $f \mapsto \|\widehat{u} f\|_{L^2(\mathbb{R}^2)}$ to $U_R^{(\alpha)} \subset \overline{B}_R^{(\alpha)}$ is  weakly lower semicontinuous. The product of two nonnegative weakly lower semicontinuous functionals is again weakly lower semicontinuous, which entails that $f \mapsto \|\widehat{u}  f\|_{L^2(\mathbb{R}^2)}^2$ is weakly lower semicontinuous. The same can be said about $f \mapsto  \|\widehat{v}  f\|_{L^2(\mathbb{R}^2)}^2$. Consequently, $F^{(\alpha)}$, being the sum of these two functionals, is weakly lower semicontinuous in $U_R^{(\alpha)}$.
\end{proof}

\vspace{0.3 cm}
\noindent
We next prove the existence of minimizers.
\begin{Theorem}\label{TheoremMinimizersHilbert}
Let $R>1$ be such that $U_R^{(\alpha)}$ as defined in Proposition \ref{PropositionCompactnessHilbert1} is nonempty. Then there exists $f_0 \in U_R^{(\alpha)}$ such that
\begin{equation}
F^{(\alpha)} \left[f_0 \right]  \leq F^{(\alpha)} \left[f \right],
\label{eqMinimizersHilbert2}
\end{equation}
for all
\begin{equation}
f \in S:= \left\{f \in \mathcal{B}^{\alpha} (\mathbb{R}^2): ~ \|f\|_{L^2 (\mathbb{R}^2)} =1 \right\}.
\label{eqMinimizersHilbert3}
\end{equation}
\end{Theorem}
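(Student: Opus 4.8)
The plan is to apply the direct method of the calculus of variations on $U_R^{(\alpha)}$ and then to promote the resulting minimizer to a minimizer over all of $S$. The two ingredients the direct method requires are exactly Proposition \ref{PropositionCompactnessHilbert1} (weak sequential compactness of $U_R^{(\alpha)}$) and Lemma \ref{Lemmaweaklowersemicontinuous} (weak lower semicontinuity of $F^{(\alpha)}$), so the only genuinely new work lies in relating the infimum over the $\|\cdot\|_{\alpha}$-ball to the infimum over the whole $L^2$-sphere $S$.

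First I would set $m_R := \inf_{f \in U_R^{(\alpha)}} F^{(\alpha)}[f]$, which is well defined, finite and nonnegative because $F^{(\alpha)} \geq 0$ and $U_R^{(\alpha)} \neq \emptyset$. Choosing a minimizing sequence $(f_n)_n \subset U_R^{(\alpha)}$ with $F^{(\alpha)}[f_n] \to m_R$, Proposition \ref{PropositionCompactnessHilbert1} yields a subsequence $f_{n_k} \rightharpoonup f_0$ with $f_0 \in U_R^{(\alpha)}$, and Lemma \ref{Lemmaweaklowersemicontinuous} then gives $F^{(\alpha)}[f_0] \leq \liminf_k F^{(\alpha)}[f_{n_k}] = m_R$. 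Since $f_0 \in U_R^{(\alpha)}$ the reverse inequality is automatic, so $F^{(\alpha)}[f_0] = m_R$, i.e.\ $f_0$ minimizes $F^{(\alpha)}$ over $U_R^{(\alpha)}$.

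The key step is to show $m_R = \inf_{f \in S} F^{(\alpha)}[f]$. Here I would exploit the identity that follows from (\ref{eq3.23}) and (\ref{eq6.2}): every $f \in S$ satisfies $\|f\|_{\alpha}^2 = 2\|f\|_{L^2 (\mathbb{R}^2)}^2 + F^{(\alpha)}[f] = 2 + F^{(\alpha)}[f]$. Since $U_R^{(\alpha)} \subset S$, trivially $m_R \geq \inf_S F^{(\alpha)}$. Conversely, $f_0 \in U_R^{(\alpha)}$ gives $m_R = \|f_0\|_{\alpha}^2 - 2 \leq R^2 - 2$. If some $f \in S$ had $F^{(\alpha)}[f] < m_R$, then $\|f\|_{\alpha}^2 = 2 + F^{(\alpha)}[f] < 2 + m_R \leq R^2$, so $\|f\|_{\alpha} < R$ and hence $f \in U_R^{(\alpha)}$, which would force $F^{(\alpha)}[f] \geq m_R$, a contradiction. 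Thus $F^{(\alpha)}[f] \geq m_R = F^{(\alpha)}[f_0]$ for all $f \in S$, which is precisely (\ref{eqMinimizersHilbert2}).

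The main obstacle is exactly this last interplay: compactness is only available on the norm-ball-restricted set $U_R^{(\alpha)}$, whereas the assertion concerns the unbounded constraint set $S$. The resolution rests on the observation that, on $S$, a small value of $F^{(\alpha)}$ forces a small $\mathcal{B}^{\alpha}$-norm through the relation $\|f\|_{\alpha}^2 = 2 + F^{(\alpha)}[f]$ — equivalently, any $F^{(\alpha)}$-minimizing sequence in $S$ is automatically bounded in $\mathcal{B}^{\alpha}$ and therefore cannot escape the ball. Consequently no competitor lying outside $U_R^{(\alpha)}$ can beat the $U_R^{(\alpha)}$-minimizer, and $f_0$ is a global minimizer over $S$.
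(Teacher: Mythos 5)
Your proof is correct and follows essentially the same route as the paper: the direct method (weak sequential compactness from Proposition \ref{PropositionCompactnessHilbert1} combined with weak lower semicontinuity from Lemma \ref{Lemmaweaklowersemicontinuous}) produces a minimizer $f_0$ on $U_R^{(\alpha)}$, and the identity $F^{(\alpha)}[f] = \|f\|_{\alpha}^2 - 2$ valid on $S$ shows that no competitor outside the ball can beat it. The only cosmetic difference is that you phrase the extension step as a contradiction, whereas the paper argues directly that $F^{(\alpha)}[f] > R^2 - 2 \geq F^{(\alpha)}[f_0]$ for every $f \in S \setminus U_R^{(\alpha)}$; these are the same argument in contrapositive form.
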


\begin{proof}
The set $U_R^{(\alpha)}$ is weakly sequentially compact (cf. Proposition \ref{PropositionCompactnessHilbert1}). Moreover, the functional $F^{(\alpha)} $ is weakly lower semicontinuous. Consequently, there exists a minimizer $f_0$ of $F^{(\alpha)} $ in $U_R^{(\alpha)}$. It remains to prove that $f_0$ is in fact a minimizer on the whole set $S$.

From (\ref{eq6.2}) and (\ref{eqMinimizersHilbert3}), we have:
\begin{equation}
F^{(\alpha)} \left[f \right]= \| f \|_{\alpha}^2 -2 \|f \|_{L^2 (\mathbb{R}^2)}^2 =\| f \|_{\alpha}^2 -2 ,
\label{eqMinimizersHilbert4}
\end{equation}
for all $f \in S$.

Since $f_0 \in U_R^{(\alpha)}$ , it follows:
\begin{equation}
F^{(\alpha)} \left[f_0 \right] \leq R^2- 2.
\label{eqMinimizersHilbert5}
\end{equation}
On the other hand, if $f \in S \backslash U_R^{(\alpha)}$:
\begin{equation}
F^{(\alpha)} \left[f \right] > R^2- 2,
\label{eqMinimizersHilbert6}
\end{equation}
and the result follows.
\end{proof}

\subsection{Euler-Lagrange equations}

We want to minimize the functional $F^{(\alpha)} \left[f \right] = \| f \|_{\alpha}^2 -2 \|f \|_{L^2 (\mathbb{R}^2)}^2$ in $\mathcal{B}^{\alpha} (\mathbb{R}^2)$, subject to the constraint:
\begin{equation}
\|f \|_{L^2 (\mathbb{R}^2)} = 1.
\label{eqEulerLagrange1}
\end{equation}
We thus optimize the functional
\begin{equation}
\mathfrak{L}^{(\alpha)} \left[f , \gamma \right] = F^{(\alpha)} \left[f  \right] + \gamma \left( 1- \|f \|_{L^2 (\mathbb{R}^2)}^2  \right),
\label{eqEulerLagrange2}
\end{equation}
where $\gamma$ is a Lagrange multiplier.

Before we proceed, let us recall that the operator $\widehat{A}$ defined in eq. (\ref{eqCompactnessHilbert6}) is such that:
\begin{equation}
\langle f, g \rangle_{L^2 (\mathbb{R}^2)}= \langle \widehat{A}f,g \rangle_{\alpha},
\label{eqEulerLagrange2.A}
\end{equation}
for all $f,g \in \mathcal{B}^{\alpha} (\mathbb{R}^2)$.

\begin{Theorem}\label{TheoremSpectrum}
The operator $\widehat{A}$ is positive-definite, compact and closed. It has empty residual spectrum, $0$ belongs to the continuous spectrum and it is a point of accumulation. Moreover, all remaining spectral values are eigenvalues.
\end{Theorem}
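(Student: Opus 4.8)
The plan is to exploit the fact that, by its defining relation (\ref{eqEulerLagrange2.A}), $\widehat{A}$ is nothing but $\iota^{\ast}\iota$, where $\iota:\mathcal{B}^{\alpha}(\mathbb{R}^2)\hookrightarrow L^2(\mathbb{R}^2)$ is the canonical embedding and $\iota^{\ast}$ its Hilbert-space adjoint. First I would record the elementary algebraic consequences of (\ref{eqEulerLagrange2.A}). Setting $g=f$ gives $\langle \widehat{A}f,f\rangle_{\alpha}=\|f\|_{L^2(\mathbb{R}^2)}^2\geq 0$, and since an $f\in\mathcal{B}^{\alpha}(\mathbb{R}^2)\subset L^2(\mathbb{R}^2)$ vanishes in $\mathcal{B}^{\alpha}(\mathbb{R}^2)$ as soon as it vanishes in $L^2(\mathbb{R}^2)$, this quadratic form is strictly positive for $f\neq 0$; hence $\widehat{A}$ is positive-definite and, in particular, injective. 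Interchanging $f$ and $g$ in (\ref{eqEulerLagrange2.A}) and conjugating shows $\langle \widehat{A}f,g\rangle_{\alpha}=\langle f,\widehat{A}g\rangle_{\alpha}$, so $\widehat{A}$ is self-adjoint. Being bounded (it was produced by the Riesz representation theorem in (\ref{eqCompactnessHilbert6})) and everywhere defined, $\widehat{A}$ is automatically closed.

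For compactness I would invoke the compact embedding $\mathcal{B}^{\alpha}(\mathbb{R}^2)\subset\subset L^2(\mathbb{R}^2)$ of Theorem \ref{TheoremCompactEmbeddingD}, which says precisely that $\iota$ is a compact operator. Rewriting (\ref{eqEulerLagrange2.A}) as $\langle \iota f,\iota g\rangle_{L^2(\mathbb{R}^2)}=\langle \widehat{A}f,g\rangle_{\alpha}$ identifies $\widehat{A}=\iota^{\ast}\iota$, and since the composition of a compact operator with a bounded one is compact, $\widehat{A}$ is compact.

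The spectral description then follows from the Riesz--Schauder theory of compact self-adjoint operators. Such an operator has real spectrum, empty residual spectrum (for self-adjoint $\widehat{A}$ and real $\lambda$ one has $\overline{\mathrm{Ran}(\widehat{A}-\lambda\widehat{I})}=\mathrm{Ker}(\widehat{A}-\lambda\widehat{I})^{\perp}$, so injectivity of $\widehat{A}-\lambda\widehat{I}$ forces dense range), and every nonzero spectral value is an eigenvalue of finite multiplicity, the eigenvalues forming a sequence that can accumulate only at $0$. It remains to analyze $\lambda=0$. Since $\mathcal{B}^{\alpha}(\mathbb{R}^2)$ is infinite-dimensional (it contains $\mathcal{S}(\mathbb{R}^2)$) and $\widehat{A}$ is compact, $0\in\sigma(\widehat{A})$; the injectivity established above shows $0$ is not an eigenvalue, and the absence of residual spectrum then places $0$ in the continuous spectrum. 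Finally I would argue that there must be infinitely many nonzero eigenvalues, for otherwise the corresponding eigenspaces together with $\mathrm{Ker}(\widehat{A})=\{0\}$ would span a finite-dimensional space, contradicting $\dim\mathcal{B}^{\alpha}(\mathbb{R}^2)=\infty$; hence these eigenvalues accumulate at $0$, exhibiting it as a point of accumulation of the spectrum.

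I expect the only genuinely substantive input to be the compactness of the embedding in Theorem \ref{TheoremCompactEmbeddingD}; once $\widehat{A}=\iota^{\ast}\iota$ is recognized as a positive, compact, self-adjoint operator on an infinite-dimensional Hilbert space, every assertion of the theorem is a standard consequence of the spectral theorem, the only care required being the bookkeeping that separates the single continuous-spectrum point $0$ from the discrete family of nonzero eigenvalues accumulating there.
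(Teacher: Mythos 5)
Your proposal is correct and follows essentially the same route as the paper: positive-definiteness, self-adjointness and closedness are read off directly from the defining relation (\ref{eqEulerLagrange2.A}), compactness is reduced to the compact embedding of Theorem \ref{TheoremCompactEmbeddingD}, and the spectral statements follow from Riesz--Schauder theory together with the finite-rank contradiction that forces infinitely many eigenvalues accumulating at $0$. The only cosmetic difference is that you obtain compactness abstractly via the factorization $\widehat{A}=\iota^{\ast}\iota$ with $\iota$ the compact inclusion, whereas the paper runs the equivalent hands-on estimate showing that $(\widehat{A}g_n)$ is $\|\cdot\|_{\alpha}$-Cauchy along an $L^2$-convergent subsequence.
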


\begin{proof}
From the definition of $\widehat{A}$, we have for all $f,g \in \mathcal{B}^{\alpha} (\mathbb{R}^2)$:
\begin{equation}
\langle \widehat{A}f, g \rangle_{\alpha} = \langle f, g \rangle_{L^2 (\mathbb{R}^2)} = \overline{\langle g,f \rangle}_{L^2 (\mathbb{R}^2)} = \overline{\langle \widehat{A}g,f \rangle }_{\alpha} = \langle f,\widehat{A}g \rangle_{\alpha} .
\label{eqSpectrum1}
\end{equation}
Hence $\widehat{A}=\widehat{A}^{\ast}$.

Similarly
\begin{equation}
\langle \widehat{A} f,f \rangle_{\alpha} = \|f \|_{L^2 (\mathbb{R}^2)}^2 >0,
\label{eqSpectrum2}
\end{equation}
for all $f \in \mathcal{B}^{\alpha} (\mathbb{R}^2) \backslash \left\{0 \right\}$.
Consequently, $\widehat{A}$ is positive definite.

That $\widehat{A}$ is closed is a simple consequence of the fact that it is bounded and defined on the whole of $\mathcal{B}^{\alpha} (\mathbb{R}^2)$.

Next, we prove compactness. Let $\left(f_n \right)_{n \in \mathbb{N}}$ be a
bounded sequence in $\mathcal{B}^{\alpha} (\mathbb{R}^2)$:
\begin{equation}
\|f_n\|_{\alpha} \leq C
\label{eqSpectrum3}
\end{equation}
for some constant $C>0$ and all $n \in \mathbb{N}$. Since $\mathcal{B}^{\alpha} (\mathbb{R}^2) \subset
\subset L^2 (\mathbb{R}^2)$, $\left(f_n \right)_{n \in \mathbb{N}}$ has a subsequence $\left(g_n \right)_{n \in \mathbb{N}}$ which converges in $L^2 (\mathbb{R}^2)$.
It follows that
\begin{equation}
\begin{array}{c}
\|\widehat{A}g_n-\widehat{A}g_m\|_{\alpha}^2 = \langle g_n-g_m,\widehat{A}(g_n-g_m) \rangle_{L^2(\mathbb{R}^2)} \leq\\
\\
\leq \| \widehat{A} (g_n-g_m)\|_{L^2(\mathbb{R}^2)} \| g_n-g_m \|_{L^2(\mathbb{R}^2)}  \leq \\
\\
\leq \| \widehat{A} (g_n-g_m)\|_{\alpha}
 \|g_n-g_m \|_{L^2(\mathbb{R}^2)} \leq\\
 \\
 \leq 2 C \|\widehat{A} \|_{Op} ~ \|g_n-g_m \|_{L^2(\mathbb{R}^2)}
\end{array}
\label{eqSpectrum4}
\end{equation}
This shows that $\left(\widehat{A}g_n \right)_{n \in \mathbb{N}}$ is a Cauchy sequence.
 Since $\mathcal{B}^{\alpha} (\mathbb{R}^2)$ is complete, we conclude that
 $\left(\widehat{A}g_n \right)_{n \in \mathbb{N}}$ converges. Since the bounded sequence
 $\left(f_n \right)_{n \in \mathbb{N}}$ was chosen arbitrarily, the operator $\widehat{A}$
 is compact.

That all non-zero elements of the spectrum are eigenvalues is an immediate consequence of the fact that $\widehat{A}$ is compact. Eq. (\ref{eqSpectrum2}) shows that $\widehat{A}$ is injective. Since $\widehat{A}$ is compact and injective, we conclude that $0$ is in the continuous spectrum and the residual spectrum is empty.

If the spectrum of $\widehat{A}$ were finite, then $\widehat{A}$ would have to be of finite rank. But since $\widehat{A}: \mathcal{B}^{\alpha} (\mathbb{R}^2) \to \text{Ran}(\widehat{A})$ is bijective, this is impossible. Hence, the spectrum is infinite and $0$ must be an accumulation point.
\end{proof}

The operator $\widehat{A}$ is invertible. Its inverse has the following properties.

\begin{Theorem}\label{TheoremInverse}
$\widehat{A}^{-1}$ is densely defined in $\mathcal{B}^{\alpha} (\mathbb{R}^2)$, closed and positive-definite. Its spectrum consists only of eigenvalues which can be written as a sequence $0 < \nu_1 \leq \nu_2 \leq \cdots$, with $\nu_j \to + \infty$. Moreover, all the eigenspaces are finite dimensional.
\end{Theorem}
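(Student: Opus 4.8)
The plan is to build everything on the spectral decomposition of $\widehat{A}$ furnished by Theorem \ref{TheoremSpectrum}. Since $\widehat{A}$ is compact, self-adjoint and positive-definite on the Hilbert space $\mathcal{B}^{\alpha}(\mathbb{R}^2)$, the spectral theorem for compact self-adjoint operators yields an orthonormal basis $(e_j)_{j \in \mathbb{N}}$ of $\mathcal{B}^{\alpha}(\mathbb{R}^2)$, with respect to $\langle \cdot, \cdot \rangle_{\alpha}$, consisting of eigenvectors $\widehat{A} e_j = \mu_j e_j$. Positive-definiteness and injectivity (eq. (\ref{eqSpectrum2})) force every $\mu_j > 0$, while compactness forces $\mu_j \to 0$ and each eigenspace to be finite-dimensional. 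I would order the eigenvalues, counted with multiplicity, so that $\mu_1 \geq \mu_2 \geq \cdots > 0$.

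First I would record that $\widehat{A}^{-1}$ is well defined on $\text{Dom}(\widehat{A}^{-1}) = \text{Ran}(\widehat{A})$ because $\widehat{A}$ is injective. Density of the domain follows from $\overline{\text{Ran}(\widehat{A})} = (\text{Ker}(\widehat{A}^{\ast}))^{\perp} = (\text{Ker}(\widehat{A}))^{\perp} = \mathcal{B}^{\alpha}(\mathbb{R}^2)$, using $\widehat{A} = \widehat{A}^{\ast}$ together with injectivity; equivalently, the dense span of the vectors $e_j = \widehat{A}(\mu_j^{-1} e_j)$ already lies in the range. Because the inverse of an injective self-adjoint operator is again self-adjoint, $\widehat{A}^{-1}$ is self-adjoint and in particular closed; alternatively, closedness is immediate since the graph of $\widehat{A}^{-1}$ is the flip of the (closed) graph of the bounded operator $\widehat{A}$. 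For positive-definiteness, given $g = \widehat{A} f \in \text{Ran}(\widehat{A}) \backslash \{0\}$ with $f = \widehat{A}^{-1} g \neq 0$, I would compute
\begin{equation}
\langle \widehat{A}^{-1} g, g \rangle_{\alpha} = \langle f, \widehat{A} f \rangle_{\alpha} = \langle \widehat{A} f, f \rangle_{\alpha} = \|f\|_{L^2(\mathbb{R}^2)}^2 > 0,
\end{equation}
using $\widehat{A} = \widehat{A}^{\ast}$ and (\ref{eqSpectrum2}).

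For the spectral statement I would set $\nu_j := \mu_j^{-1}$, so that $\widehat{A}^{-1} e_j = \nu_j e_j$ and the ordering $\mu_1 \geq \mu_2 \geq \cdots$ becomes $0 < \nu_1 \leq \nu_2 \leq \cdots$ with $\nu_j \to + \infty$ (since $\mu_j \to 0^{+}$). The eigenspace of $\widehat{A}^{-1}$ for $\nu_j$ coincides with the eigenspace of $\widehat{A}$ for $\mu_j$, because $\widehat{A}^{-1} e = \nu_j e$ is equivalent to $\widehat{A} e = \mu_j e$, and is therefore finite-dimensional by Theorem \ref{TheoremSpectrum}. Finally, to see that the spectrum contains nothing but these eigenvalues, I would use that $(e_j)_j$ is a complete orthonormal basis, so that $\widehat{A}^{-1}$ admits the diagonal representation $\widehat{A}^{-1} g = \sum_j \nu_j \langle g, e_j \rangle_{\alpha} e_j$ on its natural domain. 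Since each nonzero eigenvalue $\mu_j$ of the compact operator $\widehat{A}$ is isolated, the $\nu_j$ have no finite accumulation point and escape to $+\infty$, so $\{\nu_j\}$ is already closed; hence $\sigma(\widehat{A}^{-1}) = \{\nu_j\}$ consists solely of eigenvalues, with neither continuous nor residual spectrum.

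The main obstacle I anticipate is the careful bookkeeping around the unboundedness of $\widehat{A}^{-1}$: one must pin down its domain precisely, justify genuine self-adjointness (rather than mere symmetry) of the unbounded inverse, and rule out continuous and residual spectrum even though $0$ is a continuous-spectrum accumulation point of $\widehat{A}$. All of these become routine once the orthonormal eigenbasis of $\widehat{A}$ is in hand, which is why I would front-load the spectral decomposition; the remaining steps are then essentially translations of statements about $\widehat{A}$ into statements about its inverse.
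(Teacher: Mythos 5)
Your proof is correct, but it runs differently from the paper's at two points, so a comparison is in order. The paper never constructs an eigenbasis. It proves density of $\text{Ran}(\widehat{A})$ concretely: if $f\in\mathcal{B}^{\alpha}(\mathbb{R}^2)$ satisfies $\langle \widehat{A}g,f\rangle_{\alpha}=0$ for all $g\in\mathcal{S}(\mathbb{R}^2)$, the defining relation (\ref{eqCompactnessHilbert6}) turns this into $\langle g,f\rangle_{L^2(\mathbb{R}^2)}=0$, and density of $\mathcal{S}(\mathbb{R}^2)$ in $L^2(\mathbb{R}^2)$ forces $f=0$; this yields a bit more than your abstract identity $\overline{\text{Ran}(\widehat{A})}=(\text{Ker}\,\widehat{A}^{\ast})^{\perp}$, namely that $\widehat{A}(\mathcal{S}(\mathbb{R}^2))$ is already dense in $\mathcal{B}^{\alpha}(\mathbb{R}^2)$. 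For the rest, the paper cites $(\widehat{A}^{-1})^{\ast}=(\widehat{A}^{\ast})^{-1}$ to get self-adjointness and closedness, and declares the spectral assertions to be immediate consequences of Theorem \ref{TheoremSpectrum}, adding only that $0$ is a regular value of $\widehat{A}^{-1}$ because its inverse $\widehat{A}$ is bounded and everywhere defined. You instead front-load the spectral theorem for compact self-adjoint operators, after which density, self-adjointness of the inverse, and the identification $\sigma(\widehat{A}^{-1})=\{\nu_j\}$ all become diagonal computations. Your version is more self-contained precisely where the paper is terse: the sequence $0<\nu_1\leq\nu_2\leq\cdots\to+\infty$, the finite multiplicities, and the exclusion of any continuous or residual spectrum are read off the diagonal representation rather than asserted; it is also purely abstract, valid for any injective, positive, compact operator on a Hilbert space, whereas the paper's density step exploits the specific structure of $\mathcal{B}^{\alpha}(\mathbb{R}^2)$ as a space containing $\mathcal{S}(\mathbb{R}^2)$ and embedded in $L^2(\mathbb{R}^2)$. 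Two small points you should make explicit: the natural domain of your diagonal operator, $\{g:\sum_j\nu_j^2|\langle g,e_j\rangle_{\alpha}|^2<\infty\}$, coincides with $\text{Ran}(\widehat{A})$ (routine, but it is what licenses calling the diagonal formula a representation of $\widehat{A}^{-1}$), and the finite-dimensionality of eigenspaces comes from compactness via Riesz theory rather than from the literal statement of Theorem \ref{TheoremSpectrum}.
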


\begin{proof}
We start by proving that $\text{Ran}(\widehat{A})$ is dense in $ \mathcal{B}^{\alpha} (\mathbb{R}^2)$. Since $\mathcal{B}^{\alpha} (\mathbb{R}^2)=M_m^2 (\mathbb{R}^2)$, with $m \in \mathcal{P} (\mathbb{R}^{4})$, then $\mathcal{S}(\mathbb{R}^{2}) \subset \mathcal{B}^{\alpha} (\mathbb{R}^2) $. Let $f \in \mathcal{B}^{\alpha} (\mathbb{R}^2)$ be such that:
\begin{equation}
0= \langle \widehat{A}g,f \rangle_{\alpha} = \langle g,f \rangle_{L^2 (\mathbb{R}^2)},
\label{eqTheoremInverse1}
\end{equation}
for all $g \in \mathcal{S}(\mathbb{R}^{2})$. Since $\mathcal{S}(\mathbb{R}^{2})$ is dense in $L^2 (\mathbb{R}^2)$, we conclude that $f=0$, and thus $\left\{\widehat{A}g: ~g \in \mathcal{S}(\mathbb{R}^{2}) \right\}$ is dense in $ \mathcal{B}^{\alpha} (\mathbb{R}^2)$.

Under these circumstances and taking into account Theorem \ref{TheoremSpectrum}, we conclude that $\widehat{A}^{-1}$ is densely defined, closed and that $(\widehat{A}^{-1})^{\ast}=(\widehat{A}^{\ast})^{-1}$. Thus $\widehat{A}^{-1}$ is self-adjoint and positivity follows immediately.

The statements regarding the spectrum are also an immediate consequence of Theorem \ref{TheoremSpectrum}. We just remark that $0$ is a regular value of $\widehat{A}^{-1}$, since its inverse $\widehat{A}$ exists, is bounded and defined on the whole $ \mathcal{B}^{\alpha} (\mathbb{R}^2)$.
\end{proof}

We are now in a position to obtain the Euler-Lagrange equation for the minimizer $f_0$.

\begin{Theorem}\label{theoremEulerLagrangeEquations}
Let $f_0 \in S$ satisfy
\begin{equation}
 F^{(\alpha)} [f_0] = \text{min}_{f \in S} F^{(\alpha)} \left[f \right].
\label{eq6.2.1}
\end{equation}
Then
\begin{equation}
\langle \widehat{u}g, \widehat{u}f_0 \rangle_{L^2 (\mathbb{R}^2)} + \langle \widehat{v}g, \widehat{v}f_0 \rangle_{L^2 (\mathbb{R}^2)} = F^{(\alpha)}[f_0] \langle g,f_0 \rangle_{L^2 (\mathbb{R}^2)},
\label{eq6.2.2}
\end{equation}
for all $g \in \mathcal{B}^{\alpha} (\mathbb{R}^2)$.
\end{Theorem}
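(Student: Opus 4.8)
The plan is to derive (\ref{eq6.2.2}) as the first-variation (stationarity) condition of the constrained minimization, using a normalized perturbation family through $f_0$ and then recovering the full complex identity from its real part by a phase rotation of the test function.

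First I would fix an arbitrary $g \in \mathcal{B}^{\alpha}(\mathbb{R}^2)$ and, for real $t$ near $0$, introduce the competitor
\begin{equation}
f_t := \frac{f_0 + t g}{\|f_0 + tg\|_{L^2(\mathbb{R}^2)}}.
\label{eqVarProp1}
\end{equation}
Since $\mathcal{B}^{\alpha}(\mathbb{R}^2)$ is a vector space containing both $f_0$ and $g$, each $f_t$ lies in $\mathcal{B}^{\alpha}(\mathbb{R}^2)$, and because $\|f_0\|_{L^2(\mathbb{R}^2)} = 1$ the denominator is nonzero for $|t|$ small; thus $f_t$ is well defined with $\|f_t\|_{L^2(\mathbb{R}^2)} = 1$, so $f_t \in S$. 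The minimality (\ref{eq6.2.1}) then forces the real function $\phi(t) := F^{(\alpha)}[f_t]$ to attain a local minimum at $t = 0$.

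Next I would compute $\phi$ explicitly. Using that $F^{(\alpha)}$ is a degree-two homogeneous form, one has $F^{(\alpha)}[f_t] = F^{(\alpha)}[f_0 + tg]/\|f_0+tg\|_{L^2(\mathbb{R}^2)}^2$, and expanding the $L^2$ norms in (\ref{eq6.2}) gives, for real $t$,
\begin{equation}
\phi(t) = \frac{F^{(\alpha)}[f_0] + 2t\,\mathrm{Re}\bigl(\langle \widehat{u}g, \widehat{u}f_0\rangle_{L^2(\mathbb{R}^2)} + \langle \widehat{v}g, \widehat{v}f_0\rangle_{L^2(\mathbb{R}^2)}\bigr) + t^2 F^{(\alpha)}[g]}{1 + 2t\,\mathrm{Re}\langle g, f_0\rangle_{L^2(\mathbb{R}^2)} + t^2\|g\|_{L^2(\mathbb{R}^2)}^2}.
\label{eqVarProp2}
\end{equation}
This is a rational function of $t$ whose denominator does not vanish near $0$, hence it is differentiable there; imposing $\phi'(0) = 0$ and simplifying (the denominator equals $1$ at $t=0$) yields
\begin{equation}
\mathrm{Re}\bigl(\langle \widehat{u}g, \widehat{u}f_0\rangle_{L^2(\mathbb{R}^2)} + \langle \widehat{v}g, \widehat{v}f_0\rangle_{L^2(\mathbb{R}^2)}\bigr) = F^{(\alpha)}[f_0]\,\mathrm{Re}\langle g, f_0\rangle_{L^2(\mathbb{R}^2)}.
\label{eqVarProp3}
\end{equation}

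Finally, I would upgrade (\ref{eqVarProp3}) from its real part to the full complex identity. Since $g$ was arbitrary and the inner products are linear in the first slot, replacing $g$ by $ig$ multiplies each term by $i$, converting $\mathrm{Re}$ into $-\mathrm{Im}$; thus (\ref{eqVarProp3}) applied to $ig$ furnishes the imaginary-part relation, and combining the two gives exactly (\ref{eq6.2.2}) for all $g \in \mathcal{B}^{\alpha}(\mathbb{R}^2)$. The only points requiring care are the admissibility of the competitor family (that $f_t$ remains in $S \cap \mathcal{B}^{\alpha}(\mathbb{R}^2)$) and the passage from the real stationarity condition to the complex one via the phase rotation $g \mapsto ig$; the differentiation itself is routine because $\phi$ is rational in $t$. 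I do not anticipate a genuine obstacle here, the content being the standard Lagrange-multiplier first variation adapted to the complex Hilbert space $\mathcal{B}^{\alpha}(\mathbb{R}^2)$.
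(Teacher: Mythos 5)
Your proof is correct, and it reaches (\ref{eq6.2.2}) by a route that is related to, but more self-contained than, the paper's. The paper forms the Lagrangian functional (\ref{eqEulerLagrange2}), appeals to ``standard techniques in variational calculus'' to assert that its Fr\'echet derivative vanishes at $f_0$ with an \emph{undetermined} multiplier $\gamma$ (this is the real-symmetric identity (\ref{eq6.2.3})), performs exactly the same $g \mapsto ig$ phase rotation you use to upgrade the real stationarity condition to the complex one, and only then pins down $\gamma = F^{(\alpha)}[f_0]$ by testing against $g=f_0$. You instead build the constraint directly into the competitor family $f_t=(f_0+tg)/\|f_0+tg\|_{L^2(\mathbb{R}^2)}$, so that $\phi(t)=F^{(\alpha)}[f_t]$ becomes an explicit rational function of the real parameter $t$; the quotient rule then makes the multiplier $F^{(\alpha)}[f_0]$ appear automatically as the coefficient of $\mathrm{Re}\langle g,f_0\rangle_{L^2(\mathbb{R}^2)}$, with no separate identification step. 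What your route buys is precisely this self-containedness: the existence of a Lagrange multiplier at a constrained minimum, which the paper delegates to the cited references and which in principle requires a constraint-qualification argument for the sphere $S$ in the infinite-dimensional space $\mathcal{B}^{\alpha}(\mathbb{R}^2)$, is in your argument a consequence of elementary one-variable calculus, using only that $F^{(\alpha)}$ is a quadratic form and that $\|f_0+tg\|_{L^2(\mathbb{R}^2)}$ is bounded away from zero for small $|t|$. What the paper's formulation buys is brevity and an explicit Lagrangian, which fits naturally with the subsequent reinterpretation of the stationarity condition in Corollary \ref{cor11}, where $f_0$ is identified as a ground state of $\widehat{H}^{(\alpha)}=\widehat{u}^2+\widehat{v}^2$ and the operator $\widehat{A}$ re-enters. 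The phase-rotation step --- the only place where the complex structure of the Hilbert space matters --- is identical in the two arguments.
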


\begin{proof}
Using standard techniques in variational calculus \cite{Evans,Jahn,Jost} we obtain, from the Fr\'echet derivative of the functional (\ref{eqEulerLagrange2}), the following stationarity condition :
\begin{equation}
\begin{array}{c}
\langle \widehat{u}g, \widehat{u}f_0 \rangle_{L^2 (\mathbb{R}^2)} + \langle \widehat{u}f_0, \widehat{u}g \rangle_{L^2 (\mathbb{R}^2)} + \langle \widehat{v}g, \widehat{v}f_0 \rangle_{L^2 (\mathbb{R}^2)} + \langle \widehat{v}f_0, \widehat{v}g \rangle_{L^2 (\mathbb{R}^2)}= \\
\\
=\gamma \langle g,f_0 \rangle_{L^2 (\mathbb{R}^2)} + \gamma \langle f_0,g \rangle_{L^2 (\mathbb{R}^2)},
\end{array}
\label{eq6.2.3}
\end{equation}
for all $g \in \mathcal{B}^{\alpha} (\mathbb{R}^2)$.

Since the previous equation holds for all $g \in \mathcal{B}^{\alpha} (\mathbb{R}^2)$, then in particular it holds for $ig$. Adding eq.(\ref{eq6.2.3}) for $ig$ and the same equation multiplied by $i$, we obtain:
\begin{equation}
\langle \widehat{u}g, \widehat{u}f_0 \rangle_{L^2 (\mathbb{R}^2)} + \langle \widehat{v}g, \widehat{v}f_0 \rangle_{L^2 (\mathbb{R}^2)} = \gamma \langle g,f_0 \rangle_{L^2 (\mathbb{R}^2)},
\label{eq6.2.4}
\end{equation}
for all $g \in \mathcal{B}^{\alpha} (\mathbb{R}^2)$.

Upon substitution of $g= f_0$ in the previous equation, we obtain $\gamma = F^{(\alpha)}[f_0]$.
\end{proof}

\begin{Corollary}\label{cor11}
Let $f_0$ be a minimizer of $F^{(\alpha)}$ on $S$. Then $f_0$ is an eigenvector of the operator $\widehat{H}^{(\alpha)} = \widehat{u}^2 + \widehat{v}^2$ associated with the smallest eigenvalue $\nu_0= F^{(\alpha)} [f_0] $. In other words $f_0$ is the fundamental or ground state of the positive operator $\widehat{H}^{(\alpha)}$. Moreover, the operator $\widehat{H}^{(\alpha)}$ can be identified with $\widehat{A}-2 \widehat{I}$ in $ \text{Dom} (\widehat{H}^{(\alpha)}) $.
\end{Corollary}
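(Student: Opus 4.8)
The plan is to extract the eigenvalue equation directly from the stationarity identity of Theorem~\ref{theoremEulerLagrangeEquations} and then to reconcile it with a form-theoretic definition of $\widehat{H}^{(\alpha)}$. Set $\nu_0 := F^{(\alpha)}[f_0]$ and recall from Lemma~\ref{lemma3.0} that $\widehat{u}$ and $\widehat{v}$ are self-adjoint, so that the nonnegative symmetric sesquilinear form $Q(f,g) := \langle \widehat{u}f, \widehat{u}g \rangle_{L^2 (\mathbb{R}^2)} + \langle \widehat{v}f, \widehat{v}g \rangle_{L^2 (\mathbb{R}^2)}$, with form domain $\mathcal{B}^{\alpha}(\mathbb{R}^2)$, underlies $F^{(\alpha)}$ via $F^{(\alpha)}[f]=Q(f,f)$. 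This form is closed, because $\mathcal{B}^{\alpha}(\mathbb{R}^2)$ is complete in the graph norm $\|\cdot\|_{\alpha}$ (Proposition~\ref{prop3.3}), which is equivalent to the form norm $(Q(f,f)+\|f\|_{L^2 (\mathbb{R}^2)}^2)^{1/2}$. Taking complex conjugates in (\ref{eq6.2.2}) and using $\nu_0\in\mathbb{R}$ turns the stationarity identity into $Q(f_0,g)=\nu_0\,\langle f_0,g\rangle_{L^2 (\mathbb{R}^2)}$ for all $g\in\mathcal{B}^{\alpha}(\mathbb{R}^2)$.

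Next I would define $\widehat{H}^{(\alpha)}=\widehat{u}^2+\widehat{v}^2$ rigorously as the unique nonnegative self-adjoint operator associated with the closed form $Q$ by the first representation theorem, characterised by
\[
\mathrm{Dom}(\widehat{H}^{(\alpha)})=\left\{f\in\mathcal{B}^{\alpha}(\mathbb{R}^2): g\mapsto Q(f,g)\text{ is bounded in the }L^2\text{-norm}\right\},\qquad \langle\widehat{H}^{(\alpha)}f,g\rangle_{L^2 (\mathbb{R}^2)}=Q(f,g).
\]
The rewritten identity exhibits $\nu_0 f_0\in L^2(\mathbb{R}^2)$ as the Riesz representative of the functional $g\mapsto Q(f_0,g)$; this functional is therefore $L^2$-bounded, so $f_0\in\mathrm{Dom}(\widehat{H}^{(\alpha)})$ and $\widehat{H}^{(\alpha)}f_0=\nu_0 f_0$. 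Thus $f_0$ is an eigenvector with eigenvalue $\nu_0$.

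That $\nu_0$ is the \emph{smallest} eigenvalue follows from the Rayleigh--Ritz description of the bottom of the spectrum of a nonnegative form-defined operator, namely $\inf\mathrm{spec}(\widehat{H}^{(\alpha)})=\inf_{\|f\|_{L^2 (\mathbb{R}^2)}=1}Q(f,f)=\inf_{f\in S}F^{(\alpha)}[f]$; since $f_0$ attains this infimum with value $\nu_0$, we obtain $\nu_0=\min\mathrm{spec}(\widehat{H}^{(\alpha)})$ and $f_0$ is a ground state. For the final identification I would insert the defining relation (\ref{eqCompactnessHilbert6}) of $\widehat{A}$ into $Q$. Writing $Q(f,g)=\langle f,g\rangle_{\alpha}-2\langle f,g\rangle_{L^2 (\mathbb{R}^2)}$ and noting that $\langle u,v\rangle_{L^2 (\mathbb{R}^2)}=\langle\widehat{A}u,v\rangle_{\alpha}$ gives $\langle f,g\rangle_{\alpha}=\langle\widehat{A}^{-1}f,g\rangle_{L^2 (\mathbb{R}^2)}$ for $f\in\mathrm{Ran}(\widehat{A})=\mathrm{Dom}(\widehat{A}^{-1})$, the boundedness criterion above shows $\mathrm{Dom}(\widehat{H}^{(\alpha)})=\mathrm{Ran}(\widehat{A})$ and yields $\langle\widehat{H}^{(\alpha)}f,g\rangle_{L^2 (\mathbb{R}^2)}=\langle(\widehat{A}^{-1}-2\widehat{I})f,g\rangle_{L^2 (\mathbb{R}^2)}$ for all $g$ in the dense set $\mathcal{B}^{\alpha}(\mathbb{R}^2)$. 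Hence $\widehat{H}^{(\alpha)}=\widehat{A}^{-1}-2\widehat{I}$, in agreement with the spectral picture of Theorem~\ref{TheoremInverse}, whose eigenvalues diverge to $+\infty$.

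The main obstacle is the question of \emph{domains}. A priori the minimiser is only known to lie in the form domain $\mathcal{B}^{\alpha}(\mathbb{R}^2)=\mathrm{Dom}(\widehat{u})\cap\mathrm{Dom}(\widehat{v})$, and not in the smaller operator domain $\mathrm{Dom}(\widehat{u}^2)\cap\mathrm{Dom}(\widehat{v}^2)$; consequently one cannot simply commute a factor of $\widehat{u}$ (or $\widehat{v}$) across the inner product by self-adjointness to produce $\widehat{u}^2 f_0$ pointwise. The representation theorem is precisely what bridges this gap: the $L^2$-boundedness of $g\mapsto Q(f_0,g)$, which is immediate from the stationarity identity, is exactly the membership criterion for $f_0\in\mathrm{Dom}(\widehat{H}^{(\alpha)})$, so that no separate elliptic-regularity argument for $f_0$ is required.
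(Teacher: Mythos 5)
Your proposal is correct, but it takes a more structured, form-theoretic route than the paper's own proof. The paper argues elementarily: it reads the stationarity identity of Theorem~\ref{theoremEulerLagrangeEquations} as saying that $f_0$ is a \emph{weak} solution of $(\widehat{u}^2+\widehat{v}^2)f_0=\gamma f_0$; it proves minimality by observing that any normalized eigenvector $f_\nu$ of $\widehat{H}^{(\alpha)}$ satisfies $\nu=F^{(\alpha)}[f_\nu]\geq F^{(\alpha)}[f_0]$; and it links up with $\widehat{A}$ by rewriting the Euler--Lagrange identity as $\langle g,f_0\rangle_{\alpha}=(\gamma+2)\langle g,\widehat{A}f_0\rangle_{\alpha}$, deducing $(\widehat{A}^{-1}-2\widehat{I})f_0=\gamma f_0$, and then checking directly that $\widehat{A}\left(\widehat{H}^{(\alpha)}+2\widehat{I}\right)h=h$ for $h\in\mathrm{Dom}(\widehat{H}^{(\alpha)})$, which only yields the inclusion $\widehat{H}^{(\alpha)}+2\widehat{I}\subset\widehat{A}^{-1}$. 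You instead realize $F^{(\alpha)}$ as a closed nonnegative sesquilinear form $Q$ (closedness being the completeness of $\|\cdot\|_{\alpha}$ from Proposition~\ref{prop3.3}, density from Lemma~\ref{lemma3.3.1}), invoke Kato's first representation theorem to \emph{define} $\widehat{H}^{(\alpha)}$ as the associated self-adjoint operator, obtain the eigenvalue equation from the $L^2$-boundedness of $g\mapsto Q(f_0,g)$, and get minimality from the Rayleigh--Ritz principle. This buys three things the paper does not state: a precise self-adjoint meaning for $\widehat{u}^2+\widehat{v}^2$, which closes the domain gap that the paper's ``(weak) solution'' language glosses over (a priori $f_0$ is only in the form domain, not in $\mathrm{Dom}(\widehat{u}^2)\cap\mathrm{Dom}(\widehat{v}^2)$, exactly the obstacle you identify); the stronger conclusion $\nu_0=\inf\mathrm{spec}(\widehat{H}^{(\alpha)})$ rather than merely ``smallest among eigenvalues''; and the exact identification $\mathrm{Dom}(\widehat{H}^{(\alpha)})=\mathrm{Ran}(\widehat{A})$ with equality $\widehat{H}^{(\alpha)}=\widehat{A}^{-1}-2\widehat{I}$ rather than an inclusion. (Your conclusion also matches what the paper's proof actually derives; the ``$\widehat{A}-2\widehat{I}$'' in the statement of the Corollary is evidently a typo for $\widehat{A}^{-1}-2\widehat{I}$, consistent with the divergent spectrum of Theorem~\ref{TheoremInverse}.) What the paper's route buys in exchange is economy: no form machinery, only inner-product manipulations, and a direct statement about the operator sum on its natural domain.
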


\begin{proof}
From eq.(\ref{eq6.2.4}) it is clear that $f_0$ is a (weak) solution of the eigenvalue equation:
\begin{equation}
\left(\widehat{u}^2 + \widehat{v}^2 \right) f_0= \gamma f_0.
\label{eqEigenvalueequation1}
\end{equation}

Now let $f_{\nu}$ be any other eigenvector of $\widehat{H}^{(\alpha)} = \widehat{u}^2 + \widehat{v}^2$ with eigenvalue $\nu$:
\begin{equation}
\widehat{H}^{(\alpha)} f_{\nu} = \nu f_{\nu}
\label{eq6.20}
\end{equation}
If we perform the inner product with $f_{\nu}$ in the previous equation, we obtain:
\begin{equation}
\|\widehat{u} f_{\nu} \|_{L^2 (\mathbb{R}^2)}^2 + \|\widehat{v} f_{\nu} \|_{L^2 (\mathbb{R}^2)}^2 = \nu \| f_{\nu} \|_{L^2 (\mathbb{R}^2)}^2
\label{eq6.21}
\end{equation}
If $f$ is normalized, we obtain:
\begin{equation}
\nu = F^{(\alpha)} \left[ f_{\nu} \right]
\label{eq6.22}
\end{equation}
which shows that the eigenvalue of an eigenvector is equal to the value of $F^{(\alpha)}$ for that eigenvector. It then follows that:
\begin{equation}
\nu =  F^{(\alpha)} \left[ f_{\nu} \right] \geq  F^{(\alpha)} \left[f_0 \right]
\label{eq6.23}
\end{equation}
where we used the fact that $f_0$ is the minimizer of $F^{(\alpha)} $. We conclude that $ F^{(\alpha)} \left[ f_0 \right]$ is the smallest eigenvalue of $\widehat{H}^{(\alpha)}$.

Finally, let us establish the connection with operator $\widehat{A}$. We can rewrite the Euler-Lagrange equations (\ref{eq6.2.4}) as:
\begin{equation}
\langle g,f_0 \rangle_{\alpha} = (\gamma+2) \langle g, f_0 \rangle_{L^2 (\mathbb{R}^2)}.
\label{eqEulerLag1}
\end{equation}
But, in view of the definition of the operator $\widehat{A}$ (\ref{eqCompactnessHilbert6}), we can rewrite this as:
\begin{equation}
\langle g,f_0 \rangle_{\alpha} = (\gamma +2) \langle \widehat{A} g, f_0 \rangle_{ \alpha} \Leftrightarrow \langle g,f_0 \rangle_{\alpha} = (\gamma+2) \langle  g, \widehat{A} f_0 \rangle_{ \alpha},
\label{eqEulerLag2}
\end{equation}
for all $g \in \mathcal{B}^{\alpha} (\mathbb{R}^2)$. And thus:
\begin{equation}
f_0 =(\gamma +2) \widehat{A} f_0 \Leftrightarrow \left(\widehat{A}^{-1} - 2 \widehat{I} \right)f_0= \gamma f_0.
\label{eqEulerLag2}
\end{equation}
This shows that $f_0$ is also an eigenvector of $\widehat{A}^{-1} - 2 \widehat{I}$ with the same eigenvalue as $\widehat{H}^{(\alpha)}$. In fact the two operators are the same in $\text{Dom} (\widehat{H}^{(\alpha)})$. Indeed, let $h \in  \text{Dom} (\widehat{H}^{(\alpha)})$. Then for all $g \in \mathcal{B}^{\alpha} (\mathbb{R}^2)$, we have:
\begin{equation}
\begin{array}{c}
\langle \widehat{A} \left(\widehat{H}^{(\alpha)} + 2 \widehat{I} \right) h , g \rangle_{\alpha} = \langle \left(\widehat{H}^{(\alpha)} + 2 \widehat{I} \right) h , g \rangle_{L^2 (\mathbb{R}^2)} =\\
\\
=\langle \left(\widehat{u}^2 + \widehat{v}^2 +2 \widehat{I}\right) h , g \rangle_{L^2 (\mathbb{R}^2)} = \\
\\
= \langle \widehat{u}  h , \widehat{u}  g \rangle_{L^2 (\mathbb{R}^2)}+\langle \widehat{v}  h , \widehat{v}  g \rangle_{L^2 (\mathbb{R}^2)} +2 \langle h ,   g \rangle_{L^2 (\mathbb{R}^2)} = \langle h , g \rangle_{\alpha}.
\end{array}
\label{eqEulerLag3}
\end{equation}
And thus: $\widehat{A} \left(\widehat{H}^{(\alpha)} + 2 \widehat{I} \right) h =h$, for all $h \in \text{Dom} (\widehat{H}^{(\alpha)})$, which proves the result.
\end{proof}

\subsection{Saturation of the inequalities}

As we mentioned in the introduction, it would be interesting to determine whether there are states minimizing all uncertainty relations for this non-canonical noncommutative algebra. In other words, is there a state $f_0 \in \mathcal{B} (\mathbb{R}^2)$ such that $f_0$ is an eigenvector associated with the smallest eigenvalue of $\widehat{H}^{(\alpha)}$ for all $\alpha=(\widehat{u}, \widehat{v})\in \mathcal{N}$? Or, equivalently, is there a state $f_0$ such that:
\begin{equation}
\|\widehat{u} f_0 \|_{L^2 (\mathbb{R}^2)}^2 + \|\widehat{v} f_0 \|_{L^2 (\mathbb{R}^2)}^2 \leq \|\widehat{u} f \|_{L^2 (\mathbb{R}^2)}^2 + \|\widehat{v} f \|_{L^2 (\mathbb{R}^2)}^2,
\label{eqCoherent1}
\end{equation}
for all $\alpha=(\widehat{u}, \widehat{v})\in \mathcal{N}$ and all $f \in  \mathcal{B} (\mathbb{R}^2)$?

This is an open problem, but we conjecture that this is impossible. The {\it rationale} for this conjecture is that there are several hindrances to such a coherent state:

\vspace{0.2 cm}
\begin{itemize}
\item Given $\alpha=(\widehat{u}_1, \widehat{v}_1),~\beta=(\widehat{u}_2, \widehat{v}_2)\in \mathcal{N}$, with $\alpha \neq \beta$, it can be shown that $\widehat{H}^{(\alpha)}$ and $\widehat{H}^{(\beta)}$ do not commute. Therefore there is no common orthonormal set of eigenvectors.

\item The previous statement does not preclude the existence of a common eigenvector $f_0$ of $\widehat{H}^{(\alpha)}$ and $\widehat{H}^{(\beta)}$, provided $f_0 \in \text{Ker} \left( \left[\widehat{H}^{(\alpha)} ,\widehat{H}^{(\beta)} \right]\right)$. However, $f_0$ would have to be a common eigenvector of $\widehat{H}^{(\alpha)} $ for all $\alpha \in \mathcal{N}$, and the associated eigenvalue would have to be the smallest eigenvalue of $\widehat{H}^{(\alpha)} $ for all $\alpha \in \mathcal{N}$.

\item If we take the limit $\epsilon \to 0^+$, we recover a canonical phase space noncommutative algebra. It is known that, in this case, there is no state which saturates two uncertainty relations simultaneously \cite{Bolonek,Kosinski}.
\end{itemize}

\vspace{0.3 cm}
\noindent
Although we have in principle a way of determining the minimizers of the various uncertainty relations, in practise we cannot obtain analytic solutions of the partial differential equations. For example, in the case $u=q_1$, $v=q_2$, $H_{q_1,q_2}$ has the differential representation:
\begin{equation}
- \left(\frac{\theta}{2 \lambda} \right)^2 \Delta - i \theta x_2 \frac{\partial }{\partial x_1} +i \theta \left(x_1 + \frac{E}{\lambda}x_1^2 \right) \frac{\partial }{\partial x_2} + \lambda^2 |x|^2 + 2 \lambda E x_1^3 + E^2 x_1^4
\label{eq6.24}
\end{equation}
where $|x|^2 = x_1^2 + x_2^2$ and $\Delta = \frac{\partial^2 }{\partial x_1^2} + \frac{\partial^2 }{\partial x_2^2}$ is the Laplacian. This illustrates the difficulty in obtaining the minimizers and checking the existence of coherent states.

\subsection{The Heisenberg-Pauli-Weyl inequality}

One of the interesting aspects of our algebra is the fact that the usual Heisenberg-Pauli-Weyl inequality of position and momentum can be violated.

\begin{Example}
Consider the Gaussian state:
\begin{equation}
f (x_1, x_2) = \left(\frac{4}{\pi^2 ab} \right)^{\frac{1}{4}} \exp \left[ - \frac{1}{a} (x_1 -x_1^{(0)})^2 - \frac{1}{b} (x_2 - x_2^{(0)})^2 \right]
\label{eq6.25}
\end{equation}
where $a,b >0$ and
\begin{equation}
x_1^{(0)} = - \frac{\lambda}{2E},
\label{eq6.24}
\end{equation}
for $E \ne 0 \Rightarrow \theta,\epsilon \ne 0$. The expectation value of $\widehat{q}_1$ in this state is
\begin{equation}
\begin{array}{c}
\langle \widehat{q}_1f,  f \rangle_{L^2 (\mathbb{R}^2)}= \int_{\mathbb{R}^2}  \left[\left(\lambda x_1 + \frac{i \theta}{2 \lambda} \frac{\partial}{\partial x_2}  + E x_1^2 \right)f (x_1, x_2) \right] \overline{f (x_1, x_2)} dx_1  dx_2 =\\
\\
= \lambda x_1^{(0)} + E \left((x_1^{(0)})^2 + \frac{a}{4} \right)
\end{array}
\label{eq6.25}
\end{equation}
and that of $\widehat{q}_1^2$ reads
\begin{equation}
\begin{array}{c}
\langle \widehat{q}_1^2 f,f \rangle_{L^2 (\mathbb{R}^2)}=E^2 (x_1^{(0)})^4 + 2 \lambda E (x_1^{(0)})^3 + \left(\lambda^2 + \frac{3}{2} a E^2 \right) (x_1^{(0)})^2 + \frac{3}{2} \lambda a E x_1^{(0)} + \\
\\
+\frac{\lambda^2 a}{4} + \frac{\theta^2}{4 \lambda^2 b} + \frac{3 a^2 E^2}{16}
\end{array}
\label{eq6.26}
\end{equation}
One thus obtains the dispersion
\begin{equation}
\Delta_{q_1} (f, <q_1>_f) = \left[ a \left( E x_1^{(0)} + \frac{\lambda}{2} \right)^2 + \frac{\theta^2}{4 \lambda^2 b} + \frac{a^2 E^2}{8} \right]^{\frac{1}{2}}
\label{6.27}
\end{equation}
It follows from (\ref{eq6.24}) that
\begin{equation}
\Delta_{q_1} (f, <q_1>_f) =\left( \frac{\theta^2}{4 \lambda^2 b} + \frac{a^2 E^2}{8} \right)^{\frac{1}{2}}
\label{eq6.28}
\end{equation}
In a similar fashion one obtains:
\begin{equation}
\langle \widehat{p}_1 f ,f \rangle_{L^2 (\mathbb{R}^2)}= \int_{\mathbb{R}^2} \left[\left(-i \mu \frac{\partial}{\partial x_1}  + \frac{\eta}{2 \mu} x_2 \right)f (x_1, x_2) \right] \overline{f (x_1, x_2)}  dx_1  dx_2 = \frac{\eta}{2 \mu} x_2^{(0)}
\label{eq6.29}
\end{equation}
and
\begin{equation}
\langle \widehat{p}_1^2 f ,f \rangle_{L^2 (\mathbb{R}^2)}= \frac{\mu^2}{a} + \left(\frac{\eta}{2 \mu} \right)^2 \left((x_2^{(0)})^2 + \frac{b}{4} \right)
\label{eq6.30}
\end{equation}
It follows that
\begin{equation}
\Delta_{p_1} (f, <p_1>_f) = \frac{\mu}{\sqrt{a}} \left[1 +  \left(\frac{\eta}{4 \mu^2} \right)^2 ab \right]^{\frac{1}{2}}
\label{eq6.31}
\end{equation}
From (\ref{eq6.28},\ref{eq6.31})
\begin{equation}
\Delta_{q_1} (f, <q_1>_f) \Delta_{p_1} (f, <p_1>_f) = \frac{1}{2} \left( \frac{1}{2} \mu^2 E^2 a + \frac{\mu^2 \theta^2}{\lambda^2 ab} + \frac{\eta^2 E^2 a^2 b}{32 \mu^2} + \frac{\theta^2 \eta^2}{16 \mu^2 \lambda^2} \right)^{\frac{1}{2}}
\label{eq6.32}
\end{equation}
If one chooses $b = a^{- 3/2}$ and lets $a \downarrow 0$, then:
\begin{equation}
\Delta_{q_1} (f, <q_1>_f) \Delta_{p_1} (f, <p_1>_f) \to \frac{\theta \eta}{8 \mu \lambda} = \frac{\xi}{4 (1 + \sqrt{1- \xi})} \leq \frac{\xi}{4} < \frac{1}{2}
\label{eq6.33}
\end{equation}
And thus (\ref{eq6.33}) violates the usual position-momentum uncertainty relations. This is an interesting feature of this new algebra, which makes a clear distinction with the standard Heisenberg-Weyl algebra. This should not pose interpretational problems, because (as we mentioned before) $(q_1,q_2)$ and $(p_1,p_2)$ may represent other physical quantities which are not the usual position and momentum of particles (e.g. the scale factors in the Kantowski-Sachs model and their conjugate momenta).
\end{Example}

\subsection{On minimal length and momentum}

Recently, there has been a great deal of work devoted to quantum theories with minimal length \cite{Dey1,Dey2,Dey3,Dey4,Fityo,Kempf,Kober}. In such theories a coordinate $x$ with minimal length $L$ satisfies the condition $\Delta_x (f) \ge L$ for all normalized states $\psi$. The previous example shows that, for our algebra, there are no minimal length for $q_1$ and minimal momentum for $p_1$. Indeed, if one lets $a \downarrow 0$ and $b \to \infty$ in (\ref{eq6.28}), then $\Delta_{q_1} (f)$ can be made arbitrarily small. Alternatively, if one sets $ab=1$ and lets $a \to \infty$ in (\ref{eq6.31}), then $ \Delta_{p_1} (f)$ can also become arbitrarily small. Similar conclusions can be drawn for $q_2$ and $p_2$.

\appendix

\section{Modulation spaces}

In this appendix, we shall prove that the space $\mathcal{B} (\mathbb{R}^2)$ (or, equivalently, the spaces $\mathcal{B}^{\alpha} (\mathbb{R}^2)$) is a particular instance of a family of functional spaces called {\it modulation spaces} which find many applications in time-frequency analysis \cite{Feichtinger,Grochenig}. In the sequel $x$ denotes a time variable and $\omega$ a frequency variable. This is the more familiar interpretation in the context of modulation spaces, but we can easily switch to position and momentum.

\begin{Definition}\label{DefinitionWeight0}
A weight in $\mathbb{R}^d$ is a positive function $m \in L_{loc}^{\infty} (\mathbb{R}^d)$. Given two weights $m$ and $v$, $m$ is said to be $v$-moderate, if
\begin{equation}
m(x+y) \le C m(x) v(y), \hspace{1 cm} \forall x,y \in \mathbb{R}^d,
\label{eqvmoderate}
\end{equation}
for some $C>0$. We demote by $\mathcal{P} (\mathbb{R}^d)$ the set of all weights $m$ which are $v$-moderate for some polynomial weight $v$.
\end{Definition}

\begin{Definition}\label{DefinitionModulationspace}
Given a fixed window $g \in \mathcal{S} (\mathbb{R}^d) \backslash \left\{0 \right\}$, we define the short-time Fourier transform of $f \in  \mathcal{S} (\mathbb{R}^d)$ by
\begin{equation}
  V_gf(x,\omega)=\langle f,\pi(x,\omega)g \rangle_{L^2 (\mathbb{R}^d)}=\int_{\mathbb{R}^d} f(t)\overline{g(t-x)}e^{-2\pi it \cdot \omega}dt.
\label{eqModulationspace1}
\end{equation}
Here $\pi(x,\omega)g (t)= e^{2 \pi i t \cdot \omega} g(t- x)$ is (up to a phase) the Schr\"odinger representation of the Heisenberg group $\mathbb{H}(d)$.

This extends to $f \in  \mathcal{S}^{\prime} (\mathbb{R}^d) $, if we use the duality bracket:
\begin{equation}
V_gf(x,\omega)=\langle f,\pi(x,\omega) \overline{g} \rangle.
\label{eqModulationspace2}
\end{equation}
To study the time-frequency content of a function, we shall consider the mixed norm:
\begin{equation}
\|F\|_{L_{x,\omega}^{r,s} (\mathbb{R}^{2d})}=\left(\int_{\mathbb{R}^d} \left(\int_{\mathbb{R}^d} |F(x, \omega)|^r d x\right)^{\frac{s}{r}} d \omega\right)^{\frac{1}{s}},
\label{eqLieb4.4}
\end{equation}
for $F \in \mathcal{S}^{\prime} (\mathbb{R}^{2d})$ and $1 \le r,s< \infty$, with the obvious modification for $r$ or $s= \infty$.

Given a weight $m$, the modulation space $M_m^{r,s} (\mathbb{R}^d)$ is defined as the set of all $f \in \mathcal{S}^{\prime} (\mathbb{R}^{d})$ such that
\begin{equation}
\|f\|_{M_m^{r,s}  (\mathbb{R}^d)}= \|m V_gf\|_{L_{x,\omega}^{r,s} (\mathbb{R}^{2d})} = \left(\int_{\mathbb{R}^d} \left(\int_{\mathbb{R}^d} |V_g f(x, \omega) m (x, \omega) |^r d x\right)^{\frac{s}{r}} d \omega\right)^{\frac{1}{s}} < \infty.
\label{eqModulationspace3}
\end{equation}
We shall write $M_m^r$, when $r=s$ and $M^{r,s}$, when $m\equiv 1$.

\end{Definition}

We should remark that different choices of windows $g \in \mathcal{S} (\mathbb{R}^d) \backslash \left\{0 \right\}$ lead to equivalent norms, and that modulation spaces are Banach spaces. Moreover, if $p=q=2$, then $M_m^{2,2} (\mathbb{R}^d)$ are in fact Hilbert spaces.

Among the modulation spaces we find the following well-known spaces:

\begin{enumerate}
\item $M^2 (\mathbb{R}^d)=L^2 (\mathbb{R}^d)$. This is an immediate consequence of the orthogonality relations \cite{Grochenig}:
\begin{equation}
\langle V_{g_1} f_1, V_{g_2} f_2 \rangle_{L^2 (\mathbb{R}^{2d})} = \langle f_1,f_2 \rangle_{L^2 (\mathbb{R}^d)} \overline{\langle g_1,g_2 \rangle}_{L^2 (\mathbb{R}^d)},
\label{eqModulationspace4}
\end{equation}
which hold for all $f_1,f_2,g_1, g_2 \in L^2 (\mathbb{R}^d)$.

\vspace{0.2 cm}
\item weighted $L^2$-spaces: If $m(x, \omega)= m(x)= \left(1+ |x|^2 \right)^{s/2}$ with $s \in \mathbb{R}$, then
$$
M_m^2 (\mathbb{R}^d)= L_s^2 (\mathbb{R}^d) = \left\{ f \in \mathcal{S}^{\prime} (\mathbb{R}^d):~ f (x) \left(1+ |x|^2 \right)^{s/2} \in L^2 (\mathbb{R}^d) \right\}.
$$

\vspace{0.2 cm}
\item Bessel potential spaces (Sobolev-Hilbert spaces):  If $m(x, \omega)= m(\omega)= \left(1+ |\omega|^2 \right)^{s/2}$ with $s \in \mathbb{R}$, then
$$
M_m^2 (\mathbb{R}^d)= H^s (\mathbb{R}^d) = \left\{ f \in \mathcal{S}^{\prime} (\mathbb{R}^d):~ \mathcal{F}f  (\omega) \left(1+ |\omega|^2 \right)^{s/2} \in L^2 (\mathbb{R}^d) \right\}.
$$
\end{enumerate}

Cases (2) and (3) are particular instances of the following Proposition (Proposition 11.3.1 in \cite{Grochenig}):
\begin{Proposition}\label{PropositionModulationPartial}
Let $g \in \mathcal{S} (\mathbb{R}^d) \backslash \left\{0 \right\}$ be a window and $m$ a weight.
\begin{enumerate}
\item If $m(x, \omega) =m(x)$, then $M_m^2=L_m^2$.

\vspace{0.2 cm}
\item If $m(x, \omega) =m(\omega)$, then $M_m^2= \mathcal{F} L_m^2$.
\end{enumerate}
\end{Proposition}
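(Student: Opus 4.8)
The plan is to reduce both parts to a single computation of the $L^2(\mathbb{R}^{2d})$ norm of the short-time Fourier transform, exploiting the product structure of the weight. For part (1), where $m(x,\omega)=m(x)$, I would fix $x$ and observe from (\ref{eqModulationspace1}) that $\omega \mapsto V_g f(x,\omega)$ is exactly the Fourier transform of the function $t \mapsto f(t)\overline{g(t-x)}$. Plancherel's theorem in the $\omega$ variable then gives
\begin{equation}
\int_{\mathbb{R}^d} |V_g f(x,\omega)|^2 \, d\omega = \int_{\mathbb{R}^d} |f(t)|^2 \, |g(t-x)|^2 \, dt.
\end{equation}
Inserting this into the definition (\ref{eqModulationspace3}) of $\|f\|_{M_m^2}$ and applying Tonelli's theorem (all integrands are nonnegative) to interchange the order of integration, I obtain, after the substitution $u=t-x$,
\begin{equation}
\|f\|_{M_m^2}^2 = \int_{\mathbb{R}^d} |f(t)|^2 \, \Phi(t) \, dt, \qquad \Phi(t) := \int_{\mathbb{R}^d} m(t-u)^2 \, |g(u)|^2 \, du.
\end{equation}
Everything then hinges on showing that $\Phi(t) \asymp m(t)^2$ with constants independent of $t$.

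To establish this two-sided bound I would use that $m$ is $v$-moderate for a polynomial weight $v$, i.e. (\ref{eqvmoderate}). The upper bound follows from $m(t-u) \le C\, m(t)\, v(-u)$, giving $\Phi(t) \le C^2 m(t)^2 \int_{\mathbb{R}^d} v(-u)^2 |g(u)|^2 \, du$, where the last integral is finite because $v$ grows polynomially and $g \in \mathcal{S}(\mathbb{R}^d)$. The lower bound follows from writing $m(t)=m((t-u)+u)\le C\, m(t-u)\, v(u)$, hence $m(t-u) \ge C^{-1} m(t)\, v(u)^{-1}$ and $\Phi(t) \ge C^{-2} m(t)^2 \int_{\mathbb{R}^d} v(u)^{-2} |g(u)|^2 \, du$; here the integral is strictly positive since $g \not\equiv 0$ and $v^{-2}>0$, and finite since $v$ is bounded below. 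Combining the two estimates yields $\Phi(t)\asymp m(t)^2$, and therefore $\|f\|_{M_m^2}\asymp \|f\|_{L_m^2}$, which is precisely $M_m^2=L_m^2$ with equivalent norms.

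For part (2), where $m(x,\omega)=m(\omega)$, the plan is to transfer the problem to part (1) via the fundamental identity of time-frequency analysis, $V_g f(x,\omega)=e^{-2\pi i x\cdot\omega}\,V_{\widehat{g}}\widehat{f}(\omega,-x)$, so that in particular $|V_g f(x,\omega)|=|V_{\widehat{g}}\widehat{f}(\omega,-x)|$. Substituting into (\ref{eqModulationspace3}) and changing variables $(x,\omega)\mapsto(-x,\omega)$ gives
\begin{equation}
\|f\|_{M_m^2}^2 = \int_{\mathbb{R}^d}\int_{\mathbb{R}^d} |V_{\widehat{g}}\widehat{f}(\omega,x)|^2\, m(\omega)^2 \, dx\, d\omega = \|\widehat{f}\|_{M_{\widetilde{m}}^2}^2,
\end{equation}
where $\widetilde{m}(y,\zeta):=m(y)$ depends only on the position variable and is $\widetilde{v}$-moderate for the polynomial weight $\widetilde{v}(y,\zeta)=v(y)$. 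Since $\widehat{g}\in\mathcal{S}(\mathbb{R}^d)\backslash\{0\}$, part (1) applies to $\widehat{f}$ and yields $\|\widehat{f}\|_{M_{\widetilde m}^2}\asymp \|\widehat{f}\|_{L_m^2}$. Hence $f\in M_m^2$ if and only if $\widehat{f}\in L_m^2$, with equivalent norms, which is the assertion $M_m^2=\mathcal{F}L_m^2$.

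The routine steps (Plancherel, Tonelli, the change of variables, and the fact that the $M_m^2$ norm is independent of the window up to equivalence) are standard. The one genuinely substantive step, and the place where the hypothesis $m\in\mathcal{P}(\mathbb{R}^d)$ is really used, is the uniform equivalence $\Phi(t)\asymp m(t)^2$ in part (1): it is here that both directions of $v$-moderateness, the polynomial growth of $v$, and the Schwartz decay together with the non-vanishing of the window $g$ must be invoked simultaneously to control the averaged weight by the pointwise weight. Once this is secured, part (2) is essentially a corollary obtained by conjugating with the Fourier transform.
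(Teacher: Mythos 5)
Your proof is correct. Note that the paper itself does not prove this statement at all: it is quoted verbatim as Proposition 11.3.1 of Gr\"ochenig's book \emph{Foundations of time-frequency analysis} and used as a black box, and your argument (Plancherel in the $\omega$-variable plus the two-sided moderateness estimate $\Phi(t)\asymp m(t)^2$ for part (1), then the fundamental identity $|V_g f(x,\omega)|=|V_{\widehat{g}}\widehat{f}(\omega,-x)|$ to reduce part (2) to part (1)) is essentially the standard textbook proof of that cited result, so there is nothing to reconcile. The only point you gloss over is that the Tonelli computation presupposes $f$ is a locally square-integrable function rather than a general tempered distribution; for $f\in M_m^2$ this follows by applying Plancherel in reverse to $V_g f(x,\cdot)=\mathcal{F}\bigl(f\,\overline{g(\cdot-x)}\bigr)$ and covering $\mathbb{R}^d$ by translates on which $|g|$ is bounded below, a routine repair that does not affect the substance of your argument.
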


In the sequel, we shall need the following compact embedding theorem for modulation spaces, which was proved by Boggiatto and Toft \cite{Boggiatto} (see also \cite{Pfeuffer}):
\begin{Theorem}[Boggiatto-Toft]\label{TheoremBoggiatto}
Assume that $m_1,m_2 \in \mathcal{P} (\mathbb{R}^{2d})$, and that $p,q \in \left[1, \infty \right]$. Then the embedding
\begin{equation}
i: M_{m_1}^{p,q} (\mathbb{R}^d) \to  M_{m_2}^{p,q} (\mathbb{R}^d)
\label{eqBoggiatto1}
\end{equation}
is compact if and only if $m_2 / m_1 \in L_0^{\infty} (\mathbb{R}^{2d})$.
\end{Theorem}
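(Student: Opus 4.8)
The plan is to move the problem off the continuous phase space and onto a lattice by means of a Gabor frame, where compactness of the relevant operator is governed entirely by the off-diagonal decay of a Gram kernel and by the asymptotic size of the ratio $m_2/m_1$. First I would fix a window $g \in \mathcal{S}(\mathbb{R}^d)\setminus\{0\}$ and a lattice $\Lambda = a\mathbb{Z}^d\times b\mathbb{Z}^d \subset \mathbb{R}^{2d}$, with $a,b>0$ small enough that $\{\pi(\lambda)g\}_{\lambda\in\Lambda}$ is a Gabor frame for $L^2(\mathbb{R}^d)$ admitting a dual window $\gamma\in\mathcal{S}(\mathbb{R}^d)$. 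For a moderate weight $m\in\mathcal{P}(\mathbb{R}^{2d})$ the Gabor analysis of modulation spaces (see \cite{Grochenig}) provides that the analysis operator $C_g f = (\langle f,\pi(\lambda)g\rangle)_{\lambda\in\Lambda}$ and the synthesis operator $D_\gamma c = \sum_{\lambda}c_\lambda\,\pi(\lambda)\gamma$ are bounded between $M_m^{p,q}(\mathbb{R}^d)$ and the weighted mixed-norm sequence space $\ell_m^{p,q}(\Lambda)$; that $D_\gamma C_g = \mathrm{Id}$ on $M_m^{p,q}$; and that $\|f\|_{M_m^{p,q}}\asymp\|C_g f\|_{\ell_m^{p,q}}$. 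These facts require only $g,\gamma\in\mathcal{S}$ and the moderateness of $m$, and I would invoke them rather than reprove them.

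Next I would factor the embedding $\iota$ of \eqref{eqBoggiatto1}. Since the windows do not depend on the weight, the operator $P := C_g\,\iota\,D_\gamma : \ell_{m_1}^{p,q}(\Lambda)\to\ell_{m_2}^{p,q}(\Lambda)$ satisfies both $P = C_g\,\iota\,D_\gamma$ and $\iota = D_\gamma\,P\,C_g$, the latter because $D_\gamma C_g=\mathrm{Id}$ on $M_{m_1}^{p,q}$ and on $M_{m_2}^{p,q}$. As $C_g$ and $D_\gamma$ are bounded, these two identities show that $\iota$ is compact if and only if $P$ is compact. The matrix of $P$ is the Gram kernel $K(\mu,\lambda)=\langle\pi(\lambda)\gamma,\pi(\mu)g\rangle$, and the covariance of the short-time Fourier transform gives $K(\mu,\lambda)=e^{i\varphi(\mu,\lambda)}\,V_g\gamma(\mu-\lambda)$ for a real phase $\varphi$; since $g,\gamma\in\mathcal{S}$, the profile $V_g\gamma$ is rapidly decaying.

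It then remains to characterise compactness of $P$. Conjugating by the weights, $P$ is equivalent to the operator on the unweighted space $\ell^{p,q}(\Lambda)$ with kernel $\widetilde K(\mu,\lambda)=\tfrac{m_2(\mu)}{m_1(\lambda)}K(\mu,\lambda)$; moderateness, $m_2(\mu)\lesssim m_2(\lambda)\,v(\mu-\lambda)$, yields the domination $|\widetilde K(\mu,\lambda)|\lesssim \tfrac{m_2(\lambda)}{m_1(\lambda)}\,\Psi(\mu-\lambda)$ with $\Psi:=v\,|V_g\gamma|$ still rapidly decaying, hence summable over $\Lambda$. For the sufficiency direction, if $m_2/m_1\to0$ at infinity I would approximate $P$ in operator norm by its finite sections, truncating both the diagonal factor $m_2/m_1$ (small outside a finite set) and the off-diagonal tail of $\Psi$; a Schur test with the summable $\Psi$ controls the error, each section has finite rank, and compactness follows. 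For necessity, if $(m_2/m_1)(\lambda_k)\ge\delta>0$ along some $\lambda_k\to\infty$, the normalised unit vectors $e_{\lambda_k}/m_1(\lambda_k)$ are bounded in $\ell_{m_1}^{p,q}$, whereas their images under $P$ concentrate near $\lambda_k$ with $\ell_{m_2}^{p,q}$-norm at least $\delta\,|V_g\gamma(0)|>0$; being asymptotically supported in disjoint regions, they admit no norm-convergent subsequence, so $P$ is not compact.

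Finally, moderateness makes $\sup_{z\in\lambda+Q}(m_2/m_1)(z)\asymp (m_2/m_1)(\lambda)$ on a fundamental domain $Q$, so the lattice condition $(m_2/m_1)(\lambda)\to0$ is equivalent to $m_2/m_1\in L_0^\infty(\mathbb{R}^{2d})$, which closes the stated equivalence. The main obstacle — and the only genuinely hard analytic input — is the Gabor norm equivalence and the boundedness of $C_g,D_\gamma$ uniformly over all $p,q\in[1,\infty]$ and all moderate weights; this rests on Walnut-type kernel estimates for Gabor frames with Schwartz windows, which I would cite from \cite{Grochenig}. Everything downstream — the factorisation, the Schur test, and the disjointness argument — is soft functional analysis.
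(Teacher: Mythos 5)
First, a point of reference: the paper does not prove this statement at all --- it is imported verbatim as the Boggiatto--Toft theorem from \cite{Boggiatto} (see also \cite{Pfeuffer}), and the authors note in the introduction that in the literature such results are obtained via isomorphism properties of Toeplitz localization operators \cite{GrochenigToft1,GrochenigToft2}. So there is no internal proof to compare yours against; what can be judged is whether your argument would stand as a proof of the cited result, and in essentials it does. Your route --- discretizing by a Gabor frame with Schwartz window and Schwartz dual, transferring the embedding to weighted sequence spaces via the factorizations $P = C_g\,\iota\,D_\gamma$ and $\iota = D_\gamma\,P\,C_g$, and then characterizing compactness of the matrix operator $P$ through convolution-dominated (Schur/Young) estimates, finite sections, and a disjoint-support argument --- is a genuinely different and more hands-on approach than the Toeplitz-operator route. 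It buys self-containedness and works uniformly in $p,q$, at the price of invoking the full strength of Gabor frame theory for all moderate weights and all $p,q \in [1,\infty]$ (Walnut/Janssen-type estimates, the reconstruction $D_\gamma C_g = \mathrm{Id}$, including its weak-$\ast$ sense when $p$ or $q$ equals $\infty$), which you rightly cite from \cite{Grochenig}. Your final reduction from the lattice condition to the continuous condition $m_2/m_1 \in L_0^{\infty}(\mathbb{R}^{2d})$ via moderateness is correct.

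Three loose ends you should tighten, none of which is a structural flaw. (i) The existence of a dual window $\gamma \in \mathcal{S}(\mathbb{R}^d)$: for a Gaussian $g$ this is Janssen's theorem, and in general the canonical dual of a Schwartz frame window lies in $\bigcap_{s \geq 0} M_{v_s}^1(\mathbb{R}^d) = \mathcal{S}(\mathbb{R}^d)$, where $v_s(z) = (1+|z|)^s$; this should be stated, since the rapid decay of your Gram kernel rests on it. (ii) In the necessity argument you need $\langle \gamma, g \rangle_{L^2(\mathbb{R}^d)} \neq 0$; for the canonical dual $\gamma = S^{-1}g$ this holds because $\langle S^{-1}g, g \rangle_{L^2(\mathbb{R}^d)} > 0$ (the frame operator $S$ being positive), but for an arbitrary dual pair it could fail, so fix the canonical one. (iii) Your disjoint-support argument bounds the tails of $Pu_j$ by $(m_2/m_1)(\lambda_j)\,\Psi(\mu - \lambda_j)$ uniformly in $j$, which tacitly assumes $m_2/m_1$ bounded along the sequence; you should split the negation of $m_2/m_1 \in L_0^{\infty}$ into two cases: if the ratio is unbounded along $\lambda_k$, then $\|P u_k\|_{\ell_{m_2}^{p,q}} \geq |\langle \gamma, g\rangle_{L^2(\mathbb{R}^d)}|\,(m_2/m_1)(\lambda_k) \to \infty$ and $P$ is not even bounded (hence not compact), while if it is bounded above but bounded below by $\delta$ along $\lambda_k \to \infty$, your disjointness argument applies verbatim.
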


Here $L_0^{\infty} (\mathbb{R}^{2d})$ is the set of all $f \in L^{\infty} (\mathbb{R}^{2d})$ such that
\begin{equation}
\lim_{R \to \infty} \left( \mbox{ess~sup}_{|z| \ge R} |f(z)| \right)=0 ,
\label{eqweights1}
\end{equation}
where we wrote collectively $z=(x, \omega) \in \mathbb{R}^{2d}$.

\section{Compact embedding}

We now show that the space $\mathcal{B}(\mathbb{R}^2)$ (or, equivalently, all spaces $\mathcal{B}^{\alpha} (\mathbb{R}^2)$) are in fact modulation spaces.

\begin{Proposition}\label{PropositionDModulation}
Let $\psi$ and $\phi$ be the weights:
\begin{equation}
\psi (x)= \sqrt{1+ \left(x_1+ \frac{E x_1^2}{\lambda}\right)^2 +x_2^2}, \hspace{1 cm} \phi (\omega )= \sqrt{1+4 \pi^2 | \omega|^2}.
\label{eqDModulation1}
\end{equation}
Moreover, let
\begin{equation}
m(x,\omega)= \sqrt{| \psi(x)|^2 + |\phi (\omega)|^2}.
\label{eqDModulation2}
\end{equation}
We then have:
\begin{equation}
\mathcal{B} (\mathbb{R}^2)=M_m^2 (\mathbb{R}^2).
\label{eqDModulation3}
\end{equation}
\end{Proposition}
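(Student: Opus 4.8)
The plan is to establish the set equality $\mathcal{B}(\mathbb{R}^2)=M_m^2(\mathbb{R}^2)$ by proving the norm equivalence $\|f\|_{M_m^2}\asymp\|f\|_{\mathcal{B}}$; since the modulation space norms attached to different windows are equivalent, this simultaneously shows the two spaces coincide as sets and carry equivalent norms. The whole argument rests on the elementary but crucial observation that the weight (\ref{eqDModulation2}) splits additively, $m(x,\omega)^2=\psi(x)^2+\phi(\omega)^2$, where $\psi$ depends only on the time variable and $\phi$ only on the frequency variable. This lets me decompose the modulation norm into a ``time part'' and a ``frequency part'' to each of which the partial-weight identification of Proposition \ref{PropositionModulationPartial} applies.

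Before invoking that machinery I would first check that $m\in\mathcal{P}(\mathbb{R}^4)$, so that $M_m^2$ is well defined and independent of the window $g$. The frequency weight $\phi(\omega)=\sqrt{1+4\pi^2|\omega|^2}$ is a standard polynomial (Sobolev) weight and is manifestly moderate, so the only real work is the time weight $\psi(x)^2=1+P(x_1)^2+x_2^2$ with $P(x_1)=x_1+Ex_1^2/\lambda$. Here the nonlinear term $Ex_1^2/\lambda$ is the source of the difficulty, and I expect this to be the main obstacle of the proof. I would use the exact expansion $P(x_1+y_1)=P(x_1)+P(y_1)+2Ex_1y_1/\lambda$ together with the pointwise bound $x_1^2\lesssim 1+P(x_1)^2$ (valid because $P(x_1)^2$ grows like $x_1^4$) to control the cross term, and thereby conclude that $\psi(x+y)\lesssim\psi(x)(1+|y|)^2$, i.e.\ that $\psi$ is $v$-moderate for the polynomial weight $v(y)=(1+|y|)^2$. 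Moderateness of $m$ then follows from that of $\psi$ and $\phi$ by the trivial estimate $m(z+w)^2=\psi(x+y)^2+\phi(\omega+\nu)^2\lesssim(\psi(x)^2+\phi(\omega)^2)V(w)^2=m(z)^2V(w)^2$ for a suitable polynomial weight $V$, where $z=(x,\omega)$ and $w=(y,\nu)$; hence $m\in\mathcal{P}(\mathbb{R}^4)$.

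With $m\in\mathcal{P}(\mathbb{R}^4)$ in hand the decomposition is clean. Because $m^2=\psi^2+\phi^2$ holds exactly, the definition (\ref{eqModulationspace3}) gives
\[
\|f\|_{M_m^2}^2=\int_{\mathbb{R}^4}|V_gf(x,\omega)|^2\psi(x)^2\,dx\,d\omega+\int_{\mathbb{R}^4}|V_gf(x,\omega)|^2\phi(\omega)^2\,dx\,d\omega=\|f\|_{M_{\psi}^2}^2+\|f\|_{M_{\phi}^2}^2,
\]
where in the first summand the weight is $(x,\omega)\mapsto\psi(x)$ and in the second it is $(x,\omega)\mapsto\phi(\omega)$. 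Applying Proposition \ref{PropositionModulationPartial}(1) to the first and Proposition \ref{PropositionModulationPartial}(2) to the second yields $\|f\|_{M_{\psi}^2}^2\asymp\|\psi f\|_{L^2(\mathbb{R}^2)}^2$ and $\|f\|_{M_{\phi}^2}^2\asymp\|\phi\,\mathcal{F}f\|_{L^2(\mathbb{R}^2)}^2$; adding the two equivalences (with constants the minimum and maximum of the individual ones) gives $\|f\|_{M_m^2}^2\asymp\|\psi f\|_{L^2}^2+\|\phi\,\mathcal{F}f\|_{L^2}^2$.

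It remains to match the right-hand side with $\|f\|_{\mathcal{B}}^2$ from (\ref{eq3.23.2}). The time part is already in the desired form, $\|\psi f\|_{L^2}^2=\int_{\mathbb{R}^2}\left(1+(x_1+Ex_1^2/\lambda)^2+x_2^2\right)|f(x)|^2\,dx$. For the frequency part I would pass from the harmonic-analysis transform (\ref{eq0.2}) to the physics transform (\ref{eq0.1}) via the rescaling $\xi=2\pi\omega$, under which $\mathcal{F}f(\omega)=(2\pi)^{d/2}\widetilde{f}(2\pi\omega)$ and $\phi(\omega)^2=1+|\xi|^2$; with $d=2$ the factor $(2\pi)^{d}$ produced by the transform is cancelled by the Jacobian $(2\pi)^{-d}$ of the substitution, leaving $\|\phi\,\mathcal{F}f\|_{L^2}^2=\int_{\mathbb{R}^2}(1+\xi_1^2+\xi_2^2)|\widetilde{f}(\xi)|^2\,d\xi$. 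Summing the two pieces reproduces exactly $\|f\|_{\mathcal{B}}^2$, whence $\|f\|_{M_m^2}\asymp\|f\|_{\mathcal{B}}$ and the two spaces coincide.
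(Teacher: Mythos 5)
Your proof is correct and follows essentially the same route as the paper's: both exploit the exact splitting $m^2=\psi^2+\phi^2$, apply Proposition \ref{PropositionModulationPartial} to the two partial weights, and reconcile the two Fourier conventions via the rescaling $\xi=2\pi\omega$. Your upfront verification that $m\in\mathcal{P}(\mathbb{R}^4)$ (needed for the window-independence of the $M_m^2$ norm) is a sound addition; the paper carries out that check only later, in the proof of Theorem \ref{TheoremCompactEmbeddingD}.
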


\begin{proof}
We start by remarking that (cf.(\ref{eq0.1},\ref{eq0.2})):
\begin{equation}
\widetilde{f} (\xi)= (2 \pi)^{-1} \mathcal{F}f  \left(\frac{\xi}{2 \pi} \right).
\label{eqDModulation4}
\end{equation}
It follows that:
\begin{equation}
\begin{array}{c}
\|\sqrt{1 + |\xi|^2} \widetilde{f} \|_{L^2 (\mathbb{R}^2)}^2 = \int_{\mathbb{R}^2} (1+|\xi|^2) |\widetilde{f} (\xi)|^2 d \xi =\frac{1}{(2 \pi)^2} \int_{\mathbb{R}^2} (1+|\xi|^2) \left|\mathcal{F} f \left(\frac {\xi}{2 \pi}\right)\right|^2 d \xi =\\
\\
=\int_{\mathbb{R}^2} \left(1+ (2 \pi)^2 |\omega|^2\right) |\mathcal{F} f (\omega)|^2 d \omega =  \|\phi \mathcal{F} f \|_{L^2 (\mathbb{R}^2)}^2
\end{array}
\label{eqDModulation5}
\end{equation}
Consequently, we have from Proposition \ref{PropositionModulationPartial}:
\begin{equation}
\begin{array}{c}
\|f\|_{\mathcal{B} }^2 = \|\psi f \|_{L^2 (\mathbb{R}^2)}^2 +\|\phi \mathcal{F} f \|_{L^2 (\mathbb{R}^2)}^2 \\
\\
\asymp \| f \|_{M_{\psi}^2 (\mathbb{R}^2)}^2 +\| f \|_{M_{\phi}^2 (\mathbb{R}^2)}^2=\\
\\
= \int_{\mathbb{R}_x^2} \int_{\mathbb{R}_{\omega}^2} m^2 (x, \omega) |V_g f(x, \omega)|^2 dx d \omega = \| f \|_{M_m^2 (\mathbb{R}^2)}^2 ,
\end{array}
\label{eqDModulation6}
\end{equation}
which proves the result.
\end{proof}

From this proposition and Theorem \ref{TheoremBoggiatto}, we conclude that:

\begin{Theorem}\label{TheoremCompactEmbeddingD}
We have the following compact embedding:
\begin{equation}
\mathcal{B} (\mathbb{R}^2) \subset \subset L^2 (\mathbb{R}^2).
\label{eqDModulation7}
\end{equation}
\end{Theorem}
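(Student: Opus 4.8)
The plan is to read the compact embedding off from the two preceding results. By Proposition~\ref{PropositionDModulation} we already know that $\mathcal{B}(\mathbb{R}^2) = M_m^2(\mathbb{R}^2)$ with the weight $m$ of (\ref{eqDModulation2}), and by the orthogonality relations (\ref{eqModulationspace4}) we have $L^2(\mathbb{R}^2) = M^2(\mathbb{R}^2) = M_{m_0}^2(\mathbb{R}^2)$ with the trivial weight $m_0 \equiv 1$. Hence the inclusion $\mathcal{B}(\mathbb{R}^2) \hookrightarrow L^2(\mathbb{R}^2)$ is exactly the embedding $M_m^2(\mathbb{R}^2) \to M_{m_0}^2(\mathbb{R}^2)$, so it suffices to invoke the Boggiatto--Toft criterion (Theorem~\ref{TheoremBoggiatto}) with $p=q=2$, $d=2$, $m_1 = m$ and $m_2 = m_0 = 1$. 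The theorem reduces compactness to two verifications: that $m, m_0 \in \mathcal{P}(\mathbb{R}^4)$, and that $m_0/m = 1/m \in L_0^\infty(\mathbb{R}^4)$.

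The first verification is routine. Clearly $m_0 \equiv 1 \in \mathcal{P}(\mathbb{R}^4)$, and for $m$ it is enough to check that each of $\psi$ and $\phi$ in (\ref{eqDModulation1}) is moderate with respect to a polynomial weight, a property which then transfers to $m = \sqrt{\psi^2 + \phi^2}$. The only point needing attention is the quartic term: writing $P(x_1) = x_1 + E x_1^2/\lambda$, I would expand $P(x_1 + y_1)$ and use that $x_1^2 \lesssim 1 + P(x_1)^2$ to obtain $\psi(x+y) \lesssim \psi(x)\,(1+|y|^2)^{k/2}$ for a suitable $k$. This is precisely the assertion $m \in \mathcal{P}(\mathbb{R}^4)$ already invoked in the proof of Theorem~\ref{TheoremInverse}.

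The substance of the proof is the condition $1/m \in L_0^\infty(\mathbb{R}^4)$. Boundedness is immediate, since $\psi^2 \ge 1$ and $\phi^2 \ge 1$ give $m \ge \sqrt{2}$, whence $1/m \le 1/\sqrt{2}$. For the decay property (\ref{eqweights1}) I would show that $m(x,\omega) \to +\infty$ as $|(x,\omega)| \to \infty$, which forces $\mathrm{ess\,sup}_{|z|\ge R} 1/m(z) \to 0$. Along any escaping sequence either $|\omega| \to \infty$, in which case $\phi(\omega)^2 = 1 + 4\pi^2|\omega|^2 \to \infty$, or $|x| \to \infty$; in the latter case $\psi(x)^2 = 1 + P(x_1)^2 + x_2^2 \to \infty$, since $x_2^2$ handles the regime $|x_2|\to\infty$ while $P(x_1)^2 \to \infty$ whenever $|x_1|\to\infty$ (the polynomial $P$ has degree two when $E \ne 0$ and degree one when $E = 0$, hence is unbounded either way). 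Thus $m\to\infty$ in all directions, so $1/m \in L_0^\infty(\mathbb{R}^4)$ and Theorem~\ref{TheoremBoggiatto} yields the compact embedding. The main obstacle is exactly this asymptotic analysis: one must ensure the weight blows up uniformly in \emph{every} direction of $\mathbb{R}^4$, the only delicate case being $|x_1|\to\infty$ with $x_2$ and $\omega$ bounded, where one relies on $P(x_1)$ being a nonconstant polynomial.
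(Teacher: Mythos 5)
Your proposal is correct, and its skeleton coincides with the paper's: identify $\mathcal{B}(\mathbb{R}^2)=M_m^2(\mathbb{R}^2)$ via Proposition \ref{PropositionDModulation}, identify $L^2(\mathbb{R}^2)=M^2(\mathbb{R}^2)$ via the orthogonality relations (\ref{eqModulationspace4}), and apply the Boggiatto--Toft criterion (Theorem \ref{TheoremBoggiatto}) with $m_1=m$, $m_2\equiv 1$, so that everything reduces to $m\in\mathcal{P}(\mathbb{R}^4)$ and $1/m\in L_0^{\infty}(\mathbb{R}^4)$. Where you genuinely differ is in the second verification, which is the bulk of the paper's proof: there, the authors bound $m$ below by discarding the factors $4\pi^2$, pass to hyperspherical coordinates (\ref{eqDModulation11}), and explicitly minimize the cubic--quartic polynomial $f(\alpha)=\frac{2E}{\lambda}r^3\alpha^3+\frac{E^2}{\lambda^2}r^4\alpha^4$ over $\alpha\in[-1,1]$, obtaining the quantitative estimate (\ref{eqDModulation16}), i.e.\ $\text{ess sup}_{|z|\ge R}\,1/m(z)\le \bigl(R^2+2-\tfrac{27\lambda^2}{16E^2}\bigr)^{-1/2}$ for $R\ge 2\lambda/E$. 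You replace this by the soft observation that $m\to\infty$ in every direction of $\mathbb{R}^4$, the only delicate regime being $|x_1|\to\infty$ with $x_2,\omega$ bounded, where nonconstancy of $P(x_1)=x_1+Ex_1^2/\lambda$ suffices. Your route is shorter and more robust (no divisions by $E$, so it works verbatim for any nonconstant polynomial perturbation), whereas the paper's computation buys an explicit decay rate for $1/m$, hence quantitative information about the embedding that a purely qualitative argument cannot give. Two small points: the dichotomy ``along any escaping sequence either $|\omega|\to\infty$ or $|x|\to\infty$'' is only valid after passing to subsequences --- cleaner is to note that $|z|\ge R$ forces $\max(|x|,|\omega|)\ge R/\sqrt{2}$, whence $m(z)\ge\min\bigl(\inf_{|x|\ge R/\sqrt{2}}\psi(x),\,\inf_{|\omega|\ge R/\sqrt{2}}\phi(\omega)\bigr)$, and both infima tend to infinity; and your moderateness sketch via $x_1^2\lesssim 1+P(x_1)^2$ is sound and in fact supplies more detail than the paper, which merely asserts $|\psi|^2\in\mathcal{P}(\mathbb{R}_x^2)$ before combining the weights as in (\ref{eqDModulation9}).
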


\begin{proof}
Since, $|\psi(x)|^2 \in \mathcal{P} (\mathbb{R}_x^2)$, and $|\phi(\omega)|^2 \in \mathcal{P} (\mathbb{R}_{\omega}^2)$, there exist positive polynomials $p(x)$ and $q(\omega)$ such that, for all $x,x^{\prime}, \omega, \omega^{\prime} \in \mathbb{R}^2$:
\begin{equation}
|\psi (x+x^{\prime})|^2 \lesssim | \psi (x)|^2 p(x^{\prime}), \hspace{1 cm}|\phi (\omega+\omega^{\prime})|^2 \lesssim  |\phi(\omega)|^2 q(\omega^{\prime})
\label{eqDModulation8}
\end{equation}
Consequently:
\begin{equation}
\begin{array}{c}
m(x+x^{\prime},\omega+\omega^{\prime}) = \sqrt{ | \psi(x+x^{\prime})|^2 + | \phi (\omega+ \omega^{\prime})|^2} \\
\\
\lesssim\sqrt{|\psi(x)|^2 p (x^{\prime}) +  |\phi(\omega)|^2 q (\omega^{\prime})} \\
\\
\lesssim\sqrt{\left( |\psi(x)|^2+ |\phi(\omega)|^2\right) \left(p(x^{\prime})+ q(\omega^{\prime})\right)}\\
\\
= m(x, \omega) \sqrt{p(x^{\prime})+q(\omega^{\prime})} \lesssim m(x, \omega) \left(1+p(x^{\prime})+q(\omega^{\prime}) \right),
\end{array}
\label{eqDModulation9}
\end{equation}
which proves that $m \in \mathcal{P} (\mathbb{R}^4)$.

On the other hand, we have:
\begin{equation}
\begin{array}{c}
m(x, \omega)= \sqrt{2+ \left(x_1+ \frac{Ex_1^2}{\lambda} \right)^2 + x_2^2 + 4 \pi^2 \omega_1^2 + 4 \pi^2 \omega_2^2} \\
\\
\geq \sqrt{2+ \left(x_1+ \frac{Ex_1^2}{\lambda} \right)^2 + x_2^2 +  \omega_1^2 +  \omega_2^2}
\end{array}
\label{eqDModulation10}
\end{equation}
If we use the hyperspherical coordinates:
\begin{equation}
\left\{
\begin{array}{l}
x_1=r \cos (\phi_1)\\
\\
x_2=r \sin (\phi_1) \cos (\phi_2)\\
\\
\omega_1=r \sin (\phi_1) \sin (\phi_2) \cos (\phi_3)\\
\\
\omega_2=r \sin (\phi_1) \sin (\phi_2) \sin (\phi_3)
\end{array}
\right.
\label{eqDModulation11}
\end{equation}
with $\phi_1,\phi_2 \in \left[0, \pi \right]$, $\phi_3 \in \left[0, 2 \pi \right]$ and $r \in \left[ \right.0, \infty \left. \right)$, we obtain:
\begin{equation}
m(x, \omega) \geq \sqrt{2+ r^2 + \frac{2E}{\lambda}r^3 \cos^3(\phi_1) + \frac{E^2}{\lambda^2}r^4 \cos^4 (\phi_1)}
\label{eqDModulation12}
\end{equation}
Let us investigate the minimum of the function:
\begin{equation}
f(\alpha)= \frac{2E}{\lambda}r^3 \alpha^3 + \frac{E^2}{\lambda^2}r^4 \alpha^4 ,
\label{eqDModulation13}
\end{equation}
with $\alpha = \cos (\phi_1) \in \left[-1, 1 \right]$. A simple inspection reveals that the minimum is either
\begin{equation}
f(-1)=  \frac{E^2}{\lambda^2}r^4 - \frac{2E}{\lambda}r^3 ,
\label{eqDModulation14}
\end{equation}
or
\begin{equation}
f\left(- \frac{3 \lambda}{2 Er} \right)= - \frac{27 \lambda^2}{16 E^2}.
\label{eqDModulation15}
\end{equation}
The second possibility is only admissible, provided $- \frac{3 \lambda}{2 Er} \in \left[-1, 1 \right]$, that is $r\geq  \frac{3 \lambda}{2 E}$.

On the other hand, if $r \geq \frac{2 \lambda}{E}$, then $ \frac{E^2}{\lambda^2}r^4 - \frac{2E}{\lambda}r^3 \geq 0$. If follows that, for $r \geq \frac{2 \lambda}{E}$, the minimum of $f(\alpha)$ is $ - \frac{27 \lambda^2}{16 E^2}$.

Thus, if $R\geq \frac{2 \lambda}{E}$, then
\begin{equation}
\text{ess sup}_{|z| \geq R} \frac{1}{m(z)} \leq \frac{1}{\sqrt{R^2 +2  - \frac{27 \lambda^2}{16 E^2}}}.
\label{eqDModulation16}
\end{equation}
Since this vanishes as $R \to \infty$ and in view of (\ref{eqDModulation10}), we conclude that $\frac{1}{m} \in L_0^{\infty } (\mathbb{R}^4)$.

From Theorem \ref{TheoremBoggiatto}, and Proposition \ref{PropositionDModulation}, we have:
\begin{equation}
\mathcal{B}(\mathbb{R}^2) =M_m^2 (\mathbb{R}^2)\subset \subset M^2 (\mathbb{R}^2) =L^2 (\mathbb{R}^2).
\label{eqDModulation17}
\end{equation}

\end{proof}

\section*{Acknowledgements}

The work of N.C. Dias and J.N. Prata is supported by the Portuguese Science Foundation (FCT) grant PTDC/MAT-CAL/4334/2014. The authors would like to thank Franz Luef for drawing their attention to references \cite{Boggiatto,Galperin1,Galperin2}.


\begin{thebibliography}{999999}

\bibitem{Bertolami1}Bertolami, O., Zarro, C.: Stability conditions for a noncommutative scalar field coupled to gravity. Phys. Lett. B 673 (2009) 83-89.

\bibitem{Bertolami2}Bertolami, O., Rosa, J.G., Arag\~ao, C., Castorina, P., Zappal\`a, D.: Noncommutative gravitational quantum well. Phys. Rev. D 72 (2005) 0250108.

\bibitem{Bertolami3}Bertolami, O., Rosa, J.G., Arag\~ao, C., Castorina, P., Zappal\`a, D.: Scaling of variables and the relation between noncommutative parameters in noncommutative quantum mechanics. Mod. Phys. Lett. A 21 (2006) 795-802.

\bibitem{Bastos1}Bastos, C., Bertolami, O.: Berry phase in the gravitational quantum well and the Seiberg-Witten map. Phys. Lett. A 372 (2008) 5556-5559.

\bibitem{Bastos2}Bastos, C., Bertolami, O., Dias, N.C., Prata, J.N.: Weyl-Wigner formulation of noncommutative quantum mechanics. J. Math. Phys. 49 (2008) 072101.

\bibitem{Bastos3}Bastos, C., Dias, N.C., Prata, J.N.: Wigner measures in noncommutative quantum mechanics. Commun. Math. Phys. 299 (2010) 709-740.

\bibitem{Bastos5}Bastos, C., Bertolami, O., Dias, N.C., Prata, J.N.: Phase-space noncommutative quantum cosmology. Phys. Rev. D 78 (2008) 023516.

\bibitem{Bastos6}Bastos, C., Bertolami, O., Dias, N.C., Prata, J.N.: Black holes and phase-space noncommutativity. Phys. Rev. D 80 (2009) 124038.

\bibitem{Bastos7}Bastos, C., Bertolami, O., Dias, N.C., Prata, J.N.: The singularity problem and phase-space non-canonical noncommutativity. Rapid commun.: Phys. Rev. D 82 (2010) 041502.

\bibitem{Bastos8}Bastos, C., Bertolami, O., Dias, N.C., Prata, J.N.: Non-canonical phase-space noncommutativity and the Kantowski-Sachs singularity for black holes. Physical Review D 84 (2011) 024005.

\bibitem{Bastos9}Bastos, C., Bernardini, A., Bertolami, O., Dias, N.C., Prata, J.N.: Entanglement due to noncommutativity in the phase-space. Physical Review D 88 (2013) 085013.

\bibitem{Beckner}Beckner, W.: Inequalities in Fourier analysis. Ann. Math. 102 (1986) 159-182.

\bibitem{Bellissard}Bellissard, J., van Elst, A., Schulz-Baldes, H.: The non-commutative geometry of the quantum Hall effect. J. Math.
Phys. 35 (1994) 5373–5451.

\bibitem{Bender}Bender, C.M., Boettcher, S., Meisinger, P.N.: PT-symmetric quantum mechanics. J. Math. Phys. 40 (1999) 2201--2229.

\bibitem{Birula}Bialynicki-Birula, I., Mycielski, J.: Uncertainty relations for information entropy in wave mechanics. Commun. Math. Phys. 44 (1975) 129–132.
\bibitem{Boggiatto}Boggiatto, P., Toft, J.: Embeddings and compactness for generalized Sobolev-Shubin spaces and modulation spaces.
Appl. Anal. 84 (2005) 269--282.

\bibitem{Bolonek}Bolonek, K., Kosi\'nski, P.: On uncertainty relations in noncommutative quantum mechanics. Phys. Lett. B 547 (2002) 51-54.

\bibitem{Busch1}Busch, P., Heinonen, T., Lahti, P.: Heisenberg's uncertainty principle. Phys. Rep. 452 (2007), 155--176.

\bibitem{Busch2}Busch, P., Lahti, P., Werner,R.F.: Quantum root-mean-square error and measurement uncertainty relations. Rev. Mod. Phys. 86 (2014) 1261--1281.

%\bibitem{Busch3}Busch, P., Lahti, P. Werner, R.F.: Proof of Heisenberg's error-disturbance relation. Phys. Rev. Lett. 111 (2013) 160405.

\bibitem{Carroll}Carroll, S. M., Harvey, J. A., Kostelecky, V. A., Lane, C. D., Okamoto, T.: Noncommutative field theory and Lorentz
violation. Phys. Rev. Lett. 87 (2001) 141601.

\bibitem{Connes1}Connes, A.: Noncommutative geometry. Academic Press (1994).

\bibitem{Connes2}Connes, A., Douglas, M.R., Schwarz, A.: Noncommutative geometry and matrix theory: compactification on tori. JHEP 9802 (1998) 003.

\bibitem{Cowling}Cowling, M.G., Price, J.F.: Bandwidth versus time concentration: the Heisenberg-Pauli-Weyl inequality. SIAM J. Math. Anal. 15 (1984) 151-165.

\bibitem{Daubechies}Daubechies, I.: Ten lectures on wavelets. SIAM. Phildelphia (1992).

\bibitem{Dey1}Dey, S., Fring, A., Gouba, L., Castro, P.G.: Time-dependent q-deformed coherent states for generalized uncertainty relations. Phys. Rev. D 87 (2013) 084033.

\bibitem{Dey2}Dey, S.: Q-deformed noncommutative cat states and their nonclassical properties. Phys. Rev. D 91 (2015) 044024.

\bibitem{Dey3}Dey, S., Fring, A., Hussin, V.: Nonclassicality versus entanglement in a noncommutative space. Int. J. Mod. Phys. B 31 (2017) 1650248.

\bibitem{Dey4}Dey, S., Fring, Hussin, V.: A squeezed review on coherent states and nonclassicality for non-hermitean systems with minimal length. Springer Proc. Phys. 205 (2018) 209-242.

\bibitem{Delduc}Delduc, F., Duret, Q., Gieres, F., Lefran\c{c}ois, M.: Magnetic fields in noncommutative quantum mechanics. J. Phys. Conf. Ser. 103 (2008) 012020.

\bibitem{Kochan}Demetrian, M., Kochan, D.: Quantum mechanics on noncommutative plane. Acta Phys. Slov. 52 (2002) 1.

\bibitem{Douglas}Douglas, M.R., Nekrasov, N.A.: Noncommutative field theory. Rev. Mod. Phys. 73 (2001) 977.

\bibitem{Duval}Duval, C., Horvathy, P. A.: Exotic galilean symmetry in the noncommutative plane and the Landau effect. J. Phys. A:
Math. Gen. 34 (2001) 10097.

\bibitem{Evans}Evans, L.C.: Partial differential equations. American Mathematical Society (2002).

\bibitem{Faddeev}Faddeev, L.D. Asian-Pacific Phys. News 3 (1988) 21.

\bibitem{Feichtinger}Feichtinger, H.G.: Modulation Spaces: Looking Back and Ahead. Sampling Theory in Signal and Image Processing 5 (2006) 109-140.

\bibitem{Fityo}Fityo, T.V., Vakarchuk, I.O., Tkachuk, V.M.: WKB approximation in deformed space with minimal length. J. Phys. A: Math. Gen. 39 (2006) 379-387.

\bibitem{Flato}Flato, M.: Deformation view of physical theories. Czech. J. Phys. B 32 (1982) 472-475.

\bibitem{Folland2}G.B. Folland and A. Sitaram, \textit{The uncertainty principle: A mathematical survey}, Journal of Fourier Analysis and Applications
    \textbf{3} (3) (1997) 207--238.

\bibitem{Galperin1}Galperin, Y.V.: On compactness of embeddings of Fourier-Lebesgue spaces into modulation spaces. Int. J. Anal. 2013 (2013) 681573.

\bibitem{Galperin2}Galperin, Y.V., Gr\"ochenig, K.: Uncertainty principles as embeddings of modulation spaces. J. Math. Anal. Appl. 274 (2002) 181--202.

\bibitem{Gamboa}Gamboa, J., Loewe, M., and Rojas, J. C.: Noncommutative quantum mechanics. Phys. Rev. D 64 (2001) 067901.

\bibitem{Obregon}Garc\'{\i}a-Compe\'an, H., Obreg\'on, O., and Ram\'{\i}rez, C.: Noncommutative quantum cosmology. Phys. Rev. Lett. 88 (2002) 161301.

\bibitem{Grochenig}Gr\"ochenig, K.: Foundations of time-frequency analysis. Birkh\"auser Boston (2001).

\bibitem{GrochenigStudia}Gr\"ochenig, K.: An uncertainty principle related to the Poisson summation formula. Stud. Math. 121 (1996) 87-104

\bibitem{GrochenigToft1}Gr\"ochenig, K., Toft, J.: Isomorphism properties of Toeplitz operators and pseudo-differential operators between modulation spaces. J. Anal. Math. 114 (2011) 255--283.

\bibitem{GrochenigToft2}Gr\"ochenig, K., Toft, J.: The range of localization operators and lifting theorems for modulation and Bargmann-Fock spaces.
 Trans. Amer. Math. Soc. 365 (2013) 4475--4496.

\bibitem{Heisenberg} Heisenberg, W.: \"Uber den anschaulischen Inhalt der quantentheoretischen Kinematik und Mechanik. Zeitschr. Phys. 43 (1927) 172.

\bibitem{Hirschman}Hirschman Jr., I.I.: A note on entropy. Amer. J. Math. 79 (1957) 152-156.

\bibitem{Horvathy}Horvathy, P. A.: The noncommutative Landau problem. Ann. Phys. 299 (2002) 128.

\bibitem{Jacobson}Jacobson, N.: Lie algebras. New York: Interscience (1962).

\bibitem{Jahn}Jahn, J.: Introduction to the theory of nonlinear optimization. Springer (1996).

\bibitem{Jost}Jost,J., Li-Jost, X.: Calculus of Variations. Cambridge University Press, Vol. 64 (1998).

\bibitem{Kantowski}Kantowski, R., Sachs, R.K.: Some spatially homogeneous anisotropic relativisticcosmological models. J. Math. Phys. 7 (1966)443.

\bibitem{Kempf}Kempf, A., Mangano, G., Mann, R.B.: Hilbert space representation of the minimal length uncertainty relation. Phys. Rev. D 52 (1995) 1108-1118.

\bibitem{Kennard} Kennard, E.H.: Zur Quantenmechanik einfacher Bewegungstypen. Zeitschr. Phys. 44 (1927) 326.

\bibitem{Kober}Kober, M., Nicolini, P.: Minimal scales from an extended Hilbert space. Class. Quant. Grav. 27 (2010) 245024.

\bibitem{Kosinski}Kosi\'nski, P., Bolonek, K.: Minimalisation of uncertainty relations in noncommutative quantum mechanics. Acta Phys. Polon. B34 (2003) 2575-2588.

\bibitem{Leoni}Leoni, G.: A first course in Sobolev spaces. American Mathematical Society (2009).

\bibitem{Madore}Madore, J.: An introduction to noncommutative differential geometry and its physical applications. Cambridge UniversityPress (2002).

\bibitem{Farhoudi}Malekolkalami, B., Farhoudi, M.: Noncommutative double scalar fields in FRW cosmology as cosmical oscillators. Classical Quantum Gravity 27, 245009 (2010).

\bibitem{Matoiu}M$\breve{a}$ntoium M., R. Purice: The magnetic Weyl calculus. J. Math. Phys. 45 (2004) 1394-1417.

\bibitem{Isidro} Monreal, L., Fern\'andez de C\'ordoba, P., Ferrando, A., Isidro, J. M.: Noncommutative space and the low-energy physics
of quasicrystrals. Int. J. Mod. Phys. A 23 (2008) 2037–2045.

\bibitem{Nair}Nair, V.P., Polychronakos, A.P.: Quantum mechanics on the noncommutative plane and sphere. Phys. Lett. B 505 (2001) 267.

\bibitem{Nicolini}Nicolini, P.: Noncommutative black holes, The final appeal to quantum gravity: A review. Int. J. Mod. Phys. A 21 (2009) 1229-1308.

\bibitem{Nijenhuis1}Nijenhuis, A., Richardson, R.W.: Cohomology and deformations in graded Lie algebras. Bull. Amer. Math. Soc. 72 (1966) 1-29.

\bibitem{Nijenhuis2}Nijenhuis, A., Richardson, R.W.: Deformations of Lie algebra structures. J. Math. Mech. 17(1967) 89-105.

\bibitem{Lein} Nittis, G., Lein, M.: Applications of magnetic $\Psi$DO techniques to SAPT. Rev. Math. Phys. 23 (2011) 233-260.

%\bibitem{Ozawa}Ozawa, M.: Universally valid reformulation of the Heisenberg uncertainty principle on noise and disturbance in measurement. Phys. Rev. A 67 (2003) 042105.

\bibitem{Pfeuffer} Pfeuffer, C., Toft, J.: Compactness properties for modulation spaces, math.FA/1804.00948.

\bibitem{Robertson}Robertson, H.: The uncertainty principle. Phys. Rev. 34 (1929) 163.

\bibitem{Seiberg}Seiberg, N., Witten, E.: String theory and noncommutative geometry. JHEP 9909 (1999) 032.

\bibitem{Shannon}Shannon, C., Weaver, W.: The mathematical theory of communication. University of Illinois Press, Urbana (1949).

\bibitem{Smale}Smale, S.: Differentiable dynamical systems. Bull. Am. Math. Soc. 73 (1967) 747-817.

\bibitem{Szabo}Szabo, R.J.: Quantum field theory on noncommutative spaces. Phys. Rept. 378 (2003) 207-299.

\bibitem{Vilela}Vilela Mendes, R.: Deformations, stable theories and fundamental constants. J. Phys. A: Math. Gen. 27 (1994) 8091-8104.

\bibitem{Weyl}Weyl, H.: Gruppentheorie und Quantenmechanik. Hirzel, Leipzig, (1928).

\end{thebibliography}
\end{document}